\newcommand{\noun}[1]{\textsc{#1}}
\def\RSthmtxt{theorem~}\newref{thm}{name = \RSthmtxt}}
\def\RSlemtxt{lemma~}\newref{lem}{name = \RSlemtxt}}
\theoremstyle{remark}
\newtheorem{rem}{\protect\remarkname}
\theoremstyle{plain}
\newtheorem{assumption}{\protect\assumptionname}
\theoremstyle{plain}
\newtheorem{lem}{\protect\lemmaname}
\theoremstyle{plain}
\newtheorem{thm}{\protect\theoremname}
\theoremstyle{definition}
 \newtheorem{example}{\protect\examplename}
\theoremstyle{plain}
\newtheorem{cor}{\protect\corollaryname}
\setlist[enumerate]{label=(\roman*),itemsep=0pt}
\renewcommand{\hat}{\widehat}
\renewcommand{\tilde}{\widetilde}
\def\thm@space@setup{%
  \thm@preskip=12pt plus 4pt minus 3pt
  \thm@postskip=1.05\thm@preskip % or whatever, if you don't want them to be equal
}
\def\th@remark{%
  \thm@headfont{\scshape}
  \normalfont
  \thm@preskip=12pt plus 4pt minus 3pt
  \thm@postskip\thm@preskip
}
\renewenvironment{proof}[1][\proofname]{\par
    \pushQED{\qed}%
    \normalfont \topsep12\p@\@plus4\p@\@minus3\p@\relax
    \trivlist
    \item\relax
          {\bfseries
      #1\@addpunct{.}}\hspace\labelsep\ignorespaces
  }{%
    \popQED\endtrivlist\@endpefalse
    \addvspace{13pt plus 4pt minus 3pt}
  }
\DeclareFontFamily{U}{mathx}{}
\DeclareFontShape{U}{mathx}{m}{n}{<-> mathx10}{}
\DeclareSymbolFont{mathx}{U}{mathx}{m}{n}
\g@addto@macro\normalsize{%
 \abovedisplayskip=7.5pt plus 1pt minus 2pt
 \abovedisplayshortskip=5.5pt plus 1pt minus 2pt
 \belowdisplayskip=7.5pt plus 1pt minus 2pt
 \belowdisplayshortskip=6.5pt plus 1pt minus 2pt
}{}{}
\let\math@org=$
\def\itinlinemath#1{%
  \math@org%
  \mkern+1mu\relax
  #1%
  \mkern-1.2mu\relax
  \math@org%
}
  \gdef\activateitalicmath{%
    \catcode`\$=13%
    \def${\math@org}%
    \def$##1${\itinlinemath{##1}}%
  }
\date{}
\providecommand{\assumptionname}{Assumption}
\providecommand{\corollaryname}{Corollary}
\providecommand{\examplename}{Example}
\providecommand{\lemmaname}{Lemma}
\providecommand{\remarkname}{Remark}
\providecommand{\theoremname}{Theorem}
\begin{document}
\title{A Relaxation Approach to Synthetic Control\thanks{Liao: \protect\url{lchw@link.cuhk.edu.hk}. Shi (corresponding author):
\protect\url{zhentao.shi@cuhk.edu.hk}; 9F Esther Lee Building, The
Chinese University of Hong Kong, Shatin, New Territories, Hong Kong
SAR, China. Zheng: \protect\url{yapengzheng@link.cuhk.edu.hk}. Shi
acknowledges the partial financial support from the Research Grants
Council of Hong Kong (Project No.14617423) and the National Natural
Science Foundation of China (Project No.72425007, 72133002).}}
\author{Chengwang Liao\quad{}Zhentao Shi\quad{}Yapeng Zheng }
\maketitle
\begin{center}
{\large\vspace{-0.8cm}
The Chinese University of Hong Kong \bigskip{}
}{\large\par}
\par\end{center}
\begin{abstract}
\onehalfspacing

The synthetic control method (SCM) is widely used for constructing
the counterfactual of a treated unit based on data from control units
in a donor pool. Allowing the donor pool contains more control units
than time periods, we propose a novel machine learning algorithm,
named SCM-relaxation, for counterfactual prediction. Our relaxation
approach minimizes an information-theoretic measure of the weights
subject to a set of relaxed linear inequality constraints in addition
to the simplex constraint. When the donor pool exhibits a group structure,
SCM-relaxation approximates the equal weights within each group to
diversify the prediction risk. Asymptotically, the proposed estimator
achieves oracle performance in terms of out-of-sample prediction accuracy.
We demonstrate our method by Monte Carlo simulations and by an empirical
application that assesses the economic impact of Brexit on the United
Kingdom’s real GDP.

{\large\bigskip{}
}{\large\par}

\noindent\textbf{Key words}: counterfactual prediction, high dimension,
latent groups, machine learning, optimization

\noindent\textbf{JEL classification}\noun{: C22, C31, C53, C55}

\doublespacing
\end{abstract}
\newpage{}

\section{Introduction}

A fundamental challenge in policy evaluation is that the potential
outcomes in the absence of treatment are unobservable in the treated
group. When resources are available to conduct randomized controlled
trials, carefully designed experiments can identify the average treatment
effect (ATE) by comparing outcomes between the treatment group and
the control group. However, when only observational data are available,
one prominent solution is the synthetic control method (SCM) \citep{abadie2003economic,abadieSyntheticControlMethods2010}.
A typical application involves a panel data with a single treated
unit of interest and a set of control units, where they are connected
by sharing latent common factors. SCM has been successfully applied
to many economic studies, for example, \citet{kleven2013taxation},
\citet{bohn2014did}, \citet{acemoglu2016value}, and \citet{cunningham2018decriminalizing}.
\citet{athey2017state} hailed SCM as ``arguably the most important
innovation in the policy evaluation literature in the last 15 years.''

A pivotal step in SCM is assigning weights to the units in the donor
pool to construct a “synthetic” version of the treated unit. These
weights are determined by minimizing the discrepancy between the treated
unit and the synthetic control in the pre-treatment period. The resulting
synthetic unit is then used to extrapolate outcomes in the post-treatment
period, with the estimated policy effect given by the difference between
the observed outcome of the treated unit and the counterfactual prediction.
In a standard SCM implementation, the weights are constrained into
the simplex---that is, they must be non-negative and sum to one.
This constraint often leads to sparse solutions, where only a few
weights are non-zero. While such sparsity enhances interpretability,
it may not always be desirable. In many practical economic settings,
there is no compelling reason to assume that only a small subset of
variables are relevant to the target variable \citep{giannone2021economic}.
When researchers invest substantial effort in collecting data from
a large pool of potential controls, they may prefer dense weighting
schemes that make a full use of the available information. A dense
solution can also help diversify prediction risk, potentially improving
the robustness of the counterfactual estimates \citep{liao2024does}.

To encourage dense solutions, in this paper we propose a relaxation
approach within the synthetic control framework, termed \emph{SCM-relaxation},
for estimating the weights and predicting the counterfactual outcomes.
Our approach is built on the $L_{2}$-relaxation method introduced
by \citet{shi$ell_2$relaxationApplicationsForecast2022}. It formulates
the weight estimation as a constrained optimization problem, where
the objective function is an information-theoretic divergence measure
of the weight vector. Staying true to the spirit of SCM, we retain
the simplex constraint. In addition, to address challenges arising
from many potential control units, we introduce a relaxed form of
the first-order condition (FOC) derived from the original SCM formulation.

In high-dimensional settings, developing asymptotic theory requires
structural assumptions that enable effective dimension reduction.
The search for a convex combination in SCM parallels the classical
econometric problem of forecast combination, making it natural to
leverage the aggregated information from multiple control units. Recent
advancements in the literature on latent groups in panel data---\citet{bonhomme2015grouped},
\citet{su2016identifying}, and \citet{vogt2017classification}---provide
a useful framework to bridge the two extremes of full homogeneity
(identical units) and full heterogeneity (distinct units). Under the
latent group structures in the loadings of a factor model, this paper's
\thmref{w_convergence} shows that our estimated weight vector converges
sufficiently fast to a desirable target, even when the number of control
units ($J$) is much larger than the number of pre-treatment time
periods ($T_{0}$). Once the weights are estimated, the ATE can be
readily obtained through the predicted counterfactual. \thmref{oracle_inequalities}
establishes that the prediction risk of the counterfactuals is asymptotically
equal to that under the oracle weights.

Our use of divergence measures is motivated by the literature on \emph{generalized
empirical likelihood} (GEL) \citep{kitamuraEmpiricalLikelihoodMethods2007,neweyHigherOrderProperties2004}.
We develop a unified asymptotic framework that accommodates any convex
divergence function with non-negative support and restricted strong
convexity. This generalization, shown in \thmref{wg_convergence},
allows our method to encompass several popular divergence-based estimators,
including empirical likelihood \citep{owen1988empirical} and entropy
balancing \citep{hainmuellerEntropyBalancingCausal2012}.

Our approach, as a convex optimization programming, is easy to implement
numerically. We conduct Monte Carlo simulations to demonstrate the
finite sample performance of our proposal in comparison with other
off-the-shelf machine learning methods. Finally, as an empirical example
we evaluate the impact of Brexit on the real GDP growth of the United
Kingdom. We find substantial economic loss after Brexit, and the estimated
weights of the control units from our method are more interpretable
than those from SCM.

In the development of the asymptotics, we make three original theoretical
contributions. Firstly, as mentioned above, we set up a general framework
that accommodates the information-theoretic approach in formulating
SCM by maintaining the simplex constraint. Secondly, we overcome a
key technical challenge in handling many non-negativity constraints
rising from the simplex constraint. The same issue appeared in risk
management for large portfolios \citep{jagannathan2003risk}, but
to the best of our knowledge we are unaware of a theoretical investigation
of the consequence. We provide a rigorous proof by taking the non-negativity
constraints into consideration. In particular, \remref{key-tech-remark}
explains our innovative idea in pushing the estimated vector to its
desirable target by establishing the compatibility condition \citep{buhlmann2011statistics}
that connects the $L_{2}$-norm of the high-dimensional weight vector
with its low-dimensional counterpart. Last but not least, while the
group structures and the low-dimensional factor model are dimension-reduction
assumptions, we allow both the number of groups ($K$) and the number
of factors ($r$) to diverge to infinity, regardless of their relative
order. It offers a flexible accommodation for approximate group features
\citep{bonhomme2022discretizing} and factor propagation \citep{feng2020taming},
and it breaks the severe restriction $K\leq r$ in \citet{shi$ell_2$relaxationApplicationsForecast2022}.

\medskip

This paper stands on several strands of vast literature. We follow
\citet{abadieUsingSyntheticControls2021}'s proposal by maintaining
the simplex constraint, while in the meantime take advantage of the
latent group structures. \citet{doudchenkoBalancingRegressionDifferenceDifferences2016}
develop a constrained regression framework for balancing, regression,
difference-in-difference, and SCM. Ours serves as an complementary
method for estimating individual weights in their procedure when the
donor pool exhibits latent groups. When the pre-treatment outcomes
of the treated unit lie in the convex hull of the pre-treatment outcomes
of the control units, the estimated weights induced by SCM may not
be unique. This occurs particularly when $J$ is larger than $T_{0}$.
\citet{abadiePenalizedSyntheticControl2021} utilize a penalization
method to guarantee a unique solution, where the penalty is imposed
on the pairwise distance of covariates between units. \citet{arkhangelsky2021synthetic}
and \citet{chernozhukov2021exact} propose different versions of penalized
SCM, including $L_{1}$- (Lasso), $L_{2}$- (Ridge), and the elastic
net. Departing from the conventional penalty-based regularization,
this paper takes the relaxation approach to guarantee the uniqueness
of the solution.

Our method is closely related to forecast combination \citep{batesCombinationForecasts1969}.
Many generalizations have been proposed to improve the robustness
of the optimization-based combination methods, in particular machine
learning techniques such as regularization \citep{dieboldMachineLearningRegularized2019}.
Another popular method for constructing counterfactuals is the panel
data approach (PDA) proposed by \citet{hsiao2012panel}. PDA also
uses linear combination of outcomes from control units to predict
the counterfactual. It does not impose the simplex constraint as in
SCM. The focal issue of PDA concerns the selection of control units.
\citet{li2017estimation} propose a Lasso method, while \citet{shi2023forward}
suggest forward selection.

A variety of asymptotic frameworks have appeared to justify SCM. \citet{abadieSyntheticControlMethods2010}
derive bounds on the SCM estimator when perfect pre-treatment fit
is met under fixed $J$ and diverging $T_{0}$. \citet{ferman2021synthetic}
and \citet{fermanPropertiesSyntheticControl2021} analyze the asymptotic
property of estimated weights under the imperfect pre-treatment fit
assumption when $T_{0}$ goes to infinity while $J$ is held fixed
and diverges, respectively. A key message is that when $J$ is fixed,
the SCM estimator will be biased even $T_{0}$ goes to infinity. If
$J$ diverges, the estimator will be unbiased as weights spread out
across the units. In a panel of two-way fixed effects, \citet{arkhangelsky2023large}
show consistency and asymptotic normality of SCM where the treatment
is selected on permanent and time-varying components. Our theory adds
to this literature by demonstrating the consistency of SCM-relaxation
when $J$ diverges into the high-dimensional regime.

Finally, \citet{wang2020minimal}, \citet{hainmuellerEntropyBalancingCausal2012}
and \citet{zheng2024dynamic} estimate the weights by minimizing an
information-theoretic divergence measure. While the first paper is
for general causal inference, the latter two focus on SCM. \citet{hainmuellerEntropyBalancingCausal2012}
works with the entropy, and \citet{zheng2024dynamic} use the empirical
likelihood (EL). Our estimator is different from theirs in two aspects.
First, we do not attempt to exactly match the synthetic control's
outcomes with those of the treated unit but instead take the route
of relaxation. Second, our estimator could be adapted to a class of
general convex functions including both entropy and EL. We provide
in-depth theoretical analysis about these specific choices.

\textbf{Organization. }The rest of the paper is organized as the following.
Section~\ref{sec:Model-and-estimation}\textbf{ }sets up the data
generating process (DGP) and motivates the SCM-relaxation method.
Section \ref{sec:Asymptotic-theory} then establishes the asymptotic
theory of the SCM-relaxation estimators under the group structures
of factor loadings. We conduct Monte Carlo experiments in Section~\ref{sec:Monte-Carlo-experiments},
and in Section~\ref{sec:Empirical-applications} we apply our method
to evaluate the Brexit shock. Proofs of all theoretical results are
relegated to the online appendix.

\textbf{Notations.} For a positive integer $N$, we use $[N]$ as
shorthand for $\{1,2,\dots,N\}$. Let $a\wedge b:=\min(a,b)$. \emph{Absolute
constants }are positive finite numbers independent of sample size.
Vectors and matrices are in bold case. For a generic $n\times1$ vector
$\bm{a}$, let $\left\Vert \bm{a}\right\Vert _{p}=(\sum_{i\in[n]}|a_{i}|^{p})^{1/p}$
and $\left\Vert \bm{a}\right\Vert _{\infty}=\max_{i\in[n]}|a_{i}|$
be its $L_{p}$- and $L_{\infty}$-norm, respectively. We denote $\bm{1}_{n}$
and $\bm{0}_{n}$ as the $n\times1$ vector of ones and zeros, respectively.
The $n$-dimensional simplex is $\Delta_{n}=\{\bm{a}\in\mathbb{R}^{n}\colon\min_{i\in[n]}a_{i}\geq0,\bm{a}'\bm{1}_{n}=1\}$.
For an $m\times n$ matrix $\bm{A}$, let $\varsigma_{\max}(\bm{A})$
be its largest singular value and $\varsigma_{\min}(\bm{A})$ its
minimum \emph{nonzero} singular value. Let $\|\bm{A}\|_{2}=\varsigma_{\max}(\bm{A})$
be the spectral norm. For a symmetric matrix $\bm{B}$, let $\phi_{\min}(\bm{B})$
and $\phi_{\max}(\bm{B})$ be its minimum and maximum eigenvalues,
respectively. The symbols ``$\to_{p}$'' and ``$\to_{d}$'' signifies
\emph{convergence in probability} and \emph{convergence in distribution},
respectively.

\section{Setup }\label{sec:Model-and-estimation}

\subsection{Model}

A researcher possesses data of $J+1$ cross-sectional units and $T_{0}+T_{1}$
time periods, where $T_{0}$ is the number of pre-treatment periods
and $T_{1}$ the number of post-treatment periods. Let $\mathcal{T}_{0}:=[T_{0}]$
denote the index set of the pre-treatment periods, $\mathcal{T}_{1}:=\{T_{0}+1,\dots,T_{0}+T_{1}\}$
denote that of the post-treatment periods, and $\mathcal{T}:=\mathcal{T}_{0}\cup\mathcal{T}_{1}$.
Unit $j=0$ is the sole individual that received a treatment during
$t\in\mathcal{T}_{1}$. The remaining $J$ units, indexed by $j\in[J]$,
make the donor pool of potential controls. In the potential outcome
framework, let $y_{jt}^{I}$ be the outcome of unit $j$ being treated
at time $t$, and $y_{jt}^{I}$ if untreated.\footnote{Following the literature, the superscripts ``$N$'' and ``$I$''
stand for ``non-intervened'' (interchangeable for ``untreated'')
and ``intervened'' (interchangeable for ``treated''), respectively.} In reality, we observe $y_{jt}=d_{jt}y_{jt}^{I}+(1-d_{jt})y_{jt}^{N}$,
where $d_{jt}=1$ for treatment and $d_{jt}=0$ otherwise. We consider
the simple case that only one individual is treated after an event,
and therefore the treatment indicator is $d_{jt}=1\left\{ j=0,t\in\mathcal{T}_{1}\right\} $,
where $1\left\{ \cdot\right\} $ is the indicator function. The treatment
effect at time $t\in\mathcal{T}_{1}$ is
\[
\tau_{0t}=y_{0t}^{I}-y_{0t}^{N}.
\]
Since $y_{0t}^{I}$ is observable while $y_{0t}^{N}$ is unobserved
in the post-treatment periods, the counterfactual $y_{0t}^{N}$ must
be estimated from the data.

This paper presents the simple forms of SCM with the outcome variable
only, as in \citet{ferman2021synthetic}, \citet{fermanPropertiesSyntheticControl2021},
and \citet{chenSyntheticControlOnline2023}. SCM takes a linear combination
$\sum_{j\in[J]}w_{j}y_{jt}$ of the control units' outcomes, where
the weights $w_{j}$ fall into the $J$-dimensional simplex $\Delta_{J}$---they
are non-negative and sum to one. The synthetic control estimator for
$\bm{w}=(w_{1},\dots,w_{J})'$ is obtained by minimizing the $L_{2}$-distance
between the treated unit and synthetic control in the pre-treatment
periods, i.e.,
\begin{equation}
\hat{\bm{w}}^{\mathrm{SC}}:=\arg\min_{\bm{w}\in\Delta_{J}}\left\Vert \bm{y}_{0}-\bm{Y}\bm{w}\right\Vert _{2}^{2},\label{eq:SCM}
\end{equation}
where $\bm{y}_{j}=(y_{j1},\dots,y_{jT_{0}})'$, $j\in\{0\}\cup[J]$
and $\bm{Y}=(\bm{y}_{1},\dots,\bm{y}_{J})$.\footnote{When additional covariates are available, the synthetic control can
incorporate them into the quadratic form. \citet{abadie2003economic}
minimizes $(\bm{x}_{0}-\bm{X}\bm{w})'\bm{V}(\bm{x}_{0}-\bm{X}\bm{w})$,
where $\bm{X}=(\bm{x}_{1},\cdots,\bm{x}_{J})$ consists of individual
characteristics $\bm{x}_{j}$ which can include $\boldsymbol{y}_{j}$,
and $\bm{V}$ is a $T_{0}\times T_{0}$ symmetric and positive semi-definite
matrix.} When no confusion arises, we suppress the superscript $N$ for the
pre-treatment potential outcome $y_{jt}^{N}$ for $t\in\mathcal{T}_{0}$
as they are observable.

\citet{abadieSyntheticControlMethods2010} motivate (\ref{eq:SCM})
by a factor model. Suppose that the untreated potential outcomes $y_{jt}^{N}$
of each individual are generated from the linear factor model
\begin{align}
y_{jt}^{N} & =\bm{\lambda}_{j}'\bm{f}_{t}+u_{jt}\quad\text{for \ensuremath{j\in\{0\}\cup[J]} and \ensuremath{t\in\mathcal{T}}},\label{eq:model}
\end{align}
where $u_{jt}$ is the idiosyncratic error, $\bm{f}_{t}$ is an $r\times1$
vector of latent common factors at time $t$, and it is multiplied
by an $r\times1$ vector $\bm{\lambda}_{j}$ of factor loadings for
individual $j$.\footnote{The factor model (\ref{eq:model}) can incorporate individual- and
time-specific fixed effects by setting $\bm{\lambda}_{j}=(1,\alpha_{j},\tilde{\bm{\lambda}}_{j}')'$
and $\bm{f}_{t}=(\gamma_{t},1,\tilde{\bm{f}}_{t}')'$.} Throughout this paper, we view the factors $f_{t}$ as random whereas
factor loadings $\bm{\lambda}_{j}$ as deterministic. The common factors
shared by the treated unit and the untreated ones generate correlations
to be utilized for counterfactual prediction. \citet{abadieUsingSyntheticControls2021}
stresses the importance of the simplex constraint. It ensures that
all members in the donor pool make non-negative contributions as a
weighted average, and in practice the solution $\hat{\bm{w}}^{\mathrm{SC}}$
is often sparse with many zeros.

In practical observational studies with panel data, researchers may
have many potential control units in the donor pool, while the time
dimension, mostly collected in low frequency, is not too long. In
\citet{abadieSyntheticControlMethods2010}'s example of California
tobacco control program $(J,T_{0})=(38,19)$, and \citet{abadie2015comparative}'s
example of German reunification has $(J,T_{0})=(16,30)$. The number
of control units $J$ is often comparable to, or even exceeds, the
number of pre-treatment periods $T_{0}$. To prevent in-sample over-fitting
that may occur when $J>T_{0}$, we propose the SCM-relaxation scheme.

\subsection{SCM-relaxation}

To motivate SCM-relaxation, we consider the Lagrangian
\[
\frac{1}{2}\left\Vert \bm{y}_{0}-\bm{Y}\bm{w}\right\Vert _{2}^{2}+\gamma(\bm{w}'\bm{1}_{J}-1),
\]
where $\gamma$ is the Lagrangian multiplier associated with the equality
constraint $\bm{w}'\bm{1}_{J}=1$.\footnote{We drop the non-negativity constraints of (\ref{eq:SCM}) in the motivating
step here. This simplification is innocuous, as non-negativity is
maintained throughout our proposed estimating scheme.}The corresponding Karush--Kuhn--Tucker (KKT) conditions are
\begin{equation}
\hat{\bm{\Sigma}}\bm{w}-\hat{\bm{\Upsilon}}+\gamma\bm{1}_{J}=\bm{0}_{J}\quad\text{and}\quad\bm{w}'\bm{1}_{J}=1.\label{eq:KKT_SCM}
\end{equation}
where $\hat{\bm{\Sigma}}:=T_{0}^{-1}\bm{Y}'\bm{Y}$ is an $J\times J$
cross moment matrix of the control units, and $\hat{\bm{\Upsilon}}:=T_{0}^{-1}\bm{Y}'\bm{y}_{0}$.
A unique closed-form solution to (\ref{eq:KKT_SCM}) is
\[
\hat{\bm{\Sigma}}^{-1}\biggl(\hat{\bm{\Upsilon}}-\frac{\bm{1}_{J}'\hat{\bm{\Sigma}}^{-1}\hat{\bm{\Upsilon}}-1}{\bm{1}_{J}'\hat{\bm{\Sigma}}^{-1}\bm{1}_{J}}\bm{1}_{J}\biggr)
\]
when $\hat{\bm{\Sigma}}$ is invertible, but if $J$ is close to $T_{0}$,
some eigenvalues of $\hat{\bm{\Sigma}}$ are close to zero, leading
to numerical instability. Moreover, the invertibility of $\hat{\bm{\Sigma}}$
must be violated if $J>T_{0}$.

To stabilize the solution, we estimate the weights by solving the
following convex optimization problem
\begin{equation}
\min_{\bm{w}\in\Delta_{J},\gamma\in\mathbb{R}}\left\Vert \bm{w}\right\Vert _{2}^{2}\quad\text{s.t.}\ \Vert\hat{\bm{\Sigma}}\bm{w}-\hat{\bm{\Upsilon}}+\gamma\bm{1}_{J}\Vert_{\infty}\leq\eta,\label{eq:SCM-relaxation}
\end{equation}
where $\eta=\eta_{T_{0},J}$ is a user-specified tuning parameter
that depends on $T_{0}$ and $J$ but we suppress the subscript for
conciseness. Here we follow \citet{abadieUsingSyntheticControls2021}
to keep the simplex constraint $\bm{w}\in\Delta_{J}$. We call (\ref{eq:SCM-relaxation})
the\emph{ $L_{2}$-SCM-relaxation} problem, for $\Vert\hat{\bm{\Sigma}}\bm{w}-\hat{\bm{\Upsilon}}+\gamma\bm{1}_{J}\Vert_{\infty}\leq\eta$
is a ``relaxation'' of the FOC (\ref{eq:KKT_SCM}).

The problem (\ref{eq:SCM-relaxation}) is strictly convex. In operational
research, the set
\begin{equation}
S_{\eta}:=\left\{ \left(\bm{w},\gamma\right)\in\Delta_{J}\times\mathbb{R}:\Vert\hat{\bm{\Sigma}}\bm{w}-\hat{\bm{\Upsilon}}+\gamma\bm{1}_{J}\Vert_{\infty}\leq\eta\right\} \label{eq:feasible-set}
\end{equation}
 is called the \emph{feasible set}. If $S_{\eta}\neq\emptyset$, then
the solution to (\ref{eq:SCM-relaxation}) is unique. The $L_{2}$-norm
encourages diversified weights across similar control units to lower
the prediction variance. It is inspired by the Dantzig selector \citep{candes2007dantzig}
which minimizes the $L_{1}$-norm of parameters for sparsity, and
$L_{2}$-relaxation \citep{shi$ell_2$relaxationApplicationsForecast2022}
in the forecast combination problem.

Let $(\hat{\bm{w}},\hat{\gamma})$ be the the solution to (\ref{eq:SCM-relaxation})
if $S_{\eta}\neq\emptyset$, which holds if $\eta$ is sufficiently
large. On the one hand, if $\eta\geq$$\Vert\hat{\bm{\Sigma}}\boldsymbol{1}_{J}/J-\hat{\bm{\Upsilon}}\Vert_{\infty}$,
then $\left(\hat{\bm{w}}=J^{-1}\boldsymbol{1}_{J},\widehat{\gamma}=0\right)$
is feasible and the objective function reaches the lowest bound.\footnote{The seemingly naive equal weight solution is a surprisingly robust
estimator in empirical applications in economics and finance, leading
to the ``forecast combination puzzle'' \citep{Stock2004combination}
and equal-weight portfolio \citep{demiguel2009optimal}.} On the other hand, if $S_{\eta=0}$ is nonempty, then (\ref{eq:SCM-relaxation})
entails exact satisfaction of the FOC. With a proper choice of $\eta\in[0,\Vert\hat{\bm{\Sigma}}\boldsymbol{1}_{J}/J-\hat{\bm{\Upsilon}}\Vert_{\infty}]$,
the relaxation scheme balances the bias and variance and mitigates
the sensitivity of SCM weights to the sampling noise in $(\hat{\bm{\Sigma}},\hat{\bm{\Upsilon}})$,
effectively preventing in-sample over-fitting.

\begin{rem}
While $L_{2}$-SCM-relaxation adheres to the simplex constraint in
SCM, the literature has witnessed many alternative regularization
methods, for example, \citet[Section 2.3.3]{chernozhukov2021exact}
estimate the weights by
\[
\min_{\bm{w}\in\mathbb{R}^{J}}\frac{1}{2}\|\bm{y}_{0}-\bm{Y}\bm{w}\|_{2}^{2}+\mathcal{P}(\bm{w}),
\]
where $\mathcal{P}(\bm{w})$ is a penalty on $\bm{w}$, and the ridge
penalty $\mathcal{P}(\bm{w})=\eta\|\bm{w}\|_{2}^{2}$ is a popular
choice. Asymptotic analysis of the ridge regression in high dimension
involves random matrix theory under very specific assumptions and
thus deserves a standalone paper. We will present in Section~\ref{sec:Monte-Carlo-experiments}
numerical results of penalized synthetic control methods, including
Lasso and Ridge.
\end{rem}
Next, we explore the asymptotic properties of the relaxation scheme.

\section{Asymptotic Theory }\label{sec:Asymptotic-theory}

Statistical analysis of high-dimensional problems typically postulates
certain structures on the DGP for dimension reduction. For example,
variable selection methods such as Lasso \citep{tibshirani1996regression}
and SCAD \citep{fan2001variable} are suitable for linear regressions
with sparse coefficients, meaning most of the parameters are either
exactly zero or approximately zero. Similarly, in large covariance
matrix estimation, various structures have been proposed: \citet{bickel2008covariance}
impose sparse covariance entries, and \citet{tong2025cluster} assume
a block correlation matrix.

The group pattern in panel data has been widely adopted as a bridge
between homogeneous coefficients and fully heterogeneous coefficients
\citep{bonhomme2015grouped,su2016identifying,vogt2017classification}.
Were the factors $\boldsymbol{f}_{t}$ observable, the factor loadings
play the role as the slope coefficients. Based on this resemblance,
we assume latent groups across the factor loadings of the control
units.\footnote{In analysis of large factor models, \citet{he2024penalized} also
borrows this setting for dimension reduction. } The donor pool is partitioned into $K$ disjoint groups $\mathcal{G}_{1},\dots,\mathcal{G}_{K}$,
where $\mathcal{G}_{k}\cap\mathcal{G}_{\ell}=\emptyset$ for any $k\neq\ell$
and $\cup_{k=1}^{K}\mathcal{G}_{k}=[J]$. The factor loadings are
homogeneous within each group, that is, for each $k\in[K]$, for any
$i,j\in\mathcal{G}_{k}$, $\bm{\lambda}_{i}=\bm{\lambda}_{j}=\bm{\lambda}_{\mathcal{G}_{k}}$.
To encode the group identities, we introduce a $J\times K$ membership
matrix $\bm{Z}$, where $\bm{Z}_{jk}=1\{j\in\mathcal{G}_{k}\}$. Then
the group structure can be written as
\begin{equation}
\underset{(J\times r)}{\bm{\Lambda}}=\underset{(J\times K)}{\bm{Z}}\underset{(K\times r)}{\bm{\Lambda}^{\mathrm{co}}},\label{eq:grouped_structure}
\end{equation}
where $\bm{\Lambda}=(\bm{\lambda}_{1},\dots,\bm{\lambda}_{J})'$ is
the matrix of factor loadings and $\bm{\Lambda}^{\mathrm{co}}=(\bm{\lambda}_{1}^{\mathrm{co}},\dots,\bm{\lambda}_{K}^{\mathrm{co}})^{\prime}$
collects all group-specific loadings; here the superscript ``co''
stands for ``core.'' Let $J_{k}:=|\mathcal{G}_{k}|$ be the number
of controls in the $k$-th group; obviously $J=\sum_{k\in[K]}J_{k}$.

\begin{rem}
The group structure is common in real data. For instance, the clustered
variance is ubiquitous in microeconometrics---within each cluster
the covariances are equicorrelated whereas the cross-cluster covariances
are zero \citep{abadie2023should}. In time series forecast combination,
\citet{chan2018some} model the presence of the ``best'' unbiased
forecast and thus all forecasters incurs the same unpredictable idiosyncratic
error rising from the target variable to form one equicorrelated group.
In finance, \citet{engle2012dynamic} employ the Standard Industrial
Classification to assign the asset groups in the volatility matrix.
While each of these papers explicitly specifies a prior criterion
for group allocation, no knowledge about the membership is required
to implement $L_{2}$-SCM-relaxation; block equicorrelation is taken
as a latent structure in this paper, and we do not intend to recover
the group membership.
\end{rem}

\subsection{Conditions}

While the literature mostly assumes finite groups and finite factors,
here we allow $K$ and $r$ to be either fixed or diverging, which
accommodates ``many groups'' and ``many factors.'' In asymptotic
statements, we will explicitly send $T_{0}\to\infty$, whereas $K$,
$r$, $J$, and $T_{1}$ are viewed as deterministic functions of
$T_{0}$ that go to infinity. We impose the following regularity conditions.
\begin{assumption}[Factors and loadings]
\label{assu:DGP} When $T_{0}$ is sufficiently large, there exist
two absolute constants $\underline{c}$ and\/ $\bar{c}$ such that:
\begin{enumerate}[label=(\alph*),parsep=0pt,topsep=0pt,font=\upshape]
\item \label{enu:factors}Factors: $\hat{\bm{\Omega}}_{\bm{F}}:=T_{0}^{-1}\bm{F}^{\prime}\bm{F}$
is invertible, and there exists an $r\times r$ deterministic positive
definite matrix $\bm{\Omega}_{\bm{F}}$ with $\underline{c}\leq\phi_{\min}(\bm{\Omega}_{\bm{F}})\leq\phi_{\max}(\bm{\Omega}_{\bm{F}})\leq\bar{c}$
such that $\|\hat{\bm{\Omega}}_{\bm{F}}-\bm{\Omega}_{\bm{F}}\|_{2}\to_{p}0$
as $T_{0}\to\infty$.
\item \label{enu:Loadings}Loadings: $\|\bm{\lambda}_{0}\|_{2}+\sup_{k\in[K]}\|\bm{\lambda}_{k}^{\mathrm{co}}\|_{2}\leq\bar{c}$,
$\mathrm{rank}(\bm{\Lambda}^{\mathrm{co}})=K\wedge r$, and $\underline{c}K/(K\wedge r)\leq\varsigma_{\min}^{2}(\bm{\Lambda}^{\mathrm{co}})\leq\varsigma_{\max}^{2}(\bm{\Lambda}^{\mathrm{co}})\leq\bar{c}K/(K\wedge r)$.
\end{enumerate}
\end{assumption}
Due to their multiplicative form, the factors and loadings cannot
be uniquely identified. \assuref{DGP} requires the existence of desirable
factors and the loadings. Part~\ref{enu:factors} is standard in
factor models \citep{bai2002determining,bai2003inferential}. The
invertibility condition and the boundedness of eigenvalues ensure
that the factors are non-degenerate. We require $\hat{\bm{\Omega}}_{\bm{F}}$
to be invertible itself, not just have an invertible limit, because
the expression of oracle weight (introduced later) implicitly relies
on a nonsingular $\hat{\bm{\Omega}}_{\bm{F}}$, though in an asymptotic
sense, an invertible limit would suffice. Part~\ref{enu:Loadings}
requires bounded loadings $\boldsymbol{\lambda}_{0}$ for the treated
unit. Regarding the control units, a sufficient condition for a bounded
$\|\bm{\lambda}_{k}^{\mathrm{co}}\|_{2}$ is assuming the maximum
loading of order $O(1/\sqrt{r})$, as what \citet{liDeterminingNumberFactors2017}
do, to accommodate a diverging number of factors. The rank condition
implies that $\bm{\Lambda}^{\mathrm{co}}$ has $K\wedge r$ nonzero
singular values, and thus $\sum_{k\in[K\wedge r]}\varsigma_{k}^{2}(\bm{\Lambda}^{\mathrm{co}})=\mathrm{trace}(\bm{\Lambda}^{\mathrm{co}\prime}\bm{\Lambda}^{\mathrm{co}})=\sum_{k\in[K]}\|\bm{\lambda}_{k}^{\mathrm{co}}\|_{2}^{2}=O(K)$.
The assumption implies all singular values are of the same order.\footnote{Our theory can accommodate various degrees of factor strength; for
example, some singular values can be of weaker order $O(K^{\alpha}/(K\wedge r))$
for some $\alpha\in(0,1)$. Weak factors would slow down the convergence
rate, complicate notations, but add no theoretical insight.} Intuitively, these conditions make sure that the core loadings are
immune from collinearity and no single group makes dominant contribution
to the variances of the outcome variables.

Under the latent group structure (\ref{eq:grouped_structure}), the
sample covariance matrix for $\bm{Y}$ can be decomposed as $\hat{\bm{\Sigma}}=\hat{\bm{\Sigma}}^{*}+\hat{\bm{\Sigma}}^{\mathrm{e}}$,
where
\[
\hat{\bm{\Sigma}}^{*}:=\frac{1}{T_{0}}\bm{Z}\bm{\Lambda}^{\mathrm{co}}\bm{F}'\bm{F}\bm{\Lambda}^{\mathrm{co}}{}'\bm{Z}',
\]
and thus $\hat{\bm{\Sigma}}^{\mathrm{e}}:=(\bm{Z}\bm{\Lambda}^{\mathrm{co}}\bm{F}'\bm{U}+\bm{U}'\bm{F}\bm{\Lambda}^{\mathrm{co}}{}'\bm{Z}'+\bm{U}^{\prime}\bm{U})/T_{0}$,
with $\bm{U}=(\bm{u}_{1},\dots,\bm{u}_{J})$ and $\bm{u}_{j}=(u_{j1},\dots,u_{jT_{0}})'$
for $j\in[J]$. Here the superscript ``$*$'' denotes the leading
component---the sample covariance of the infeasible signal $\boldsymbol{\Lambda}\bm{f}_{t}$,
whereas ``e'' stands for the remaining idiosyncratic error. Similarly,
the covariance between $\bm{Y}$ and $\bm{y}_{0}$ can be written
as $\hat{\bm{\Upsilon}}=\hat{\bm{\Upsilon}}^{*}+\hat{\bm{\Upsilon}}^{\mathrm{e}}$,
where
\[
\hat{\bm{\Upsilon}}^{*}:=\frac{1}{T_{0}}\bm{Z}\bm{\Lambda}^{\mathrm{co}}\bm{F}'\bm{F}\bm{\lambda}_{0},
\]
and $\hat{\bm{\Upsilon}}^{\mathrm{e}}:=(\bm{Z}\bm{\Lambda}^{\mathrm{co}}\bm{F}'\bm{u}_{0}+\bm{U}^{\prime}\bm{F}\bm{\lambda}_{0}+\bm{U}^{\prime}\bm{u}_{0})/T_{0}$
with $\bm{u}_{0}=(u_{01},\dots,u_{0T_{0}})^{\prime}$. Next, we specify
an asymptotic target for the $L_{2}$-SCM-relaxation estimator $\hat{\bm{w}}$.
Consider the oracle problem of $L_{2}$-SCM-relaxation with infeasible
data $y_{jt}^{*}\coloneqq\bm{\lambda}_{j}'\bm{f}_{t}$ that is free
of the idiosyncratic errors:
\begin{align}
\min_{\bm{w}\in\Delta_{J},\gamma\in\mathbb{R}}{}\left\Vert \bm{w}\right\Vert _{2}^{2}\quad & \text{s.t. }\hat{\bm{\Sigma}}^{*}\bm{w}-\hat{\bm{\Upsilon}}^{*}+\gamma\bm{1}_{J}=\bm{0}_{J}.\label{eq:oracle_relaxation_problem}
\end{align}
Denote the solution to the above problem as $\bm{w}^{*}$.
\begin{rem}
\label{rem:recovery-loading}Let $S^{*}:=\{(\bm{w},\gamma)\in\Delta_{J}\times\mathbb{R}:\hat{\bm{\Sigma}}^{*}\bm{w}-\hat{\bm{\Upsilon}}^{*}+\gamma\bm{1}_{J}=\bm{0}_{J}\}$
be the feasible set of the oracle problem (\ref{eq:oracle_relaxation_problem}).
A sufficient condition to ensure $S^{*}\neq\emptyset$ is the existence
of some $\bm{\pi}\in\Delta_{K}$ such that $\bm{\Lambda}^{\mathrm{co}\prime}\bm{\pi}=\bm{\lambda}_{0}$,
under which $(\bm{w}=\bm{Z}(\bm{Z}'\bm{Z})^{-1}\bm{\pi},\gamma=0)$
is feasible. This condition means that the true factor loadings of
the treated can be exactly recovered from the loadings of the controls.
\end{rem}
The following lemma characterizes the oracle solution $\bm{w}^{*}$.
\begin{lem}[Oracle weight]
\label{lem:oracle_target_relaxation}Suppose \assuref{DGP} holds.
If the solution to (\ref{eq:oracle_relaxation_problem}) exists, it
must be within-group equal, i.e., $w_{i}^{*}=w_{j}^{*}$ if $i$ and
$j$ belong to the same group. Furthermore, if the solution is in
the interior of the simplex, i.e., $w_{j}^{*}>0$ for all $j\in[J]$,
then it has the following closed form
\[
\bm{w}^{*}=\bm{Z}(\bm{Z}'\bm{Z})^{-1}\bm{w}_{\mathcal{G}}^{*},
\]
where the expression of $\bm{w}_{\mathcal{G}}^{*}=(w_{\mathcal{G}_{1}}^{*},\dots,w_{\mathcal{G}_{K}}^{*})'$
is given by (\ref{eq:w_G_oracle}) for $K\leq r$, and (\ref{eq:wg_oracle_r_less_K_in_col})
or (\ref{eq:wg_oracle_r_less_K_not_in_col}) for $K>r$ in Appendix~\ref{subsec:Proof-of-Lemma}.
\end{lem}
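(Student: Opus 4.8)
The plan is to use the latent group structure to collapse the $J$-dimensional oracle program~(\ref{eq:oracle_relaxation_problem}) into a $K$-dimensional one, solve that reduced quadratic program through its Karush--Kuhn--Tucker (KKT) conditions, and then lift the solution back to $\bm{w}^{*}$. First I would rewrite the equality constraint. Using $\hat{\bm{\Sigma}}^{*}=\bm{Z}\bm{\Lambda}^{\mathrm{co}}\hat{\bm{\Omega}}_{\bm{F}}\bm{\Lambda}^{\mathrm{co}\prime}\bm{Z}'$ and $\hat{\bm{\Upsilon}}^{*}=\bm{Z}\bm{\Lambda}^{\mathrm{co}}\hat{\bm{\Omega}}_{\bm{F}}\bm{\lambda}_{0}$ (with $\hat{\bm{\Omega}}_{\bm{F}}=T_{0}^{-1}\bm{F}'\bm{F}$) together with $\bm{1}_{J}=\bm{Z}\bm{1}_{K}$, the left-hand side of the constraint factors as $\bm{Z}\bigl[\bm{\Lambda}^{\mathrm{co}}\hat{\bm{\Omega}}_{\bm{F}}(\bm{\Lambda}^{\mathrm{co}\prime}\bm{Z}'\bm{w}-\bm{\lambda}_{0})+\gamma\bm{1}_{K}\bigr]$. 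Because every group is nonempty, $\bm{Z}$ has full column rank, so the constraint is equivalent to $\bm{\Lambda}^{\mathrm{co}}\hat{\bm{\Omega}}_{\bm{F}}(\bm{\Lambda}^{\mathrm{co}\prime}\bm{s}-\bm{\lambda}_{0})+\gamma\bm{1}_{K}=\bm{0}_{K}$, where $\bm{s}:=\bm{Z}'\bm{w}$ collects the within-group weight totals; crucially it involves $\bm{w}$ only through $\bm{s}$, and $\bm{w}\in\Delta_{J}$ forces $\bm{s}\in\Delta_{K}$.

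Next I would handle the objective. For any feasible $\bm{w}$ with group totals $\bm{s}$, Jensen's inequality gives $\sum_{j\in\mathcal{G}_{k}}w_{j}^{2}\geq s_{k}^{2}/J_{k}$, with equality iff $w_{j}\equiv s_{k}/J_{k}$ on $\mathcal{G}_{k}$; summing over $k$ yields $\|\bm{w}\|_{2}^{2}\geq\bm{s}'(\bm{Z}'\bm{Z})^{-1}\bm{s}$, and this lower bound is attained by the within-group-equal vector $\bm{Z}(\bm{Z}'\bm{Z})^{-1}\bm{s}$, which lies in $\Delta_{J}$ whenever $\bm{s}\in\Delta_{K}$. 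Hence replacing any optimal $\bm{w}$ by its within-group averages keeps it feasible (the constraint depends on $\bm{w}$ only through $\bm{s}$) and strictly lowers $\|\cdot\|_{2}^{2}$ unless $\bm{w}$ is already within-group equal --- this proves the first assertion, and strict convexity of $\|\cdot\|_{2}^{2}$ additionally gives uniqueness. Writing $\bm{w}_{\mathcal{G}}^{*}:=\bm{Z}'\bm{w}^{*}$, we conclude $\bm{w}^{*}=\bm{Z}(\bm{Z}'\bm{Z})^{-1}\bm{w}_{\mathcal{G}}^{*}$, where $\bm{w}_{\mathcal{G}}^{*}$ solves
\[
\min_{\bm{s}\in\Delta_{K},\ \gamma\in\mathbb{R}}\ \bm{s}'(\bm{Z}'\bm{Z})^{-1}\bm{s}\quad\text{s.t.}\quad\bm{\Lambda}^{\mathrm{co}}\hat{\bm{\Omega}}_{\bm{F}}(\bm{\Lambda}^{\mathrm{co}\prime}\bm{s}-\bm{\lambda}_{0})+\gamma\bm{1}_{K}=\bm{0}_{K}
\]
(existence of the solution is the lemma's hypothesis; a sufficient condition is recorded in \remref{recovery-loading}).

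For the closed form, note that under the hypothesis $w_{j}^{*}>0$ for all $j$ we have $s_{k}^{*}>0$ for all $k$, so the nonnegativity constraints in $\Delta_{K}$ are slack and only the $K$ linear equalities and $\bm{1}_{K}'\bm{s}=1$ bind. Setting $\bm{M}:=\bm{\Lambda}^{\mathrm{co}}\hat{\bm{\Omega}}_{\bm{F}}\bm{\Lambda}^{\mathrm{co}\prime}$ and introducing multipliers $\bm{\mu}$ for the $K$ equalities and $\nu$ for $\bm{1}_{K}'\bm{s}=1$, stationarity in $\gamma$ and $\bm{s}$ reads $\bm{\mu}'\bm{1}_{K}=0$ and $2(\bm{Z}'\bm{Z})^{-1}\bm{w}_{\mathcal{G}}^{*}+\bm{M}\bm{\mu}+\nu\bm{1}_{K}=\bm{0}_{K}$; solving this linear system jointly with the two constraints gives $\bm{w}_{\mathcal{G}}^{*}$ in closed form. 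When $K\leq r$, part~\ref{enu:Loadings} of \assuref{DGP} makes $\mathrm{rank}(\bm{\Lambda}^{\mathrm{co}})=K$, hence $\bm{M}$ positive definite and the system nonsingular, producing (\ref{eq:w_G_oracle}); when $K>r$, $\bm{M}$ has rank $r<K$, and solvability of $\bm{M}\bm{s}+\gamma\bm{1}_{K}=\bm{\Lambda}^{\mathrm{co}}\hat{\bm{\Omega}}_{\bm{F}}\bm{\lambda}_{0}$ splits according to whether $\bm{1}_{K}$ lies in the range of $\bm{M}$ --- equivalently, whether the multiplier $\gamma$ may be nonzero --- which is precisely the dichotomy behind (\ref{eq:wg_oracle_r_less_K_in_col}) and (\ref{eq:wg_oracle_r_less_K_not_in_col}); a Moore--Penrose inverse of $\bm{M}$ covers both. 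The conceptual content sits in the first two steps; the main obstacle is the rank-deficient case $K>r$, where one must carry generalized inverses and the range condition through carefully enough to reproduce the two displayed formulas.
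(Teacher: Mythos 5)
Your within-group-equality argument is correct and takes a genuinely different route from the paper. The paper proves that claim twice: for $K\leq r$ it observes that the reduced constraint (\ref{eq:cons_reduced}) together with the sum-to-one condition already pins down the group totals, so the problem separates into within-group minimizations; for $K>r$ it runs a complementary-slackness argument on the multipliers $\nu_j$ of the constraints $w_j\geq0$. Your single averaging step --- the constraint depends on $\bm{w}$ only through $\bm{s}=\bm{Z}'\bm{w}$, the averaged vector $\bm{Z}(\bm{Z}'\bm{Z})^{-1}\bm{s}$ stays in $\Delta_J$, and Cauchy--Schwarz gives $\|\bm{w}\|_2^2\geq\bm{s}'(\bm{Z}'\bm{Z})^{-1}\bm{s}$ with equality only at within-group-equal weights --- covers both cases uniformly and handles the non-negativity constraints more transparently than the paper's KKT argument. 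The reduced program you arrive at, $\min_{\bm{s}\in\Delta_K}\bm{s}'(\bm{Z}'\bm{Z})^{-1}\bm{s}$ subject to the reduced balancing constraint, is exactly the one the paper works with.

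The closed-form half, however, is asserted rather than derived, and the lemma explicitly claims the expressions (\ref{eq:w_G_oracle}), (\ref{eq:wg_oracle_r_less_K_in_col}) and (\ref{eq:wg_oracle_r_less_K_not_in_col}). For $K\leq r$ your KKT detour is harmless but unnecessary: with $\hat{\bm{\Sigma}}^{\mathrm{co}}$ invertible the feasible pair $(\bm{w}_{\mathcal{G}},\gamma)$ is unique, and the constraint plus sum-to-one already yields (\ref{eq:w_G_oracle}) without touching the objective. The genuine gap is the case $K>r$. Your claim that ``a Moore--Penrose inverse of $\bm{M}$ covers both'' cannot be right as stated: the minimizer of $\bm{s}'(\bm{Z}'\bm{Z})^{-1}\bm{s}$ over an affine set depends on the weighting $\bm{Z}'\bm{Z}$, so $\bm{w}_{\mathcal{G}}^{*}$ is not a function of $\bm{M}^{\dagger}$ alone; the paper's formulas indeed involve $\bm{Z}'\bm{Z}$, the vector $\bm{d}=\bm{\Lambda}^{\mathrm{co}\dagger}\bm{1}_K$, and in the second case the centering matrix $\bm{M}_J=\bm{I}_J-\frac{1}{J}\bm{1}_J\bm{1}_J'$. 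The paper instead premultiplies the reduced constraint by $\bm{\Lambda}^{\mathrm{co}\dagger}$ to get $\bm{\Lambda}^{\mathrm{co}\prime}\bm{w}_{\mathcal{G}}=\bm{\lambda}_0-\gamma\bm{b}$ with $\bm{b}$ as in (\ref{eq:b_def}), combines the stationarity condition $(\bm{Z}'\bm{Z})^{-1}\bm{w}_{\mathcal{G}}=\bm{\Lambda}^{\mathrm{co}}\bm{\mu}_1+\mu_2\bm{1}_K$ with the FOC in $\gamma$, namely $\bm{\mu}_1'\bm{b}=0$, and then splits on whether $\bm{1}_K$ lies in the column space of $\bm{\Lambda}^{\mathrm{co}}$; note that in the second case the paper's $\gamma^{*}$ in (\ref{eq:gam}) is generally nonzero, so your parenthetical identification of the dichotomy with ``whether $\gamma$ may be nonzero'' does not line up with the derivation that actually produces (\ref{eq:wg_oracle_r_less_K_not_in_col}). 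Until this algebra is carried through, the closed-form portion of the lemma remains unproved in your write-up.
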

\lemref{oracle_target_relaxation} demonstrates that for $j\in\mathcal{G}_{k}$,
the oracle weights $w_{j}^{*}=J_{k}^{-1}w_{\mathcal{G}_{k}}^{*}$
are evenly distributed within group $\mathcal{G}_{k}$. We allow the
group weight $w_{\mathcal{G}_{k}}^{*}=0$ for some $k$. Such group
does not contribute to the prediction of $\bm{y}_{0}$.

We will show that $\hat{\bm{w}}$ and $\bm{w}^{*}$ are sufficiently
close in both $L_{1}$ and $L_{2}$ distance. Define $\phi_{J,T_{0}}:=\sqrt{(\log J)/T_{0}}$,
which will serve as an upper bound of sampling errors and average
cross-sectional dependence in the following assumption. Note that
$\phi_{J,T_{0}}$ can shrink to 0 even if $J$ is much larger than
$T_{0}$; for example, if $J=T_{0}^{2}$, then $\phi_{J,T_{0}}=\sqrt{2(\log T_{0})/T_{0}}\to0$
as $T_{0}\to\infty$.
\begin{assumption}
\label{assu:errors}The factors and idiosyncratic errors satisfy
\begin{enumerate}[label=(\alph*),parsep=0pt,topsep=0pt,font=\upshape]
\item \label{enu:uncorr}$\mathbb{E}(u_{jt})=0$, and $(u_{jt})_{j\in\{0\}\cup[J]}$
and $\bm{f}_{t}$ are uncorrelated for all $t\in\mathcal{T}_{0}$;
\item \label{enu:Fu}$\sup_{j\in\{0\}\cup[J]}\|T_{0}^{-1}\bm{F}'\bm{u}_{j}\|_{\infty}=O_{p}(\phi_{J,T_{0}})$;
\item \label{enu:uu}$\sup_{j\in[J]}J^{-1}\sum_{i=1}^{J}|\mathbb{E}(T_{0}^{-1}\bm{u}_{i}^{\prime}\bm{u}_{j})|+\sup_{j\in[J]}|\mathbb{E}(T_{0}^{-1}\bm{u}_{0}^{\prime}\bm{u}_{j})|=O(\phi_{J,T_{0}})$;
\item \label{enu:uu_dev}$\sup_{i,j\in\{0\}\cup[J]}\bigl|T_{0}^{-1}[\bm{u}_{i}^{\prime}\bm{u}_{j}-\mathbb{E}(\bm{u}_{i}^{\prime}\bm{u}_{j})]\bigr|=O_{p}(\phi_{J,T_{0}})$.
\end{enumerate}
\end{assumption}
\assuref{errors} imposes high-level conditions on idiosyncratic errors.
Specifically, Condition~\ref{enu:uncorr} means that $\mathbb{E}(\bm{f}_{t}u_{jt})=\bm{0}_{r\times1}$
and Condition~\ref{enu:Fu} ensures that the corresponding sampling
errors is controlled by $\phi_{J,T_{0}}$ uniformly for all $j$ and
$t$. Condition~\ref{enu:uu} allows for mild cross-sectional dependence
in idiosyncratic errors of the control units; note that this is a
weak assumption since it only requires that on average $\mathbb{E}(T_{0}^{-1}\bm{u}_{i}^{\prime}\bm{u}_{j})$
should be controlled by $\phi_{J,T_{0}}$. On the other hand, the
weak correlation of the idiosyncratic errors between the treated unit
and the controls is for simplicity, so that the predictability is
dominantly due to the factors. Sampling errors are restricted in Condition~\ref{enu:uu_dev}.
The order $\phi_{J,T_{0}}=\sqrt{(\log J)/T_{0}}$ for the sup-norm
of the sampling errors is commonplace in high-dimensional statistics,
and can be established from low-level assumptions; see, for instance,
\citet[Lemma~4]{fan2013large} and \citet[Section 6.5]{wainwright2019high}.
Note, however, these conditions rule out unit-root factors or error
terms.
\begin{assumption}
\label{assu:eta}As $T_{0}\to\infty$,
\begin{enumerate}[label=(\alph*),parsep=0pt,topsep=0pt,font=\upshape]
\item \label{enu:group-size}Group size: $K\left(\min_{k\in[K]}J_{k}/J\right)\geq\underline{c}$
for some absolute constant $\underline{c}$;
\item \label{enu:tuning}Tuning parameter: $(K\wedge r)^{2}\eta+K\sqrt{r}\phi_{J,T_{0}}/\eta\to0$.
\end{enumerate}
\end{assumption}
Part~\ref{enu:group-size} of \assuref{eta} ensures that none of
the groups is negligible in terms of its size; otherwise a tiny group
contributes little in prediction but its associate weight can be disproportionately
large. Part~\ref{enu:tuning} requires that the tuning parameter
$\eta$ should shrink to 0 in a rate faster than $1/(K\wedge r)^{2}$
to guarantee the convergence of $\hat{\bm{w}}$, but slower than $K\sqrt{r}\phi_{J,T_{0}}$
so that the oracle weight $\bm{w}^{*}$ satisfies the relaxation $\Vert\hat{\bm{\Sigma}}\bm{w}^{*}-\hat{\bm{\Upsilon}}+\gamma^{*}\bm{1}_{J}\Vert_{\infty}\leq\eta$
with high probability. This requires that $K$ and $r$, if allowed
to diverge, should grow in a much slower rate than $T_{0}$. If they
have the same order as $T_{0}$, then clearly this condition fails.
Intuitively, if $K$ is of the same order as $T_{0}$, then each group
contains on average finite number of units, from which it is not possible
to draw a consistent estimator for the oracle weight. They can, for
example, tend to infinity in a rate of $\log(T_{0})$, then $\eta=[\log(T_{0})]^{-3}$
would satisfy this condition. If $K$ and $r$ are fixed, then $\eta$
can be any vanishing sequence tending to zero slower than $\phi_{J,T_{0}}$.
\begin{rem}
\citet{shi$ell_2$relaxationApplicationsForecast2022}'s theory hinges
on the assumption $K\leq r$. Here we break this restriction. A large
$K$ makes it possible to use the group structures to approximate
continuously distributed factor loadings \citep{bonhomme2022discretizing},
and a large $r$ can deal with propagation of factors, as is extensively
studied in the financial asset pricing literature \citep{feng2020taming}.
\end{rem}
The tuning parameter $\eta$ plays an important role in the finite
sample behavior of the estimator. In asymptotic analysis we specify
the rate for a proper $\eta$ as the sample size increases. In practice,
the tuning parameter is chosen by cross validation; this is what we
do in the numerical work throughout this paper.

\subsection{$L_{2}$-norm Objective Function}

The assumptions in the previous section are prepared for the consistency
of $L_{2}$-SCM-relaxation.
\begin{thm}[Convergence of $\hat{\bm{w}}$]
\label{thm:w_convergence} If \assuref[s]{DGP}--\ref{assu:eta}
hold, then
\begin{align*}
\|\hat{\bm{w}}-\bm{w}^{*}\|_{1} & =O_{p}\bigl([(K\wedge r)^{2}\eta]^{1/3}\bigr)=o_{p}(1),\text{ and }\\
\|\hat{\bm{w}}-\bm{w}^{*}\|_{2} & =O_{p}\left(\frac{[(K\wedge r)^{2}\eta]^{1/3}}{\sqrt{J}}\right)=o_{p}\left(\frac{1}{\sqrt{J}}\right).
\end{align*}
\end{thm}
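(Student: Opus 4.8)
The plan is the standard ``basic inequality + compatibility + self‑bounding'' route, adapted to the simplex geometry. Write $\bm{\delta}:=\hat{\bm{w}}-\bm{w}^{*}$ and $\bm{\delta}_{\mathcal{G}}:=\bm{Z}'\bm{\delta}$, so $\bm{1}_{J}'\bm{\delta}=0$, $\bm{1}_{K}'\bm{\delta}_{\mathcal{G}}=0$, and $\|\bm{\delta}\|_{1}\le\|\hat{\bm{w}}\|_{1}+\|\bm{w}^{*}\|_{1}=2$. \emph{Step 1 (feasibility of the oracle).} I would first show that $(\bm{w}^{*},\gamma^{*})$ is feasible for \eqref{eq:SCM-relaxation} with probability approaching one, i.e.\ $\|\hat{\bm{\Sigma}}\bm{w}^{*}-\hat{\bm{\Upsilon}}+\gamma^{*}\bm{1}_{J}\|_{\infty}\le\eta$. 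Since $\hat{\bm{\Sigma}}^{*}\bm{w}^{*}-\hat{\bm{\Upsilon}}^{*}+\gamma^{*}\bm{1}_{J}=\bm{0}_{J}$ by \eqref{eq:oracle_relaxation_problem}, this is the bound $\|\hat{\bm{\Sigma}}^{\mathrm{e}}\bm{w}^{*}-\hat{\bm{\Upsilon}}^{\mathrm{e}}\|_{\infty}\le\eta$. Expanding $\hat{\bm{\Sigma}}^{\mathrm{e}},\hat{\bm{\Upsilon}}^{\mathrm{e}}$ into their three pieces and using the within‑group‑equal form of $\bm{w}^{*}$ from \lemref{oracle_target_relaxation} (so $\bm{Z}'\bm{w}^{*}=\bm{w}_{\mathcal{G}}^{*}$), the scalings $\|\bm{w}^{*}\|_{1}=1$ and $\|\bm{w}^{*}\|_{\infty}=O(K/J)$ from \assuref{eta}\ref{enu:group-size} together with $\sum_{k}(w^{*}_{\mathcal{G}_k})^{2}\le1$, the loading bounds in \assuref{DGP}\ref{enu:Loadings}, and \assuref{errors}\ref{enu:Fu}--\ref{enu:uu_dev}, one obtains $\|\hat{\bm{\Sigma}}^{\mathrm{e}}\bm{w}^{*}-\hat{\bm{\Upsilon}}^{\mathrm{e}}\|_{\infty}=O_{p}\bigl((K+\sqrt{r})\phi_{J,T_{0}}\bigr)=o_{p}(\eta)$ by \assuref{eta}\ref{enu:tuning}. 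On the event that the oracle is feasible, optimality of $\hat{\bm{w}}$ gives the basic inequality $\|\hat{\bm{w}}\|_{2}^{2}\le\|\bm{w}^{*}\|_{2}^{2}$, which after expansion and $\bm{1}_{J}'\bm{\delta}=0$ becomes $\|\bm{\delta}\|_{2}^{2}\le-2\bm{w}^{*\prime}\bm{\delta}$.

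\textbf{Step 2 (a quadratic‑form bound from the relaxed FOC).} Subtracting the two relaxed constraints evaluated at $\hat{\bm{w}}$ and at $\bm{w}^{*}$ gives $\|\hat{\bm{\Sigma}}\bm{\delta}+(\hat{\gamma}-\gamma^{*})\bm{1}_{J}\|_{\infty}\le2\eta$; contracting with $\bm{\delta}'$ and using $\bm{1}_{J}'\bm{\delta}=0$ yields $\bm{\delta}'\hat{\bm{\Sigma}}\bm{\delta}\le2\eta\|\bm{\delta}\|_{1}$. Writing $\hat{\bm{\Sigma}}=\hat{\bm{\Sigma}}^{*}+\hat{\bm{\Sigma}}^{\mathrm{e}}$ and noting $\bm{\delta}'\hat{\bm{\Sigma}}^{\mathrm{e}}\bm{\delta}=\|\bm{U}\bm{\delta}/\sqrt{T_{0}}\|_{2}^{2}+2(\bm{\Lambda}^{\mathrm{co}\prime}\bm{\delta}_{\mathcal{G}})'(\bm{F}'\bm{U}\bm{\delta}/T_{0})$, I would discard the nonnegative square and bound the cross term by $\|\bm{\Lambda}^{\mathrm{co}\prime}\bm{\delta}_{\mathcal{G}}\|_{2}\sqrt{r}\,\|\bm{F}'\bm{U}\bm{\delta}/T_{0}\|_{\infty}=O_{p}(\sqrt{r\vee K}\,\phi_{J,T_{0}})\|\bm{\delta}\|_{1}$ via \assuref{errors}\ref{enu:Fu}, \assuref{DGP}\ref{enu:Loadings} and $\|\bm{\delta}_{\mathcal{G}}\|_{2}\le2$. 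Since $\sqrt{r\vee K}\,\phi_{J,T_{0}}=o(\eta)$, this gives $\bm{\delta}'\hat{\bm{\Sigma}}^{*}\bm{\delta}=O_{p}(\eta\|\bm{\delta}\|_{1})$. Finally $\hat{\bm{\Sigma}}^{*}=\bm{Z}\bm{\Lambda}^{\mathrm{co}}\hat{\bm{\Omega}}_{\bm{F}}\bm{\Lambda}^{\mathrm{co}\prime}\bm{Z}'$ and \assuref{DGP}\ref{enu:factors} yield $\bm{\delta}'\hat{\bm{\Sigma}}^{*}\bm{\delta}\ge\phi_{\min}(\hat{\bm{\Omega}}_{\bm{F}})\|\bm{\Lambda}^{\mathrm{co}\prime}\bm{\delta}_{\mathcal{G}}\|_{2}^{2}\ge(\underline c/2)\|\bm{\Lambda}^{\mathrm{co}\prime}\bm{\delta}_{\mathcal{G}}\|_{2}^{2}$ w.p.a.1, hence $\|\bm{\Lambda}^{\mathrm{co}\prime}\bm{\delta}_{\mathcal{G}}\|_{2}=O_{p}\bigl(\sqrt{\eta\|\bm{\delta}\|_{1}}\bigr)$.

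\textbf{Step 3 (the compatibility condition --- the main obstacle).} The hard part is to convert $\|\bm{\delta}\|_{2}^{2}\le-2\bm{w}^{*\prime}\bm{\delta}=-2\bm{w}_{\mathcal{G}}^{*\prime}(\bm{Z}'\bm{Z})^{-1}\bm{\delta}_{\mathcal{G}}$ into a bound governed by $\|\bm{\Lambda}^{\mathrm{co}\prime}\bm{\delta}_{\mathcal{G}}\|_{2}$ with a coefficient of order $(K\wedge r)/J$. Using the closed forms of $\bm{w}_{\mathcal{G}}^{*}$ in \lemref{oracle_target_relaxation}, which through the oracle KKT conditions place $(\bm{Z}'\bm{Z})^{-1}\bm{w}_{\mathcal{G}}^{*}$ in the span of the columns of $\bm{\Lambda}^{\mathrm{co}}$ together with $\bm{1}_{K}$, and using $\bm{1}_{K}'\bm{\delta}_{\mathcal{G}}=0$, one reduces $\bm{w}^{*\prime}\bm{\delta}$ to an inner product of $\bm{\Lambda}^{\mathrm{co}\prime}\bm{\delta}_{\mathcal{G}}$ against a multiplier vector whose norm is controlled through $\varsigma_{\min}(\bm{\Lambda}^{\mathrm{co}})^{-1}$, through $\|(\bm{Z}'\bm{Z})^{-1}\bm{w}_{\mathcal{G}}^{*}\|_{2}=O(K/J)$ (from \assuref{eta}\ref{enu:group-size} and $\sum_{k}(w^{*}_{\mathcal{G}_k})^{2}\le1$), and through a bound on the simplex multiplier $\nu^{*}$. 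Non‑negativity of the weights is indispensable here: it delivers the small $\ell_{\infty}$‑ and $\ell_{2}$‑scales of $\bm{w}^{*}$ and, as in \remref{key-tech-remark}, it is what makes the compatibility inequality---relating $\|\bm{\delta}\|_{2}$ to its low‑dimensional counterpart \citep{buhlmann2011statistics}---go through even when $\bm{\Lambda}^{\mathrm{co}\prime}$ is not injective on $\mathbb{R}^{K}$, i.e.\ the case $K>r$ that escapes a naive restricted‑eigenvalue argument (and that the $K\le r$ restriction of the original $L_{2}$‑relaxation avoids; cf.\ the analogous but unanalysed constraint in \citealp{jagannathan2003risk}). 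The target output of this step is $|\bm{w}^{*\prime}\bm{\delta}|=O_{p}\bigl(\tfrac{K\wedge r}{J}\,\|\bm{\Lambda}^{\mathrm{co}\prime}\bm{\delta}_{\mathcal{G}}\|_{2}\bigr)=O_{p}\bigl(\tfrac{K\wedge r}{J}\sqrt{\eta\|\bm{\delta}\|_{1}}\bigr)$; if $\bm{w}^{*}$ is not interior, the same computation is carried out on the groups with $w^{*}_{\mathcal{G}_k}>0$.

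\textbf{Step 4 (self‑bounding and conclusion).} Feeding Step 3 into the basic inequality gives $\|\bm{\delta}\|_{2}^{2}\le-2\bm{w}^{*\prime}\bm{\delta}=O_{p}\bigl(\tfrac{K\wedge r}{J}\sqrt{\eta\|\bm{\delta}\|_{1}}\bigr)$, and combining with the elementary $\|\bm{\delta}\|_{1}\le\sqrt{J}\,\|\bm{\delta}\|_{2}$ yields $\|\bm{\delta}\|_{2}^{3/2}=O_{p}\bigl((K\wedge r)J^{-3/4}\sqrt{\eta}\bigr)$, i.e.\ $\|\bm{\delta}\|_{2}=O_{p}\bigl([(K\wedge r)^{2}\eta]^{1/3}/\sqrt{J}\bigr)$; the $L_{1}$ statement follows from $\|\bm{\delta}\|_{1}\le\sqrt{J}\,\|\bm{\delta}\|_{2}=O_{p}\bigl([(K\wedge r)^{2}\eta]^{1/3}\bigr)$, and both are $o_{p}$ by \assuref{eta}\ref{enu:tuning}. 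I expect the routine parts (the three‑component error expansions in Steps 1--2) to be only bookkeeping; the genuine difficulty is tracking the $(K\wedge r)/J$ coefficient in Step 3 in the regime $K>r$, where the rank deficiency of $\bm{\Lambda}^{\mathrm{co}\prime}$ forces the compatibility bound to be extracted from the simplex/non‑negativity geometry and the explicit oracle formulas.
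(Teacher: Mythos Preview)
Your proposal is correct and follows essentially the same route as the paper: feasibility of the oracle (the paper's \lemref{w_star_feasibility}), the basic inequality $\|\hat{\bm{w}}\|_{2}^{2}\le\|\bm{w}^{*}\|_{2}^{2}$, a compatibility inequality that pushes $\|\bm{\delta}\|_{2}^{2}$ down to a low-dimensional quantity (the paper's \lemref{link_two_norms}), a quadratic bound from the relaxed constraint, and the self-bounding closure. The only cosmetic difference is that you unify both regimes through $\|\bm{\Lambda}^{\mathrm{co}\prime}\bm{\delta}_{\mathcal{G}}\|_{2}$, whereas the paper splits cases---for $K\le r$ it works with $\|\bm{\delta}_{\mathcal{G}}\|_{2}$ and $\phi_{\min}(\hat{\bm{\Sigma}}^{\mathrm{co}})$, and only for $K>r$ does it pass to $\|\bm{\Lambda}^{\mathrm{co}\prime}\bm{\delta}_{\mathcal{G}}\|_{2}$ via the multiplier bound $\|\bm{\mu}_{1}\|_{2}=O_{p}(r/J)$ (their \lemref{bounds_for_mu}\ref{enu:K_gtr_r}); your Step~3 will need exactly that multiplier bound, obtained from the explicit oracle formulas rather than from $\varsigma_{\min}(\bm{\Lambda}^{\mathrm{co}})^{-1}\|(\bm{Z}'\bm{Z})^{-1}\bm{w}_{\mathcal{G}}^{*}\|_{2}$ alone.
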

\thmref{w_convergence} shows that the estimator $\hat{\bm{w}}$ converges
to $\bm{w}^{*}$ under both $L_{1}$- and $L_{2}$-norm with desirable
rates of convergence. Its proof deviates substantially from that in
\citet{shi$ell_2$relaxationApplicationsForecast2022}, which resorts
to the dual problem (unconstrained optimization) of their primal $L_{2}$-relaxation
(constrained optimization). Had we mimic their proof, as many as $J$
non-negativity constraints from the simplex would make the dual formulation
intractable. Therefore, here we must come up with a new strategy that
directly works with the primal problem.
\begin{rem}
\label{rem:key-tech-remark} We briefly discuss the road map of proofs.
Our idea is to first show in \lemref{w_star_feasibility} that the
oracle weights are feasible with high probability. The oracle target
$\bm{w}^{*}=\bm{Z}(\bm{Z}'\bm{Z})^{-1}\bm{w}_{\mathcal{G}}^{*}$ lies
in the low-dimensional space spanned by the membership matrix $\bm{Z}$,
which implies an inequality that links $\left\Vert \hat{\bm{w}}-\bm{w}^{*}\right\Vert _{2}$
with $\|\hat{\bm{w}}_{\mathcal{G}}-\bm{w}_{\mathcal{G}}^{*}\|_{2}$
in \lemref{link_two_norms}, following Proposition 9.13 in \citet[Section 9.2]{wainwright2019high}.
Then we establishes a version of the \emph{restricted eigenvalue}
(RE),\footnote{While the Lasso literature usually \emph{assumes} the RE \citep{bickel2009simultaneous,buhlmann2011statistics},
we instead \emph{derive} the RE from the group structures. } and it further yields a \emph{compatibility inequality} \citep[Chapter 6.13]{buhlmann2011statistics}
that links $\|\hat{\bm{w}}-\bm{w}^{*}\|_{2}$ and $\|\hat{\bm{w}}-\bm{w}^{*}\|_{1}$
via $\|\hat{\bm{w}}_{\mathcal{G}}-\bm{w}_{\mathcal{G}}^{*}\|_{2}$
as a bridge. These preparatory results allows us to borrow existing
inequalities from the theory of Lasso in high dimension and tailor
them for our purpose in the proof of \thmref{w_convergence} in Appendix~\ref{subsec:proof-thm1}.
\end{rem}
An unexpected windfall of this strategy of proof is that the asymptotic
analysis can be carried over into general convex objective functions,
to be elaborated in Section~\ref{subsec:Other-objective-functions}.
Moreover, it overcomes a longstanding technical challenge of establishing
asymptotic results with a high-dimensional simplex constraint, which
was also encountered as the no-short-sell constraint in large portfolios
\citep{jagannathan2003risk}.

\begin{rem}
We compare our \thmref{w_convergence} with that of \citet{fermanPropertiesSyntheticControl2021},
who analyzes the convergence of $\hat{\bm{\lambda}}^{\mathrm{SC}}:=\bm{\Lambda}'\hat{\bm{w}}^{\mathrm{SC}}$
and shows $\|\bm{f}_{t}'(\hat{\bm{\lambda}}^{\mathrm{SC}}-\bm{\lambda}_{0})\|_{2}\to_{p}0$
under a key condition that there exists a sequence of ``oracle''
weight $\bm{w}^{*}$ such that $\bm{\Lambda}'\bm{w}^{*}\to\bm{\lambda}_{0}$,
the asymptotic unbiasedness condition. We instead\emph{ prove} that
the SCM-relaxation estimator $\hat{\bm{w}}$ converges to $\bm{w}^{*}$
at a sufficiently fast rate. Another interesting result in \citet{fermanPropertiesSyntheticControl2021}
is that the SCM weight $\hat{\bm{w}}^{\mathrm{SC}}$ will spread out
over many control units as $J\to\infty$, giving rise to $\Vert\hat{\bm{w}}^{\mathrm{SC}}\Vert_{2}\to_{p}0$,
which motivates \citet{ferman2021synthetic} to suggest reporting
$\Vert\hat{\bm{w}}^{\mathrm{SC}}\Vert_{2}$ in empirical applications
to check whether many control units help reduce the bias. Our \thmref{w_convergence},
moreover, delivers an even faster rate $\left\Vert \hat{\bm{w}}\right\Vert _{2}=o_{p}(J^{-1/2})$.
\end{rem}
\thmref{w_convergence} has established that SCM-relaxation diversifies
prediction risk over many control units. Next, we continue with the
empirical risk. For a generic weight $\bm{w}\in\Delta_{J}$, define
the in-sample empirical risk as $R_{\mathcal{T}_{0}}(\bm{w}):=T_{0}^{-1}\sum_{t\in\mathcal{T}_{0}}(\bm{w}'\bm{y}_{t}-y_{0t})^{2}$
and similarly the out-of-sample empirical risk as $R_{\mathcal{T}_{1}}(\bm{w})=T_{1}^{-1}\sum_{t\in\mathcal{T}_{1}}(\bm{w}'\bm{y}_{t}-y_{0t})^{2}$.
\begin{thm}[Oracle empirical risks]
\label{thm:oracle_inequalities} Under \assuref[s]{DGP}--\ref{assu:eta},
we have
\begin{enumerate}[label=(\roman*),parsep=0pt,topsep=0pt,font=\upshape]
\item \label{enu:in-sample}$R_{\mathcal{T}_{0}}(\hat{\bm{w}})=R_{\mathcal{T}_{0}}(\bm{w}^{*})+o_{p}(1)$
as $T_{0}\to\infty$.
\item \label{enu:out-sample}If $(y_{jt}^{N})_{j\in\{0\}\cup[J],t\in\mathcal{T}_{1}}$
follow the same DGP as that in the pre-treatment periods, and $r\log(J)/T_{1}=O(1)$,
then $R_{\mathcal{T}_{1}}(\hat{\bm{w}})=R_{\mathcal{T}_{1}}(\bm{w}^{*})+o_{p}(1)$
as $T_{0},T_{1}\to\infty$.
\end{enumerate}
\end{thm}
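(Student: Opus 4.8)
The plan is to deduce both parts from the $L_{1}$‑consistency already supplied by \thmref{w_convergence}, namely $\|\bm{\delta}\|_{1}=O_{p}([(K\wedge r)^{2}\eta]^{1/3})=o_{p}(1)$ with $\bm{\delta}:=\hat{\bm{w}}-\bm{w}^{*}$, together with the simplex fact $\bm{\delta}'\bm{1}_{J}=0$ (both $\hat{\bm{w}},\bm{w}^{*}\in\Delta_{J}$). Since $R_{\mathcal{T}_{0}}(\bm{w})=\bm{w}'\hat{\bm{\Sigma}}\bm{w}-2\bm{w}'\hat{\bm{\Upsilon}}+T_{0}^{-1}\|\bm{y}_{0}\|_{2}^{2}$, expanding $\hat{\bm{w}}=\bm{w}^{*}+\bm{\delta}$ gives the identity
\[
R_{\mathcal{T}_{0}}(\hat{\bm{w}})-R_{\mathcal{T}_{0}}(\bm{w}^{*})=2\bm{\delta}'\bigl(\hat{\bm{\Sigma}}\bm{w}^{*}-\hat{\bm{\Upsilon}}+\gamma^{*}\bm{1}_{J}\bigr)+\bm{\delta}'\hat{\bm{\Sigma}}\bm{\delta},
\]
where $\gamma^{*}$ is the multiplier component of the oracle solution and $\gamma^{*}\bm{1}_{J}$ has been inserted freely using $\bm{\delta}'\bm{1}_{J}=0$.

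For part~\ref{enu:in-sample} I will bound each term by Hölder's inequality plus feasibility. By \lemref{w_star_feasibility}, $(\bm{w}^{*},\gamma^{*})\in S_{\eta}$ with probability approaching one, so $\|\hat{\bm{\Sigma}}\bm{w}^{*}-\hat{\bm{\Upsilon}}+\gamma^{*}\bm{1}_{J}\|_{\infty}\leq\eta$ and the first term is at most $2\|\bm{\delta}\|_{1}\eta$. For the quadratic term, differencing the relaxed first‑order conditions of $(\hat{\bm{w}},\hat{\gamma})$ and $(\bm{w}^{*},\gamma^{*})$ yields $\hat{\bm{\Sigma}}\bm{\delta}=(\hat{\bm{\Sigma}}\hat{\bm{w}}-\hat{\bm{\Upsilon}}+\hat{\gamma}\bm{1}_{J})-(\hat{\bm{\Sigma}}\bm{w}^{*}-\hat{\bm{\Upsilon}}+\gamma^{*}\bm{1}_{J})-(\hat{\gamma}-\gamma^{*})\bm{1}_{J}$, so $\bm{\delta}'\hat{\bm{\Sigma}}\bm{\delta}=\bm{\delta}'[(\hat{\bm{\Sigma}}\hat{\bm{w}}-\hat{\bm{\Upsilon}}+\hat{\gamma}\bm{1}_{J})-(\hat{\bm{\Sigma}}\bm{w}^{*}-\hat{\bm{\Upsilon}}+\gamma^{*}\bm{1}_{J})]$ again by $\bm{\delta}'\bm{1}_{J}=0$; since both $(\hat{\bm{w}},\hat{\gamma})$ (optimality) and $(\bm{w}^{*},\gamma^{*})$ (\lemref{w_star_feasibility}) lie in $S_{\eta}$, this is $\leq 2\|\bm{\delta}\|_{1}\eta$ in absolute value with high probability. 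Adding up, $|R_{\mathcal{T}_{0}}(\hat{\bm{w}})-R_{\mathcal{T}_{0}}(\bm{w}^{*})|\leq 4\|\bm{\delta}\|_{1}\eta=o_{p}(1)$, because $\|\bm{\delta}\|_{1}=o_{p}(1)$ and $\eta\to0$ is implied by \assuref{eta}\ref{enu:tuning}.

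For part~\ref{enu:out-sample}, put $\tilde{\bm{\Sigma}}:=T_{1}^{-1}\sum_{t\in\mathcal{T}_{1}}\bm{y}_{t}\bm{y}_{t}'$ and $\tilde{\bm{\Upsilon}}:=T_{1}^{-1}\sum_{t\in\mathcal{T}_{1}}\bm{y}_{t}y_{0t}$, so the same expansion gives $R_{\mathcal{T}_{1}}(\hat{\bm{w}})-R_{\mathcal{T}_{1}}(\bm{w}^{*})=2\bm{\delta}'(\tilde{\bm{\Sigma}}\bm{w}^{*}-\tilde{\bm{\Upsilon}})+\bm{\delta}'\tilde{\bm{\Sigma}}\bm{\delta}$. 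Rather than handle the post‑treatment moments directly, I compare with the in‑sample difference: adding and subtracting $\hat{\bm{\Sigma}},\hat{\bm{\Upsilon}}$,
\[
R_{\mathcal{T}_{1}}(\hat{\bm{w}})-R_{\mathcal{T}_{1}}(\bm{w}^{*})=\bigl[R_{\mathcal{T}_{0}}(\hat{\bm{w}})-R_{\mathcal{T}_{0}}(\bm{w}^{*})\bigr]+2\bm{\delta}'\bigl[(\tilde{\bm{\Sigma}}-\hat{\bm{\Sigma}})\bm{w}^{*}-(\tilde{\bm{\Upsilon}}-\hat{\bm{\Upsilon}})\bigr]+\bm{\delta}'(\tilde{\bm{\Sigma}}-\hat{\bm{\Sigma}})\bm{\delta}.
\]
The first bracket is $o_{p}(1)$ by part~\ref{enu:in-sample}. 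Writing $\|\bm{B}\|_{\max}:=\max_{i,j}|B_{ij}|$ and using $\bm{w}^{*}\in\Delta_{J}$ so that $\|(\tilde{\bm{\Sigma}}-\hat{\bm{\Sigma}})\bm{w}^{*}\|_{\infty}\leq\|\tilde{\bm{\Sigma}}-\hat{\bm{\Sigma}}\|_{\max}$, Hölder bounds the remaining two terms by $2\|\bm{\delta}\|_{1}(\|\tilde{\bm{\Sigma}}-\hat{\bm{\Sigma}}\|_{\max}+\|\tilde{\bm{\Upsilon}}-\hat{\bm{\Upsilon}}\|_{\infty})+\|\bm{\delta}\|_{1}^{2}\|\tilde{\bm{\Sigma}}-\hat{\bm{\Sigma}}\|_{\max}$. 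Because $\|\bm{\delta}\|_{1}=o_{p}(1)$, it thus suffices to show $\|\tilde{\bm{\Sigma}}-\hat{\bm{\Sigma}}\|_{\max}=O_{p}(1)$ and $\|\tilde{\bm{\Upsilon}}-\hat{\bm{\Upsilon}}\|_{\infty}=O_{p}(1)$.

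This uniform bound is the step I expect to be the main obstacle, and it is where the two extra hypotheses of part~\ref{enu:out-sample} enter. For $i,j\in\{0\}\cup[J]$ I split $y_{it}y_{jt}=(\bm{\lambda}_{i}'\bm{f}_{t})(\bm{\lambda}_{j}'\bm{f}_{t})+(\bm{\lambda}_{i}'\bm{f}_{t})u_{jt}+u_{it}(\bm{\lambda}_{j}'\bm{f}_{t})+u_{it}u_{jt}$ and average over $\mathcal{T}_{1}$ and $\mathcal{T}_{0}$ separately. The factor‑only piece contributes $\bm{\lambda}_{i}'(\tilde{\bm{\Omega}}_{\bm{F}}-\hat{\bm{\Omega}}_{\bm{F}})\bm{\lambda}_{j}$ with $\tilde{\bm{\Omega}}_{\bm{F}}:=T_{1}^{-1}\sum_{t\in\mathcal{T}_{1}}\bm{f}_{t}\bm{f}_{t}'$; since $\sup_{i}\|\bm{\lambda}_{i}\|_{2}\leq\bar{c}$ and $\|\hat{\bm{\Omega}}_{\bm{F}}-\bm{\Omega}_{\bm{F}}\|_{2}\to_{p}0$ by \assuref{DGP}\ref{enu:factors} (and $\|\tilde{\bm{\Omega}}_{\bm{F}}-\bm{\Omega}_{\bm{F}}\|_{2}\to_{p}0$ by its post‑treatment analogue, legitimate because the DGP is unchanged), this piece is $o_{p}(1)$ uniformly in $i,j$. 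Each cross piece contributes, e.g., $\bm{\lambda}_{i}'(T_{1}^{-1}\sum_{t\in\mathcal{T}_{1}}\bm{f}_{t}u_{jt})$, bounded by $\|\bm{\lambda}_{i}\|_{2}\sqrt{r}\,\|T_{1}^{-1}\sum_{t\in\mathcal{T}_{1}}\bm{f}_{t}u_{jt}\|_{\infty}=O_{p}(\sqrt{r}\,\phi_{J,T_{1}})$ by the post‑treatment analogue of \assuref{errors}\ref{enu:Fu}, and $\sqrt{r}\,\phi_{J,T_{1}}=\sqrt{r\log(J)/T_{1}}=O(1)$ is precisely the condition $r\log(J)/T_{1}=O(1)$, whereas its $\mathcal{T}_{0}$‑counterpart is $O_{p}(\sqrt{r}\,\phi_{J,T_{0}})=o_{p}(1)$ since $\sqrt{r}\,\phi_{J,T_{0}}\to0$ follows from \assuref{eta}\ref{enu:tuning}. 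The $u_{it}u_{jt}$ piece equals, by \assuref{errors}\ref{enu:uu_dev} and its analogue, $\mathbb{E}(u_{it}u_{jt})$ up to $O_{p}(\phi_{J,T_{1}}+\phi_{J,T_{0}})=O_{p}(1)$, and the leading expectations cancel across the two periods because the idiosyncratic errors obey the same (stationary) law. Collecting, $\|\tilde{\bm{\Sigma}}-\hat{\bm{\Sigma}}\|_{\max}=O_{p}(1)$, and specializing $j=0$ gives $\|\tilde{\bm{\Upsilon}}-\hat{\bm{\Upsilon}}\|_{\infty}=O_{p}(1)$; the rest is Hölder plus the rates of \thmref{w_convergence}.
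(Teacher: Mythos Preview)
Your proof is correct and, for part~\ref{enu:in-sample}, takes a genuinely more direct route than the paper. The paper splits $\hat{\bm{\Sigma}}=\hat{\bm{\Sigma}}^{*}+\hat{\bm{\Sigma}}^{\mathrm{e}}$ and $\hat{\bm{\Upsilon}}=\hat{\bm{\Upsilon}}^{*}+\hat{\bm{\Upsilon}}^{\mathrm{e}}$, invokes the oracle identity $\hat{\bm{\Sigma}}^{*}\bm{w}^{*}-\hat{\bm{\Upsilon}}^{*}+\gamma^{*}\bm{1}_{J}=\bm{0}_{J}$, and bounds four resulting pieces one by one via \lemref{sampling-error} together with \emph{both} the $L_{1}$ and $L_{2}$ rates of \thmref{w_convergence} (the $L_{2}$ rate is needed to kill $\|\hat{\bm{\Sigma}}^{\mathrm{e}}\|_{2}\|\hat{\bm{h}}\|_{2}^{2}$). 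You bypass the core/error decomposition entirely: by writing $\hat{\bm{\Sigma}}\bm{\delta}$ as the difference of two vectors that both lie in the $\eta$-box---feasibility of $(\hat{\bm{w}},\hat{\gamma})$ by optimality and of $(\bm{w}^{*},\gamma^{*})$ by \lemref{w_star_feasibility}---you get the clean bound $4\|\bm{\delta}\|_{1}\eta$, needing only the $L_{1}$ rate and $\eta\to0$. For part~\ref{enu:out-sample} the two arguments are closer in spirit (both compare post- to pre-period moments), but again the paper works through the core/error split and uses $\|\bm{\delta}\|_{2}=o_{p}(J^{-1/2})$ against operator-norm bounds on $\tilde{\bm{\Sigma}}^{\mathrm{e}}$, whereas you control $\|\tilde{\bm{\Sigma}}-\hat{\bm{\Sigma}}\|_{\max}$ at $O_{p}(1)$ entrywise and finish with $\|\bm{\delta}\|_{1}=o_{p}(1)$ alone. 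The only extra ingredient your route uses is the natural reading of ``same DGP'' as implying that $\mathbb{E}(u_{it}u_{jt})$ is time-invariant so the expectations cancel across periods; the paper's operator-norm argument avoids this by handling each period separately via the $L_{1}$-type bound in \assuref{errors}\ref{enu:uu}, at the cost of requiring the sharper $L_{2}$ convergence of $\hat{\bm{w}}$.
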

\thmref{oracle_inequalities} \ref{enu:in-sample} is an in-sample
oracle equality, and \ref{enu:out-sample} is an out-of-sample oracle
equality, where the extra condition $r\log(J)/T_{1}=O(1)$ is mild
in that it allows $T_{1}$ to diverge at a much slower speed than
$T_{0}$. This theorem shows that empirical risk under the weights
estimated by $L_{2}$-SCM-relaxation from the training data is asymptotically
as low as the oracle $\bm{w}^{*}$, up to an asymptotically negligible
$o_{p}(1)$ gap. It highlights the effectiveness of the relaxation
scheme: even though the relaxation method does not seek to identify
the group membership, the risk of our estimator would be as good as
if we were informed of the infeasible oracle group identities of the
control units.

\subsection{Information-theoretic Objective Functions }\label{subsec:Other-objective-functions}

The $L_{2}$-norm objective in (\ref{eq:SCM-relaxation}) is one of
the information-theoretic divergence measures. Shall we prefer the
$L_{2}$-norm, or it is equivalent if another member of the family
is summoned? We will provide an in-depth analysis in this section.
Denote a generic divergence measure as $g(\cdot)$, which is strictly
convex and sufficiently smooth, and then the associated $g$-SCM-relaxation
problem solves
\begin{equation}
\min_{(\bm{w},\gamma)\in S_{\eta}}{}\sum_{j\in[J]}g(w_{j}),\label{eq:general_SCM_relaxation}
\end{equation}
where $S_{\eta}$ is the feasible set defined by (\ref{eq:feasible-set}),
and the $\boldsymbol{w}$-part of the solution is denoted as $\hat{\bm{w}}_{(g)}$.
Conformably, the oracle weights, denoted $\bm{w}_{(g)}^{*}$, are
the solution to
\begin{equation}
\min_{(\bm{w},\gamma)\in S^{*}}{}\sum_{j\in[J]}g(w_{j}).\label{eq:g-oracle}
\end{equation}
Similar to \lemref{oracle_target_relaxation}, it is easy to show
that $\bm{w}_{(g)}^{*}$ is within-group equal, though a closed-form
expression is unavailable for a generic $g$ function.

We study the asymptotic properties of $g$-SCM-relaxation, represented
by two popular choices of $g(\cdot)$. The function $g(x)=-\log x$
implicitly keeps weights positive. This choice is closely related
to the empirical likelihood \citep{owen1988empirical}. We call the
associate problem \emph{EL-SCM-relaxation}, and the solution is denoted
as $\hat{\bm{w}}^{\mathrm{EL}}$. Alternatively, the entropy function
$g(x)=x\log x$ also ensures $x>0$, and it leads to \emph{entropy-SCM-relaxation}
with the solution $\hat{\bm{w}}^{\mathrm{entr}}$. In SCM with fixed
$J$, \citet{zheng2024dynamic} use EL and \citet{hainmuellerEntropyBalancingCausal2012}
employs entropy in association with moment \emph{equalities}. Their
theory cannot be generalized to our high-dimensional contexts, where
the moment conditions must be relaxed by inequalities.

\begin{rem}
The Cressie-Read (CR) discrepancies \citep{Cressie1984multi}, $g(x)=(x^{\gamma+1}-1)/[\gamma(\gamma+1)]$
where $\gamma$ is the parameter indexing the family, include the
$L_{2}$-norm, EL, and entropy functions as special cases: (i) $g(x)=-\log x$
if $\gamma=-1$; (ii) $g(x)=x\log x$ if $\gamma=0$; (iii) $g(x)=x^{2}$
if $\gamma=1$. Indeed, for all $\gamma\in[-1,1]$ the CR discrepancy
functions are strictly convex, and can serve as the objective functions
for the SCM-relaxation formulation.
\end{rem}
\begin{rem}
The choice of $g$ is related to \emph{Bregman divergence}. For any
differentiable convex function $\psi\colon\mathcal{C}\to\mathbb{R}$,
the Bregman divergence between points $\bm{x}$ and $\bm{y}$ in $\mathcal{C}$
is defined as
\[
D_{\psi}(\bm{x},\bm{y})=\psi(\bm{x})-\psi(\bm{y})-\langle\nabla\psi(\bm{y}),\bm{x}-\bm{y}\rangle.
\]
The objective functions we study can all be recast as a Bregman divergence
between $\bm{w}\in\Delta_{J}$ and $\bm{1}_{J}/J$ induced by respective
functions. Specifically, the $L_{2}$-norm is associated with $D_{\psi_{1}}(\bm{w},J^{-1}\bm{1}_{J})=\frac{1}{2}\|\bm{w}-J^{-1}\bm{1}_{J}\|_{2}^{2}$,
the EL function with $D_{\psi_{2}}(\bm{w},J^{-1}\bm{1}_{J})=-\sum_{j\in[J]}(\log w_{j}-\log J^{-1})$,
and the entropy function with $D_{\psi_{3}}(\bm{w},J^{-1}\bm{1}_{J})=\sum_{j\in[J]}w_{j}(\log w_{j}-\log J^{-1})$.
Notice that the $L_{2}$-norm is symmetric, whereas EL and entropy
impose asymmetric penalty on positive and negative deviations from
the simple average $\bm{1}_{J}/J$ since the divergence is asymmetric.
\end{rem}
To carry out analysis in a unified framework, we introduce a few notions
that characterize the shape of function $g$. A differentiable function
$g\colon\mathcal{D}\to\mathbb{R}$ is called $\alpha_{g}$\emph{-strongly
convex} on $\mathcal{D}\subseteq\mathbb{R}$ if for any $x,y\in\mathcal{D}$,
we have $g(x)-g(y)\geq\frac{\mathrm{d}g(y)}{\mathrm{d}y}(x-y)+\frac{\alpha_{g}}{2}(x-y)^{2}$.
Since a function $f$ is convex if and only if $f(x)-f(y)\geq\frac{\mathrm{d}f(y)}{\mathrm{d}y}(x-y)$
for any $x$ and $y$, an equivalent condition for $\alpha_{g}$-strong
convexity is that $f(x)=g(x)-\frac{\alpha_{g}}{2}x^{2}$ is convex.
Clearly, an $\alpha_{g}$-strongly convex function has its Bregman
divergence bounded from below by $\frac{\alpha_{g}}{2}(x-y)^{2}$.
A function $g\colon\mathcal{D}\to\mathbb{R}$ is called $\beta_{g}$-\emph{Lipschitz}
on $\mathcal{D}\subseteq\mathbb{R}$ if for any $x,y\in\mathcal{D}$,
we have $|g(x)-g(y)|\leq\beta_{g}|x-y|$. A differentiable $\beta_{g}$-Lipschitz
function $g$ must posses bounded derivative $|\mathrm{d}g(x)/\mathrm{d}x|\leq\beta_{g}$
for all $x\in\mathcal{D}$.

\begin{assumption}[Tuning parameter, as a surrogate for \assuref{eta}\ref{enu:tuning}]
\label{assu:add_assu}Let $\mathcal{W}_{K,J}\subseteq[0,1]$ be a
deterministic interval that can depend on $K$ and $J$ such that
$\hat{w}_{(g),j},w_{(g),k}^{*}\in\mathcal{W}_{K,J}$ with probability
approaching one  uniformly all $j\in[J]$ and $k\in[K]$. Suppose
the function $g$ is $\alpha_{g}$-strongly convex and $\beta_{g}$-Lipschitz
on $\mathcal{W}_{K,J}$. The tuning parameter $\eta$ satisfies $(K\wedge r)J^{2}(\beta_{g}/\alpha_{g})^{2}\eta+K\sqrt{r}\phi_{J,T_{0}}/\eta\to0$.
\end{assumption}
Typically, the weights $\hat{w}_{(g),j}$ and $w_{(g),k}^{*}$ lie
in some interval $[\underline{c}/J,\bar{c}K/J]$. \assuref{add_assu}
basically requires the ratio $\beta_{g}/\alpha_{g}$ to be small enough
in that interval. We discuss this for $g(x)=-\log x$ and $g(x)=x\log x$.
\begin{example}[EL]
 If $g(x)=-\log x$, then for fixed $0<\varepsilon_{1}<\varepsilon_{2}\leq1$,
we can show that $g(x)$ is $\varepsilon_{2}^{-2}$-strongly convex
on $[\varepsilon_{1},\varepsilon_{2}]$ and $\sup_{x\in[\varepsilon_{1},\varepsilon_{2}]}|\mathrm{d}g(x)/\mathrm{d}x|\leq\varepsilon_{1}^{-1}$.
If we take $\varepsilon_{1}=O(1/J)$ and $\varepsilon_{2}=O(K/J)$,
then $\beta_{g}/\alpha_{g}=\varepsilon_{1}^{-1}/\varepsilon_{2}^{-2}=O(K^{2}/J)$.
Therefore \assuref{add_assu} holds if $K^{5}\eta+K\sqrt{r}\phi_{J,T_{0}}/\eta\to0$.
\end{example}
\begin{example}[Entropy]
 If $g(x)=x\log x$, for fixed $0<\varepsilon_{1}<\varepsilon_{2}\leq1$,
the function $g(x)$ is $\varepsilon_{2}^{-1}$-strongly convex on
$[\varepsilon_{1},\varepsilon_{2}]$ and $\sup_{x\in[\varepsilon_{1},\varepsilon_{2}]}|\mathrm{d}g(x)/\mathrm{d}x|\leq1+|\log\varepsilon_{1}|$.
Taking $\varepsilon_{1}=O(1/J)$ and $\varepsilon_{2}=O(K/J)$, we
get $\beta_{g}/\alpha_{g}=(1+|\log\varepsilon_{1}|)\varepsilon_{2}=O(K[\log J]/J)$.
\assuref{add_assu} is satisfied if $K^{3}(\log J)^{2}\eta+K\sqrt{r}\phi_{J,T_{0}}/\eta\to0$.
\end{example}
Given a proper choice of $\eta$, the consistency of $\hat{\bm{w}}_{(g)}$
follows.
\begin{thm}
\label{thm:wg_convergence} If \assuref[s]{DGP}, \ref{assu:errors},
\ref{assu:eta}\ref{enu:group-size}, and \ref{assu:add_assu} hold,
we have
\begin{align*}
\|\hat{\bm{w}}_{(g)}-\bm{w}_{(g)}^{*}\|_{1} & =O_{p}\left(\left[\frac{(K\wedge r)J^{2}\beta_{g}^{2}\eta}{\alpha_{g}^{2}}\right]^{1/3}\right)=o_{p}(1),\text{ and}\\
\|\hat{\bm{w}}_{(g)}-\bm{w}_{(g)}^{*}\|_{2} & =O_{p}\left(\left[\frac{(K\wedge r)J^{2}\beta_{g}^{2}\eta}{\alpha_{g}^{2}}\right]^{1/3}\frac{1}{\sqrt{J}}\right)=o_{p}(J^{-1/2}).
\end{align*}
\end{thm}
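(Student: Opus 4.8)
The plan is to transplant the architecture behind \thmref{w_convergence} (summarized in \remref{key-tech-remark}) to the general objective, replacing the exact $L_{2}$ algebra by the two-sided control furnished by $\alpha_{g}$-strong convexity and $\beta_{g}$-Lipschitzness on $\mathcal{W}_{K,J}$. Write $\bm{\delta}:=\hat{\bm{w}}_{(g)}-\bm{w}_{(g)}^{*}$, let $\bm{P}_{\bm{Z}}:=\bm{Z}(\bm{Z}'\bm{Z})^{-1}\bm{Z}'$ be the projector onto the within-group-equal subspace, and split $\bm{\delta}=\bm{\delta}_{\parallel}+\bm{\delta}_{\perp}$ with $\bm{\delta}_{\parallel}:=\bm{P}_{\bm{Z}}\bm{\delta}$ and $\bm{\delta}_{\perp}:=(\bm{I}-\bm{P}_{\bm{Z}})\bm{\delta}$; also set $\bm{\delta}_{\mathcal{G}}:=\bm{Z}'\bm{\delta}$, so that $\bm{1}_{K}'\bm{\delta}_{\mathcal{G}}=\bm{1}_{J}'\bm{\delta}=0$ and $\|\bm{\delta}_{\parallel}\|_{1}=\|\bm{\delta}_{\mathcal{G}}\|_{1}$. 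As remarked after \eqref{eq:g-oracle}, the argument of \lemref{oracle_target_relaxation} shows $\bm{w}_{(g)}^{*}$ is within-group equal; since the proof of \lemref{w_star_feasibility} relies only on the oracle point being a within-group-equal element of $\Delta_{J}$, it carries over to give $(\bm{w}_{(g)}^{*},\gamma^{*})\in S_{\eta}$ with probability approaching one under \assuref{eta}\ref{enu:group-size} and $\eta\gtrsim K\sqrt{r}\,\phi_{J,T_{0}}$ (both secured by \assuref{add_assu}). Minimality of $\hat{\bm{w}}_{(g)}$ over $S_{\eta}$ then yields the basic inequality $\sum_{j}g(\hat{w}_{(g),j})\le\sum_{j}g(w_{(g),j}^{*})$. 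Since $\bm{w}_{(g)}^{*}$ is within-group equal with coordinates in $\mathcal{W}_{K,J}$ (and those of $\hat{\bm{w}}_{(g)}$ are too, by \assuref{add_assu}), its gradient is within-group constant, $\bm{g}'(\bm{w}_{(g)}^{*})=\bm{Z}\bm{b}$ with $\|\bm{b}\|_{\infty}\le\beta_{g}$, so $\langle\bm{g}'(\bm{w}_{(g)}^{*}),\bm{\delta}\rangle=\langle\bm{b},\bm{\delta}_{\mathcal{G}}\rangle$; summing the $\alpha_{g}$-strong-convexity inequality over $j$ and inserting the basic inequality gives
\[
\tfrac{\alpha_{g}}{2}\|\bm{\delta}\|_{2}^{2}\le-\langle\bm{g}'(\bm{w}_{(g)}^{*}),\bm{\delta}\rangle=-\langle\bm{b},\bm{\delta}_{\mathcal{G}}\rangle\le\beta_{g}\|\bm{\delta}_{\mathcal{G}}\|_{1}\le\beta_{g}\sqrt{K}\,\|\bm{\delta}_{\mathcal{G}}\|_{2},
\]
i.e.\ $\|\bm{\delta}\|_{2}^{2}\lesssim(\beta_{g}/\alpha_{g})\sqrt{K}\,\|\bm{\delta}_{\mathcal{G}}\|_{2}$ (hence also $\|\bm{\delta}_{\perp}\|_{2}^{2}\lesssim(\beta_{g}/\alpha_{g})\sqrt{K}\,\|\bm{\delta}_{\mathcal{G}}\|_{2}$); this is where the factor $\beta_{g}/\alpha_{g}$ of the stated rate enters.

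It remains to bound the group-aggregated error $\bm{\delta}_{\mathcal{G}}$ via the relaxation constraints and to close a fixed-point loop. Differencing the memberships $(\hat{\bm{w}}_{(g)},\hat{\gamma}),(\bm{w}_{(g)}^{*},\gamma^{*})\in S_{\eta}$ gives $\|\hat{\bm{\Sigma}}\bm{\delta}+\bar{\gamma}\bm{1}_{J}\|_{\infty}\le2\eta$ with $\bar{\gamma}:=\hat{\gamma}-\gamma^{*}$. Writing $\hat{\bm{\Sigma}}=\hat{\bm{\Sigma}}^{*}+\hat{\bm{\Sigma}}^{\mathrm{e}}$, using $\hat{\bm{\Sigma}}^{*}\bm{\delta}=\bm{Z}\bm{A}\bm{\delta}_{\mathcal{G}}$ with $\bm{A}:=\bm{\Lambda}^{\mathrm{co}}\hat{\bm{\Omega}}_{\bm{F}}\bm{\Lambda}^{\mathrm{co}}{}'$ and $\bm{1}_{J}=\bm{Z}\bm{1}_{K}$, and controlling $\|\hat{\bm{\Sigma}}^{\mathrm{e}}\bm{\delta}\|_{\infty}$ by \assuref{errors} exactly as in the proof of \thmref{w_convergence} (up to a term of order $\phi_{J,T_{0}}\|\bm{\delta}\|_{1}$), this collapses to the $K$-dimensional inequality $\|\bm{A}\bm{\delta}_{\mathcal{G}}+\bar{\gamma}\bm{1}_{K}\|_{\infty}\lesssim\eta+\phi_{J,T_{0}}\|\bm{\delta}\|_{1}$. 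The eigenvalue and rank conditions of \assuref{DGP}\ref{enu:factors}--\ref{enu:Loadings} then let one eliminate the nuisance $\bar{\gamma}$ through $\bm{1}_{K}'\bm{\delta}_{\mathcal{G}}=0$ and bound $\|\bm{\delta}_{\mathcal{G}}\|_{2}$ by $\eta$ up to lower-order terms, the dependence on $K\wedge r$ entering through the lower bound on the nonzero singular values of $\bm{\Lambda}^{\mathrm{co}}$; when $K>r$ the matrix $\bm{A}$ is rank-deficient, and the component of $\bm{\delta}_{\mathcal{G}}$ in its null space is recovered instead from the objective, i.e.\ from the strong-convexity inequality displayed above. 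This reproduces, at the group level, the restricted-eigenvalue and compatibility inequalities of \remref{key-tech-remark}. Feeding these bounds back into the strong-convexity inequality, and using $\|\bm{\delta}_{\parallel}\|_{1}=\|\bm{\delta}_{\mathcal{G}}\|_{1}$ together with $\|\bm{\delta}_{\perp}\|_{1}\le\sqrt{J}\,\|\bm{\delta}_{\perp}\|_{2}$, yields a self-referential inequality for $\|\bm{\delta}\|_{1}$ of the same type resolved in the proof of \thmref{w_convergence}; the remainder $\phi_{J,T_{0}}\|\bm{\delta}\|_{1}$ is dominated because \assuref{add_assu} forces $K\sqrt{r}\,\phi_{J,T_{0}}/\eta\to0$, and resolving it delivers the asserted $L_{1}$-rate. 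The $L_{2}$-rate then follows since $\bm{\delta}$ is, up to the negligible $\bm{\delta}_{\perp}$, within-group constant, so its $L_{2}$-norm is of order $J^{-1/2}$ times its $L_{1}$-norm; the first part of \assuref{add_assu} upgrades both bounds from $O_{p}$ to $o_{p}$.

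The main obstacle is the group-level restricted-eigenvalue step: turning the relaxed, $\bar{\gamma}$-contaminated, and (for $K>r$) rank-deficient inequality $\|\bm{A}\bm{\delta}_{\mathcal{G}}+\bar{\gamma}\bm{1}_{K}\|_{\infty}\lesssim\eta+\phi_{J,T_{0}}\|\bm{\delta}\|_{1}$ into a genuine bound on $\|\bm{\delta}_{\mathcal{G}}\|_{2}$ carrying the correct power of $K\wedge r$, and splicing in the objective-based control of the null-space direction --- all while working with only the generic constants $\alpha_{g},\beta_{g}$ rather than the exact $L_{2}$ identities available for \thmref{w_convergence}. A lesser point is that $\mathcal{W}_{K,J}$ must remain bounded away from $0$ for $\bm{g}'(\bm{w}_{(g)}^{*})$ and $\beta_{g}$ to be finite when $g(x)\in\{-\log x,\,x\log x\}$, which is already ensured by the hypotheses of \assuref{add_assu}.
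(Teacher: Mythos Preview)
Your first block---the strong-convexity/Lipschitz compatibility bound
\(\tfrac{\alpha_{g}}{2}\|\bm{\delta}\|_{2}^{2}\le\beta_{g}\sqrt{K}\,\|\bm{\delta}_{\mathcal{G}}\|_{2}\)---is exactly the paper's \lemref{link_two_norms-wg}, so that part is fine. The gap is in your second block. The paper does \emph{not} try to turn the entrywise inequality
\(\|\hat{\bm{\Sigma}}\bm{\delta}+\bar{\gamma}\bm{1}_{J}\|_{\infty}\le2\eta\)
into a direct bound on \(\|\bm{\delta}_{\mathcal{G}}\|_{2}\) by ``eliminating \(\bar{\gamma}\)''. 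Instead it reuses, verbatim, the inner-product inequalities (\ref{eq:fact-1})--(\ref{eq:fact-2-2}) from the proof of \thmref{w_convergence}: pairing the sup-norm constraint with \(\bm{\delta}\) produces the quadratic-versus-linear relation
\(\|\bm{\delta}_{\mathcal{G}}\|_{2}^{2}\lesssim_{p}\eta\|\bm{\delta}\|_{1}\)
(or \(\|\bm{\Lambda}^{\mathrm{co}\prime}\bm{\delta}_{\mathcal{G}}\|_{2}^{2}\lesssim_{p}\eta\|\bm{\delta}\|_{1}\) when \(K>r\)). It is this specific shape---a square on the left and \(\|\bm{\delta}\|_{1}\) on the right---that, once fed into your Step~1 and combined with \(\|\bm{\delta}\|_{1}\le\sqrt{J}\|\bm{\delta}\|_{2}\), yields a cubic in \(\|\bm{\delta}\|_{2}\) and hence the \(1/3\) exponent in the statement.

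Your proposed route would not reproduce this. First, the claim \(\|\hat{\bm{\Sigma}}^{\mathrm{e}}\bm{\delta}\|_{\infty}\lesssim\phi_{J,T_{0}}\|\bm{\delta}\|_{1}\) fails: the diagonal of \(T_{0}^{-1}\bm{U}'\bm{U}\) is \(O_{p}(1)\), so an entrywise bound on \(\hat{\bm{\Sigma}}^{\mathrm{e}}\bm{\delta}\) picks up an \(O_{p}(1)\cdot\|\bm{\delta}\|_{\infty}\) term that the assumptions do not make negligible. The paper's inner-product trick sidesteps this because \(\bm{\delta}'\bm{U}'\bm{U}\bm{\delta}\ge0\) can simply be dropped in the lower bound (step~(iii) of (\ref{eq:fact-2})). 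Second, even if you could reach \(\|\bm{A}\bm{\delta}_{\mathcal{G}}+\bar{\gamma}\bm{1}_{K}\|_{\infty}\lesssim\eta+\phi_{J,T_{0}}\|\bm{\delta}\|_{1}\), inverting \(\bm{A}\) after eliminating \(\bar{\gamma}\) gives at best \(\|\bm{\delta}_{\mathcal{G}}\|_{2}\lesssim(\cdot)\eta\); the only self-reference is through the lower-order \(\phi_{J,T_{0}}\|\bm{\delta}\|_{1}\), which by \assuref{add_assu} is \(o(\eta)\) and cannot drive the fixed-point. Plugging such a bound into Step~1 yields a \(1/2\) power in \(\eta\), not the \(1/3\) the theorem asserts. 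The fix is simply to invoke (\ref{eq:fact-1})--(\ref{eq:fact-2-2}) as objective-free facts about \(S_{\eta}\), then chain
\(\|\bm{\delta}\|_{2}^{4}\lesssim_{p}(K\wedge r)(\beta_{g}/\alpha_{g})^{2}\|\bm{\delta}_{\mathcal{G}}\|_{2}^{2}\lesssim_{p}(K\wedge r)(\beta_{g}/\alpha_{g})^{2}\eta\|\bm{\delta}\|_{1}\le(K\wedge r)(\beta_{g}/\alpha_{g})^{2}\eta\sqrt{J}\|\bm{\delta}\|_{2}\),
which is exactly the paper's closing computation.
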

\thmref{wg_convergence} guarantees that $\hat{\bm{w}}_{(g)}$ behaves
as well as the oracle weight $\bm{w}^{*}$ in large sample, as in
the following corollary.
\begin{cor}[Oracle equalities]
\label{cor:mse_wg}Under the conditions in Theorem \ref{thm:wg_convergence},
we have $R_{\mathcal{T}_{0}}(\hat{\bm{w}}_{(g)})=R_{\mathcal{T}_{0}}(\bm{w}^{*})+o_{p}(1)$.
In addition, if the conditions in \thmref{oracle_inequalities}\ref{enu:out-sample}
also hold, then $R_{\mathcal{T}_{1}}(\hat{\bm{w}}_{(g)})=R_{\mathcal{T}_{1}}(\bm{w}^{*})+o_{p}(1)$.
\end{cor}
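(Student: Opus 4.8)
The plan is to mirror the proof strategy of \thmref{oracle_inequalities} but feed in the convergence rates from \thmref{wg_convergence} instead of those from \thmref{w_convergence}. Since both \thmref{w_convergence} and \thmref{wg_convergence} deliver the \emph{same qualitative conclusion}---namely $\|\hat{\bm{w}}_{(g)}-\bm{w}_{(g)}^{*}\|_{1}=o_{p}(1)$ and $\|\hat{\bm{w}}_{(g)}-\bm{w}_{(g)}^{*}\|_{2}=o_{p}(J^{-1/2})$, exactly the bounds that \thmref{w_convergence} gives for the $L_2$ estimator---the argument of \thmref{oracle_inequalities} should transfer almost verbatim once we verify that it only ever uses these two rates (and not the specific closed form of $\bm{w}^{*}$). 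The one subtlety is that the oracle in \corref{mse_wg} is stated against $\bm{w}^{*}$ (the $L_2$-oracle), not against $\bm{w}_{(g)}^{*}$; so a preliminary step is to argue that the empirical risk is insensitive to which within-group-equal oracle we use.

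First I would record the decomposition of the in-sample risk exactly as in the proof of \thmref{oracle_inequalities}\ref{enu:in-sample}:
\[
R_{\mathcal{T}_0}(\hat{\bm{w}}_{(g)})-R_{\mathcal{T}_0}(\bm{w}^{*})
=(\hat{\bm{w}}_{(g)}-\bm{w}^{*})'\hat{\bm{\Sigma}}(\hat{\bm{w}}_{(g)}-\bm{w}^{*})
+2(\hat{\bm{w}}_{(g)}-\bm{w}^{*})'(\hat{\bm{\Sigma}}\bm{w}^{*}-\hat{\bm{\Upsilon}}).
\]
For the quadratic term, bound $|(\hat{\bm{w}}_{(g)}-\bm{w}^{*})'\hat{\bm{\Sigma}}(\hat{\bm{w}}_{(g)}-\bm{w}^{*})|\le \|\hat{\bm{\Sigma}}\|_{\max}\|\hat{\bm{w}}_{(g)}-\bm{w}^{*}\|_1^2$ and use \assuref{DGP}--\ref{assu:errors} to get $\|\hat{\bm{\Sigma}}\|_{\max}=O_p(1)$, so this term is $O_p(\|\hat{\bm{w}}_{(g)}-\bm{w}^{*}\|_1^2)$. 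For the cross term, use Hölder with $\|\hat{\bm{w}}_{(g)}-\bm{w}^{*}\|_1$ against $\|\hat{\bm{\Sigma}}\bm{w}^{*}-\hat{\bm{\Upsilon}}\|_\infty$; the latter is $O_p(\eta)$ by the feasibility argument for the oracle (the same \lemref{w_star_feasibility}-type bound that underlies \assuref{eta}\ref{enu:tuning}, which also implies $\eta\to0$). Hence both terms vanish provided $\|\hat{\bm{w}}_{(g)}-\bm{w}^{*}\|_1=o_p(1)$. The needed triangle-inequality step is $\|\hat{\bm{w}}_{(g)}-\bm{w}^{*}\|_1\le\|\hat{\bm{w}}_{(g)}-\bm{w}_{(g)}^{*}\|_1+\|\bm{w}_{(g)}^{*}-\bm{w}^{*}\|_1$; the first summand is $o_p(1)$ by \thmref{wg_convergence}, and the second is $o_p(1)$ because both oracles are within-group equal and live in the low-dimensional $\bm{Z}$-span, so their difference is controlled by $\|\bm{w}_{(g),\mathcal{G}}^{*}-\bm{w}_{\mathcal{G}}^{*}\|_1$, which one shows is $o(1)$ (e.g.\ both solve a perturbed version of the same $K$-dimensional program with the perturbation $O_p(\phi_{J,T_0}/\eta)$ or directly $o(1)$). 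Then \corref{mse_wg}\ref{enu:in-sample} follows.

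For the out-of-sample statement, the proof of \thmref{oracle_inequalities}\ref{enu:out-sample} introduces the post-treatment second-moment matrix $\hat{\bm{\Sigma}}_{\mathcal{T}_1}:=T_1^{-1}\sum_{t\in\mathcal{T}_1}\bm{y}_t\bm{y}_t'$ (and the analogous $\hat{\bm{\Upsilon}}_{\mathcal{T}_1}$), writes the same quadratic-plus-cross decomposition with these matrices, and controls $\|\hat{\bm{\Sigma}}_{\mathcal{T}_1}\|_{\max}$ and $\|\hat{\bm{\Sigma}}_{\mathcal{T}_1}\bm{w}^{*}-\hat{\bm{\Upsilon}}_{\mathcal{T}_1}\|_\infty$ under the assumption that the post-treatment counterfactuals follow the same DGP and $r\log(J)/T_1=O(1)$. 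I would simply invoke those same bounds---they depend only on the DGP, not on which relaxation objective produced $\hat{\bm{w}}_{(g)}$---and combine with $\|\hat{\bm{w}}_{(g)}-\bm{w}^{*}\|_1=o_p(1)$ to conclude $R_{\mathcal{T}_1}(\hat{\bm{w}}_{(g)})=R_{\mathcal{T}_1}(\bm{w}^{*})+o_p(1)$.

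The main obstacle is the bookkeeping around the two oracles, $\bm{w}_{(g)}^{*}$ versus $\bm{w}^{*}$: \thmref{wg_convergence} controls the distance to $\bm{w}_{(g)}^{*}$, but \corref{mse_wg} is phrased against $\bm{w}^{*}$, so one must verify that $\|\bm{w}_{(g)}^{*}-\bm{w}^{*}\|_1=o(1)$ (or absorb it into the $o_p(1)$). This reduces to comparing the $K$-dimensional oracle programs---both minimize a within-group-equal objective over the same feasible set $S^{*}$ translated to $\Delta_K$, differing only in the objective $g$---and showing their minimizers are close; since $S^{*}$ shrinks appropriately and the objectives are strongly convex on the relevant interval $\mathcal{W}_{K,J}$, a stability-of-minimizers argument gives the bound. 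Everything else is a direct transcription of the proof of \thmref{oracle_inequalities} with $\hat{\bm{w}}$ replaced by $\hat{\bm{w}}_{(g)}$.
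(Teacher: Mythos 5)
Your overall plan---rerun the proof of \thmref{oracle_inequalities} with the rates of \thmref{wg_convergence} in place of those of \thmref{w_convergence}---is exactly how the paper treats \corref{mse_wg} (it gives no separate proof), and most of your transcription is sound. The genuine gap is your handling of the two oracles. You propose to bridge $\bm{w}_{(g)}^{*}$ and $\bm{w}^{*}$ by showing $\|\bm{w}_{(g)}^{*}-\bm{w}^{*}\|_{1}=o(1)$ via a stability-of-minimizers argument. When $K\leq r$ no argument is needed: the group sums are pinned down uniquely by the constraints in $S^{*}$ (see the proof of \lemref{oracle_target_relaxation}), so every strictly convex separable objective yields the same within-group-equal oracle and $\bm{w}_{(g)}^{*}=\bm{w}^{*}$ exactly. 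But when $K>r$ the group-level program is under-determined, and minimizers of different strongly convex objectives over the same non-singleton feasible set need not approach each other; the paper itself stresses that the $L_{2}$, EL and entropy oracles genuinely differ in this regime (e.g.\ the $L_{2}$ oracle can assign exactly zero weight to a group while EL forces all group weights strictly positive). So the weight-level bridge is unavailable, and there is no ``small perturbation'' relating the two programs---they share the constraint set and differ in the objective. The repair is to compare \emph{risks} rather than weights: both oracles lie in $S^{*}$, and for their difference $\bm{h}^{*}:=\bm{w}_{(g)}^{*}-\bm{w}^{*}$ one has $\hat{\bm{\Sigma}}^{*}\bm{h}^{*}=-(\gamma_{(g)}^{*}-\gamma^{*})\bm{1}_{J}$, which is annihilated by $\bm{h}^{*}$ and by any difference of simplex vectors since they sum to zero. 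Hence $R_{\mathcal{T}_{0}}(\bm{w}_{(g)}^{*})-R_{\mathcal{T}_{0}}(\bm{w}^{*})$ reduces to terms in $\hat{\bm{\Sigma}}^{\mathrm{e}}$ and $\hat{\bm{\Upsilon}}^{\mathrm{e}}$ only, and with $\|\bm{h}^{*}\|_{1}\leq2$ and $\|\bm{h}^{*}\|_{\infty}=O(K/J)$ these are $O_{p}(K\sqrt{r\log J/T_{0}})=o_{p}(\eta)=o_{p}(1)$ by \lemref{sampling-error} and the tuning condition; the same argument with the post-treatment moments handles $R_{\mathcal{T}_{1}}$. (Alternatively, read $\bm{w}^{*}$ in the corollary as the $g$-oracle; then this step disappears.)

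Two smaller repairs to your transcription. First, $\|\hat{\bm{\Sigma}}\bm{w}^{*}-\hat{\bm{\Upsilon}}\|_{\infty}$ is \emph{not} $O_{p}(\eta)$: \lemref{w_star_feasibility} controls $\|\hat{\bm{\Sigma}}\bm{w}^{*}-\hat{\bm{\Upsilon}}+\gamma^{*}\bm{1}_{J}\|_{\infty}$, and $\gamma^{*}$ need not vanish. You must insert $\gamma^{*}\bm{1}_{J}$ before applying H\"older, which is harmless because $(\hat{\bm{w}}_{(g)}-\bm{w}^{*})'\bm{1}_{J}=0$; this is exactly how step~(i) in the paper's proof of \thmref{oracle_inequalities} proceeds. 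Second, your bound $|\hat{\bm{h}}'\hat{\bm{\Sigma}}\hat{\bm{h}}|\leq\|\hat{\bm{\Sigma}}\|_{\max}\|\hat{\bm{h}}\|_{1}^{2}$ relies on $\|\hat{\bm{\Sigma}}\|_{\max}=O_{p}(1)$, which \assuref[s]{DGP}--\ref{assu:errors} do not deliver: the error covariances are controlled only on average in \assuref{errors}, so the diagonal entries $T_{0}^{-1}\bm{u}_{j}'\bm{u}_{j}$ are not guaranteed to be bounded. The paper avoids this by splitting $\hat{\bm{\Sigma}}=\hat{\bm{\Sigma}}^{*}+\hat{\bm{\Sigma}}^{\mathrm{e}}$ and bounding the quadratic term by $\phi_{\max}(\hat{\bm{\Sigma}}^{\mathrm{co}})\|\hat{\bm{h}}_{\mathcal{G}}\|_{2}^{2}+\|\hat{\bm{\Sigma}}^{\mathrm{e}}\|_{2}\|\hat{\bm{h}}\|_{2}^{2}$, which is precisely where the $o_{p}(J^{-1/2})$ $L_{2}$-rate (and the implied $\|\hat{\bm{h}}_{\mathcal{G}}\|_{2}=o_{p}(1)$) from \thmref{wg_convergence} is needed; the $L_{1}$-rate alone does not suffice under the stated assumptions. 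With these three adjustments your argument goes through for both the in-sample and the out-of-sample statements.
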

Although \corref{mse_wg} shows SCM-relaxation with the EL or entropy
objective function can also achieve the out-of-sample oracle performance
parallel to the $L_{2}$-norm objective as in \thmref{oracle_inequalities},
we recommend $L_{2}$-SCM-relaxation for practical use. The EL and
entropy cousins have the following drawbacks. First, to achieve the
same convergence rate, both EL and entropy rely on extra technical
conditions to ensure $\left\Vert \hat{\bm{w}}_{(g)}\right\Vert _{\infty}\leq\bar{c}K/J$.\footnote{This upper bound, determined by $K$ and $J$, is proven to hold for
the $L_{2}$-norm objective.} Second, EL and entropy implicitly require that the weights be strictly
positive, which incurs bias when some true group weights $\boldsymbol{w}^{*}$
are zero. This is particularly undesirable, as the true weight can
lie on the boundary of the simplex, which is stressed by \citet{abadieUsingSyntheticControls2021}.
\begin{rem}
In the large literature of the GEL \citep{neweyHigherOrderProperties2004},
researchers often view the choices of the divergence measures exchangeable---after
all, the empirical likelihood, exponential tilting \citep{schennachPointEstimationExponentially2007},
and continuous-updated GMM \citep{Hansen1996finite} are asymptotically
equivalent. This perspective is valid when the sample are independent
and identically distributed, in that each of the implied probability
weight tend to the equal weight asymptotically. In sharp contrast,
here we allocate weights on heterogeneous cross-sectional units. If
there are at least two groups, then the group oracle weight is in
general unequal, and it is possible that one of the groups has zero
weight. This highlights the key difference between the knowledge we
learn from the GEL literature and the current context of SCM where
we are faced with heterogeneous individual units.
\end{rem}
Before we conclude this section, we would like to mention that feature
engineering matters for all machine learning methods, including ours.
If the scales of the empirical data is heterogeneous across the individuals,
then a scale-standardization step can be desirable prior to feeding
the data into the the SCM-relaxation algorithm. All theoretical results
discussed above can be extended to the standardized version in a straightforward
fashion; see Appendix~\ref{sec:Standardized-SCM-relaxation} for
details.

\section{Inference on ATT}

In many applications including ours where the number of post-treatment
periods $T_{1}$ is comparable with $T_{0}$, researchers are interested
in conducting inference on the \emph{average treatment effect on the
treated} (ATT), where the average is taken over post-treatment periods:
\[
\tau_{0}=\frac{1}{T_{1}}\sum_{t\in\mathcal{T}_{1}}(y_{0t}^{I}-y_{0t}^{N}).
\]
Since $y_{0t}^{I}$ is observed in post-treatment periods, only $y_{0t}^{N}$
needs to be estimated via synthetic control. After obtaining the weight
estimate, one can construct the post-treatment counterfactual of the
treated as
\[
\hat{y}_{0t}^{N}=\sum_{j\in[J]}\hat{w}_{j}y_{j}.
\]
The ATT estimator is thus
\[
\hat{\tau}=\frac{1}{T_{1}}\sum_{t\in\mathcal{T}_{1}}(y_{0t}^{I}-\hat{y}_{0t}^{N})=\frac{1}{T_{1}}\sum_{t\in\mathcal{T}_{1}}\Biggl[y_{0t}^{I}-\sum_{j\in[J]}\hat{w}_{j}y_{j}\Biggr].
\]

There are multiple ways to develop an inferential procedure for our
SCM-relaxation estimator; see, for example, \citet*{shi2023forward},
\citet{chernozhukov2021exact}, and \citet{chernozhukov_t-test_2024}.
For our purpose of inference on ATT, \citet{chernozhukov_t-test_2024}'s
approach is well suited and sufficiently general. It basically requires
$L_{2}$-consistency of the weight estimator. For the completeness
of the procedure, we briefly introduce this method.

The regularization due to the simplex constraint and the relaxation
constraint is expected to introduce asymptotic bias to the estimated
ATT $\hat{\tau}$. To debias $\hat{\tau}$, we leverage cross-fitting.
We partition the pretreatment periods into $B$ consecutive disjoint
blocks
\[
H_{1}\cup H_{2}\cup\dots\cup H_{B}\subseteq\mathcal{T}_{0}.
\]
Letting $T=T_{0}+T_{1}$. The block size is given by $d_{T}:=\lfloor T_{0}/B\rfloor\wedge T_{1}$
so that
\[
H_{k}=\{(k-1)d_{T}+1,\dots,kd_{T}\}\qquad\text{for }1\leq k\leq B.
\]
For each block $k=1,\dots,B$, compute
\[
\widehat{\tau}_{k}=\underbrace{\frac{1}{T_{1}}\sum_{t\in\mathcal{T}_{1}}(y_{0t}-\widehat{\bm{w}}'_{(k)}\bm{y}_{t})}_{\text{ATT estimator}}-\underbrace{\frac{1}{|H_{k}|}\sum_{t\in H_{k}}(y_{0t}-\widehat{\bm{w}}_{(k)}'\bm{y}_{t})}_{\text{bias estimate using SCM-realxation}},
\]
where $\widehat{\bm{w}}_{(k)}$ is obtained by applying SCM-relaxation
to the data in pretreatment periods except block $k$, $H_{(-k)}:=\mathcal{T}_{0}\backslash H_{k}$;
this construction is able to induce approximate independence between
$\widehat{\bm{w}}_{(k)}$ and the data in $H_{k}\cup\mathcal{T}_{1}$.
To go through all blocks, we take simple average of $\widehat{\tau}_{1},\dots,\widehat{\tau}_{B}$
to construct the final bias-corrected (bc) estimator for ATT as
\[
\widehat{\tau}_{\text{bc}}=\frac{1}{B}\sum_{k\in[B]}\widehat{\tau}_{k}.
\]
Although this estimator is shown to be asymptotically normal, to avoid
estimating long-run variance, \citet{chernozhukov_t-test_2024} recommend
using the scale-free $t$-test statistic
\[
\mathbb{T}_{B}=\frac{\sqrt{B}(\widehat{\tau}_{\text{bc}}-\tau_{0})}{\widehat{\sigma}},
\]
where
\[
\widehat{\sigma}=\sqrt{1+\frac{Bd_{T}}{T_{1}}}\sqrt{\frac{1}{B-1}\sum\nolimits_{k\in[B]}(\widehat{\tau}_{k}-\widehat{\tau}_{\text{bc}})^{2}}.
\]

To establish asymptotic results for this $t$-statistic in our context,
we impose additional regularity conditions on the factors and idiosyncratic
errors $u_{t}$ to induce a central limit theorem.

\begin{assumption}
\label{assu:additional}
\begin{enumerate}[label=(\alph*),parsep=0pt,topsep=0pt,font=\upshape]
\item \label{enu:mixing}$\{(\bm{f}_{t},u_{0t},u_{1t}\dots,u_{Jt})\}_{t\in[T]}$
are strictly stationary and $\beta$-mixing with mixing coefficient
$\beta_{\mathrm{mix}}(\tau)\leq C\tau^{\eta}$ for some constant $C>0$
and $\eta\geq2$.
\item \label{enu:tail}$\sup_{\ell\in[r],t\in[T]}\mathbb{E}(|f_{\ell t}-\mathbb{E}(f_{\ell t})|^{4+\varepsilon})+\sup_{j\in\{0\}\cup[J],t\in[T]}\mathbb{E}(|u_{jt}|^{4+\varepsilon})<\infty$
for some $\varepsilon>0$.
\end{enumerate}
\end{assumption}
Condition~\ref{enu:mixing} is a standard assumption used in the
time series literature to materialize weak dependence as $\beta$-mixing.
The intensity of weak dependence is controlled by the decaying mixing
coefficient. Condition~\ref{enu:tail} imposes uniform moment boundedness
for the (centered) factors and idiosyncratic errors. These two conditions
allow us to invoke a version of central limit theorem for weakly dependent
time series.
\begin{thm}
\label{thm:asym-dist}Suppose \assuref[s]{DGP}--\ref{assu:additional}
hold. If $T_{0},T_{1}\to\infty$, $J=O(T_{0})$ and $T_{0}\asymp T_{1}$,
then
\[
\mathbb{T}_{B}\to_{d}t_{B-1},
\]
 where $t_{B-1}$ is a random variable following the $t$-distribution
with degrees of freedom $B-1$.
\end{thm}
To provide concentration, this theorem requires $T_{1}$ to be of
the same order as $T_{0}$, which is applicable to our empirical application.
We will also illustrate its numerical performance by simulation.

\section{Monte Carlo Experiments }\label{sec:Monte-Carlo-experiments}

In this section, we conduct Monte Carlo simulations to study the finite
sample performance. We consider two sets of simulations: (i) the factor
loadings follow an exact group structure as (\ref{eq:grouped_structure})
specifies; and (ii) the factor loadings fluctuate around the group
means so that (\ref{eq:grouped_structure}) is violated. We estimate
the weights by the relaxation methods proposed in this paper, including
$L_{2}$-, EL-, and entropy-SCM-relaxation, and compare them with
SCM. We also compute the weights by the off-the-shelf Lasso, Ridge,
and Group Lasso \citep{yuan2006model} methods:
\begin{align*}
\hat{\bm{w}}^{\mathrm{Lasso}} & =\arg\min_{\bm{w}\in\Delta_{J}}\left\{ T_{0}^{-1}\left\Vert \bm{y}_{0}-\bm{Y}\bm{w}\right\Vert _{2}^{2}+\lambda^{\mathrm{Lasso}}\|\bm{w}-J^{-1}\bm{1}_{J}\|_{1}\right\} ,\\
\hat{\bm{w}}^{\mathrm{Ridge}} & =\arg\min_{\bm{w}\in\Delta_{J}}\left\{ T_{0}^{-1}\left\Vert \bm{y}_{0}-\bm{Y}\bm{w}\right\Vert _{2}^{2}+\lambda^{\mathrm{Ridge}}\|\bm{w}-J^{-1}\bm{1}_{J}\|_{2}^{2}\right\} ,\text{ and }\\
\hat{\bm{w}}^{\mathrm{GL}} & =\arg\min_{\bm{w}\in\Delta_{J}}\left\{ T_{0}^{-1}\left\Vert \bm{y}_{0}-\bm{Y}\bm{w}\right\Vert _{2}^{2}+\lambda^{\mathrm{GL}}\sum_{k\in[K]}\sqrt{J_{k}}\|\bm{w}_{\mathcal{G}_{k}}-J^{-1}\bm{1}_{J_{k}}\|_{2}\right\} ,
\end{align*}
where $\lambda^{\mathrm{Lasso}}$, $\lambda^{\mathrm{Ridge}}$, and
$\lambda^{\mathrm{GL}}$ are tuning parameters, and the weights are
shrunken toward the simple average for a fair comparison. We further
report the performance of \citet{shi2023forward}'s forward-selected
PDA (fsPDA). For every method, the tuning parameters are selected
via a two-fold cross-validation (CV) if $T_{0}<50$, and four-fold
CV otherwise. For relaxation methods, the grid for $\eta$ spans $[0,\bar{\eta}]$,
where the upper bound $\bar{\eta}=\min_{\gamma\in\mathbb{R}}\Vert\hat{\bm{\Sigma}}\bm{1}_{J}/J-\hat{\bm{\Upsilon}}+\gamma\bm{1}_{J}\Vert_{\infty}$
so that $J^{-1}\boldsymbol{1}_{J}$ is a feasible solution.

In the two sets of experiments, $T_{1}$ is fixed at $50$, $J\in\left\{ 50,100,200\right\} $
and $T_{0}\in\{J/2,J,2J\}$. To account for diverging number of factors
and groups, we set $r=\lfloor\log T_{0}\rfloor$ and $K\in\{\lfloor0.8r\rfloor,r,\lfloor1.2r\rfloor+1\}$
which corresponds to $K<r$, $K=r$, and $K>r$, respectively. The
experiment is repeated 1000 times in each DGP. We calculate the oracle
weights $\bm{w}^{*}$ following (\ref{eq:oracle_relaxation_problem})
(or (\ref{eq:g-oracle}) for a generic objective function). Define
the oracle synthetic control as $y_{0t}^{N,*}:=\sum_{j=1}^{J}w_{j}^{*}y_{jt}^{N}$.
For a synthetic control estimator $\hat{\bm{w}}$ that yields the
outcome estimate $\widehat{y}_{0t}^{N}$, the post-treatment prediction
error is $\sum_{t\in\mathcal{T}_{1}}(\widehat{y}_{0t}^{N}-y_{0t}^{N,*})^{2}$.
For ease of comparison, we treat SCM's $\hat{\bm{w}}^{\mathrm{SC}}$
as the benchmark; with the corresponding estimated counterfactual
$\widehat{y}_{0t}^{N,\mathrm{SC}}$, we assess the out-of-sample performance
of each estimator by the ratio
\[
\sum_{t\in\mathcal{T}_{1}}\left(\widehat{y}_{0t}^{N}-y_{0t}^{N,*}\right){}^{2}\bigg/\sum_{t\in\mathcal{T}_{1}}\left(\widehat{y}_{0t}^{N,\mathrm{SC}}-y_{0t}^{N,*}\right){}^{2}.
\]
Moreover, we present the the $L_{1}$- and $L_{2}$-distance ratios,
$\left\Vert \hat{\bm{w}}-\bm{w}^{*}\right\Vert _{1}/\Vert\hat{\bm{w}}^{\mathrm{SC}}-\bm{w}^{*}\Vert_{1}$
and $\left\Vert \hat{\bm{w}}-\bm{w}^{*}\right\Vert _{2}/\Vert\hat{\bm{w}}^{\mathrm{SC}}-\bm{w}^{*}\Vert_{2}$,
to check the quality of the weight estimation.

Finally, we employ the cross-fitting method with $B=5$ to compute
the debiased treatment effects, where we use SCM and relaxation methods
to estimate the weights. We report the empirical coverage probability
(CP) of the confidence interval (CI) with 90\% nominal rate.

\subsection{Exact Group Structures}

The simulated data consist of $J+1$ units and $T_{0}+T_{1}$ time
periods. The treatment to the unit $j=0$ occurs immediately after
time $T_{0}$ and sustains from $T_{0}+1$ on. The potential outcome,
if untreated, follows a factor model, as (\ref{eq:model}):
\begin{align*}
y_{jt}^{N} & =\bm{\lambda}_{j}'\bm{f}_{t}+u_{jt}\quad\text{for \ensuremath{j\in\{0\}\cup[J]} and \ensuremath{t\in[T_{0}+T_{1}]}}.
\end{align*}
The idiosyncratic errors $u_{jt}\stackrel{\text{i.i.d.}}{\sim}\mathcal{N}(0,1)$
and are independent of the common factors. The $r$ factors $f_{\ell t}$
($\ell\in[r]$) are mutually independent, and each follows an AR(1)
process:
\[
f_{\ell t}=0.5f_{\ell t-1}+u_{\ell t}^{f},\qquad\ell\in[r],\ t\in[T_{0}+T_{1}],
\]
where $u_{\ell t}^{f}\stackrel{\text{i.i.d.}}{\sim}\mathcal{N}(0,1)$.
Each entry in the core factor loadings $\bm{\Lambda}^{\mathrm{co}}$
is independently drawn from $\mathcal{N}(0,3/r)$. The loadings for
the control units are given by $\bm{\Lambda}=\bm{Z}\bm{\Lambda}^{\mathrm{co}}$.
The loading for the treated united is generated as $\bm{\lambda}_{0}=\bm{\Lambda}^{\mathrm{co}\prime}\bm{w}_{\mathcal{G}}^{*}+\bm{\varepsilon}$,
where $\bm{\varepsilon}$ have entries drawn independently from $\text{Uniform}(-0.1/\sqrt{r},0.1/\sqrt{r})$
and $\bm{w}_{\mathcal{G}}^{*}=(w_{\mathcal{G}_{1}}^{*},\dots,w_{\mathcal{G}_{K}}^{*})^{\prime}$
has its first entry set to be zero and the other $K-1$ entries jointly
generated from a Dirichlet distribution so that $\bm{w}_{\mathcal{G}}^{*}$
is assured to live in the simplex. The noise $\bm{\varepsilon}$ implies
that $\bm{\lambda}_{0}$ may not be perfectly recovered as a convex
combination of the core loadings. The loadings and weights, once drawn,
are fixed over the replications.

\begin{table}[t]
\centering
\caption{Average post-treatment prediction error ratio under exact group structures
}\label{tab:Mean-MSE-ratios-DGP1}

\medskip{}

{\footnotesize\begin{tabular*}{\linewidth}{@{\enspace\extracolsep{\fill}}ccccccccc@{\enspace}}
\toprule
& & & & & & \multicolumn{3}{c}{Relaxation} \\
\cmidrule{7-9}
{$J$} & {$T_0$} & {Lasso} & {Ridge} & {GroupLasso} & {fsPDA} & {$L_2$} & {EL} & {Entropy} \\
\midrule
\multicolumn{9}{c}{Panel A: $K<r$} \\
50 & 25 & 0.8136 & 0.5035 & 0.5139 & 3.7740 & \textbf{0.3019} & 0.8441 & 0.3349 \\
50 & 50 & 0.7324 & 0.4503 & 0.4437 & 3.0109 & \textbf{0.1657} & 0.6993 & 0.1857 \\
50 & 100 & 0.6152 & 0.4225 & 0.3933 & 3.0252 & \textbf{0.2451} & 0.4784 & 0.2453 \\
100 & 50 & 0.7116 & 0.4096 & 0.4157 & 3.4506 & \textbf{0.1509} & 0.7274 & 0.1686 \\
100 & 100 & 0.6263 & 0.4218 & 0.4270 & 3.2039 & \textbf{0.2232} & 0.4908 & 0.2420 \\
100 & 200 & 0.6947 & 0.4167 & 0.4158 & 2.8071 & \textbf{0.2503} & 0.5395 & 0.2812 \\
200 & 100 & 0.6133 & 0.3592 & 0.3796 & 3.5371 & \textbf{0.2145} & 0.5266 & 0.2397 \\
200 & 200 & 0.6930 & 0.4134 & 0.3955 & 3.0715 & \textbf{0.2535} & 0.5469 & 0.2902 \\
200 & 400 & 0.5963 & 0.3376 & 0.3113 & 3.3766 & \textbf{0.1847} & 0.4590 & 0.2161 \\
\midrule
\multicolumn{9}{c}{Panel B: $K=r$} \\
50 & 25 & 0.8044 & 0.6530 & 0.6033 & 2.8289 & \textbf{0.5290} & 0.8095 & 0.5895 \\
50 & 50 & 0.7025 & 0.5374 & 0.4927 & 1.9234 & \textbf{0.3206} & 0.5601 & 0.3401 \\
50 & 100 & 0.6845 & 0.5524 & 0.5085 & 2.7681 & \textbf{0.4717} & 0.7560 & 0.5148 \\
100 & 50 & 0.6915 & 0.4891 & 0.4602 & 2.3996 & \textbf{0.2973} & 0.5707 & 0.3307 \\
100 & 100 & 0.6928 & 0.5437 & 0.4942 & 2.9114 & \textbf{0.4626} & 0.7735 & 0.5072 \\
100 & 200 & 0.7014 & 0.5372 & 0.5324 & 2.8514 & \textbf{0.3423} & 0.4892 & 0.3518 \\
200 & 100 & 0.6771 & 0.5039 & 0.4631 & 3.3672 & \textbf{0.4202} & 0.7268 & 0.4599 \\
200 & 200 & 0.6953 & 0.5086 & 0.4996 & 3.0984 & \textbf{0.3296} & 0.4972 & 0.3350 \\
200 & 400 & 0.6595 & 0.4477 & 0.4247 & 3.3109 & \textbf{0.2623} & 0.4102 & 0.2676 \\
\midrule
\multicolumn{9}{c}{Panel C: $K>r$} \\
50 & 25 & 0.8214 & 0.6131 & 0.6485 & 2.7519 & 0.5075 & 0.5937 & \textbf{0.4890} \\
50 & 50 & 0.7275 & 0.5341 & 0.5173 & 1.8694 & 0.3418 & 0.4638 & \textbf{0.3245} \\
50 & 100 & 0.7461 & 0.5661 & 0.6039 & 2.4932 & \textbf{0.4158} & 0.5045 & 0.4267 \\
100 & 50 & 0.7409 & 0.5038 & 0.5000 & 2.1725 & 0.3695 & 0.4846 & \textbf{0.3327} \\
100 & 100 & 0.7196 & 0.5501 & 0.5664 & 2.7311 & \textbf{0.3639} & 0.4577 & 0.3728 \\
100 & 200 & 0.7028 & 0.4758 & 0.4669 & 2.0562 & 0.4254 & \textbf{0.3827} & 0.3965 \\
200 & 100 & 0.7111 & 0.5036 & 0.5326 & 3.0201 & \textbf{0.3705} & 0.4585 & 0.3753 \\
200 & 200 & 0.7034 & 0.4518 & 0.4484 & 2.2439 & 0.4167 & \textbf{0.3687} & 0.3915 \\
200 & 400 & 0.6119 & 0.3936 & 0.3736 & 2.4515 & \textbf{0.3177} & 0.3853 & 0.3321 \\
\bottomrule
\end{tabular*}
}{\footnotesize\par}
\end{table}

Table~\ref{tab:Mean-MSE-ratios-DGP1} reports the average (over the
replications) post-treatment prediction error ratio. It reveals that
all methods except fsPDA outperform the canonical SCM for all combinations
of $K$ and $r$. The unfavorable performance of fsPDA is anticipated,
as the oracle weights are dense and constrained on the simplex while
the estimated weights of fsPDA do not lie on the simplex, and are
typically sparse. Lasso, Ridge, and Group Lasso surpass SCM substantially
by penalizing solutions toward the simple average. Moreover, Ridge
also encourages dense solutions, which gains an edge over Lasso. Group
Lasso utilizes the oracle group information, which can also beat Lasso.
Leveraging latent group structures and dense weights, the SCM-relaxation
methods further improve on Ridge and Group Lasso in all settings.
Among the relaxation approaches, the $L_{2}$norm objective excels
when $K\leq r$, due to its advantage of allowing for exact zero weights.
Furthermore, entropy is uniformly better than EL in estimating weights
close to zero, in view of the fact that $\lim_{x\to0^{+}}x\log x=0$
while $\lim_{x\to0^{+}}\log x=-\infty$. When $K>r$, the oracle weights
vary with the choice of objective function. As a result, weight estimates
by $L_{2}$, EL, and entropy can have different oracle targets. Overall,
$L_{2}$-SCM-relaxation stands out as the best performer and is recommended
in practice.

\begin{sidewaystable}
\centering
\caption{Average distance ratios }\label{tab:Mean-distance-ratios-DGP1}

\medskip{}
\resizebox{0.95\linewidth}{!}{{\footnotesize\begin{tabular*}{\linewidth}{@{\enspace\extracolsep{\fill}}cccccccccccccc@{\enspace}}
\toprule
& & \multicolumn{6}{c}{$L_1$-distance} & \multicolumn{6}{c}{$L_2$-distance} \\
\cmidrule{3-8}\cmidrule{9-14}
{$J$} & {$T_0$} & {Lasso} & {Ridge} & {GroupLasso} & {$L_2$} & {EL} & {Entropy} & {Lasso} & {Ridge} & {GroupLasso} & {$L_2$} & {EL} & {Entropy} \\
\midrule
\multicolumn{14}{c}{Panel A: $K<r$} \\
50 & 25 & 0.689 & 0.458 & 0.459 & \textbf{0.144} & 0.474 & 0.175 & 0.742 & 0.413 & 0.412 & \textbf{0.115} & 0.535 & 0.148 \\
50 & 50 & 0.695 & 0.432 & 0.420 & \textbf{0.120} & 0.467 & 0.148 & 0.748 & 0.408 & 0.395 & \textbf{0.108} & 0.549 & 0.138 \\
50 & 100 & 0.535 & 0.378 & 0.292 & \textbf{0.101} & 0.306 & 0.113 & 0.640 & 0.356 & 0.291 & \textbf{0.090} & 0.376 & 0.108 \\
100 & 50 & 0.663 & 0.442 & 0.446 & \textbf{0.141} & 0.431 & 0.155 & 0.741 & 0.384 & 0.387 & \textbf{0.110} & 0.559 & 0.132 \\
100 & 100 & 0.534 & 0.396 & 0.336 & \textbf{0.098} & 0.280 & 0.111 & 0.650 & 0.348 & 0.315 & \textbf{0.077} & 0.367 & 0.099 \\
100 & 200 & 0.628 & 0.441 & 0.339 & \textbf{0.108} & 0.263 & 0.128 & 0.700 & 0.375 & 0.323 & \textbf{0.087} & 0.318 & 0.114 \\
200 & 100 & 0.514 & 0.422 & 0.352 & \textbf{0.104} & 0.263 & 0.109 & 0.643 & 0.318 & 0.283 & \textbf{0.067} & 0.371 & 0.084 \\
200 & 200 & 0.597 & 0.422 & 0.356 & \textbf{0.113} & 0.234 & 0.126 & 0.684 & 0.324 & 0.302 & \textbf{0.078} & 0.292 & 0.104 \\
200 & 400 & 0.588 & 0.395 & 0.297 & \textbf{0.088} & 0.211 & 0.102 & 0.677 & 0.311 & 0.266 & \textbf{0.067} & 0.266 & 0.090 \\
\midrule
\multicolumn{14}{c}{Panel B: $K=r$} \\
50 & 25 & 0.537 & 0.453 & 0.393 & \textbf{0.177} & 0.352 & 0.199 & 0.625 & 0.396 & 0.361 & \textbf{0.140} & 0.369 & 0.167 \\
50 & 50 & 0.525 & 0.433 & 0.358 & \textbf{0.140} & 0.339 & 0.159 & 0.633 & 0.399 & 0.351 & \textbf{0.120} & 0.393 & 0.147 \\
50 & 100 & 0.605 & 0.508 & 0.419 & \textbf{0.360} & 0.516 & 0.385 & 0.637 & 0.458 & 0.394 & \textbf{0.300} & 0.467 & 0.331 \\
100 & 50 & 0.519 & 0.461 & 0.374 & \textbf{0.157} & 0.337 & 0.174 & 0.646 & 0.391 & 0.338 & \textbf{0.115} & 0.407 & 0.146 \\
100 & 100 & 0.604 & 0.525 & 0.431 & \textbf{0.334} & 0.474 & 0.350 & 0.639 & 0.432 & 0.374 & \textbf{0.242} & 0.404 & 0.273 \\
100 & 200 & 0.619 & 0.490 & 0.451 & \textbf{0.238} & 0.333 & 0.245 & 0.644 & 0.424 & 0.397 & \textbf{0.186} & 0.352 & 0.206 \\
200 & 100 & 0.577 & 0.536 & 0.446 & \textbf{0.307} & 0.427 & 0.311 & 0.622 & 0.397 & 0.345 & \textbf{0.185} & 0.353 & 0.219 \\
200 & 200 & 0.599 & 0.502 & 0.466 & 0.221 & 0.307 & \textbf{0.219} & 0.647 & 0.392 & 0.370 & \textbf{0.150} & 0.339 & 0.170 \\
200 & 400 & 0.567 & 0.467 & 0.419 & \textbf{0.178} & 0.267 & 0.183 & 0.615 & 0.375 & 0.347 & \textbf{0.132} & 0.306 & 0.151 \\
\midrule
\multicolumn{14}{c}{Panel C: $K>r$} \\
50 & 25 & 0.624 & 0.457 & 0.446 & 0.191 & 0.381 & \textbf{0.185} & 0.681 & 0.395 & 0.400 & \textbf{0.140} & 0.365 & 0.150 \\
50 & 50 & 0.630 & 0.436 & 0.402 & 0.169 & 0.376 & \textbf{0.152} & 0.696 & 0.393 & 0.375 & 0.132 & 0.362 & \textbf{0.131} \\
50 & 100 & 0.565 & 0.457 & 0.426 & \textbf{0.165} & 0.261 & 0.171 & 0.615 & 0.417 & 0.412 & \textbf{0.147} & 0.278 & 0.164 \\
100 & 50 & 0.627 & 0.459 & 0.425 & 0.200 & 0.370 & \textbf{0.165} & 0.712 & 0.385 & 0.374 & 0.133 & 0.361 & \textbf{0.128} \\
100 & 100 & 0.542 & 0.478 & 0.452 & 0.168 & 0.233 & \textbf{0.162} & 0.608 & 0.404 & 0.401 & \textbf{0.130} & 0.248 & 0.145 \\
100 & 200 & 0.638 & 0.435 & 0.389 & 0.178 & 0.235 & \textbf{0.174} & 0.692 & 0.362 & 0.348 & 0.130 & 0.181 & \textbf{0.128} \\
200 & 100 & 0.514 & 0.497 & 0.480 & 0.194 & 0.233 & \textbf{0.174} & 0.591 & 0.369 & 0.374 & \textbf{0.121} & 0.248 & 0.138 \\
200 & 200 & 0.613 & 0.415 & 0.379 & 0.152 & 0.203 & \textbf{0.149} & 0.679 & 0.312 & 0.305 & \textbf{0.094} & 0.142 & 0.094 \\
200 & 400 & 0.603 & 0.398 & 0.347 & \textbf{0.119} & 0.198 & 0.131 & 0.672 & 0.308 & 0.294 & \textbf{0.080} & 0.140 & 0.087 \\
\bottomrule
\end{tabular*}
}}\vspace{.5\baselineskip}

\begin{minipage}[t]{0.95\linewidth}%
{\footnotesize Note: We omit the weight estimates by fsPDA as they
are not restricted to the simplex and thus far away from the oracle
as we define.}%
\end{minipage}
\end{sidewaystable}

\begin{figure}
\centering
\includegraphics[width=1\textwidth]{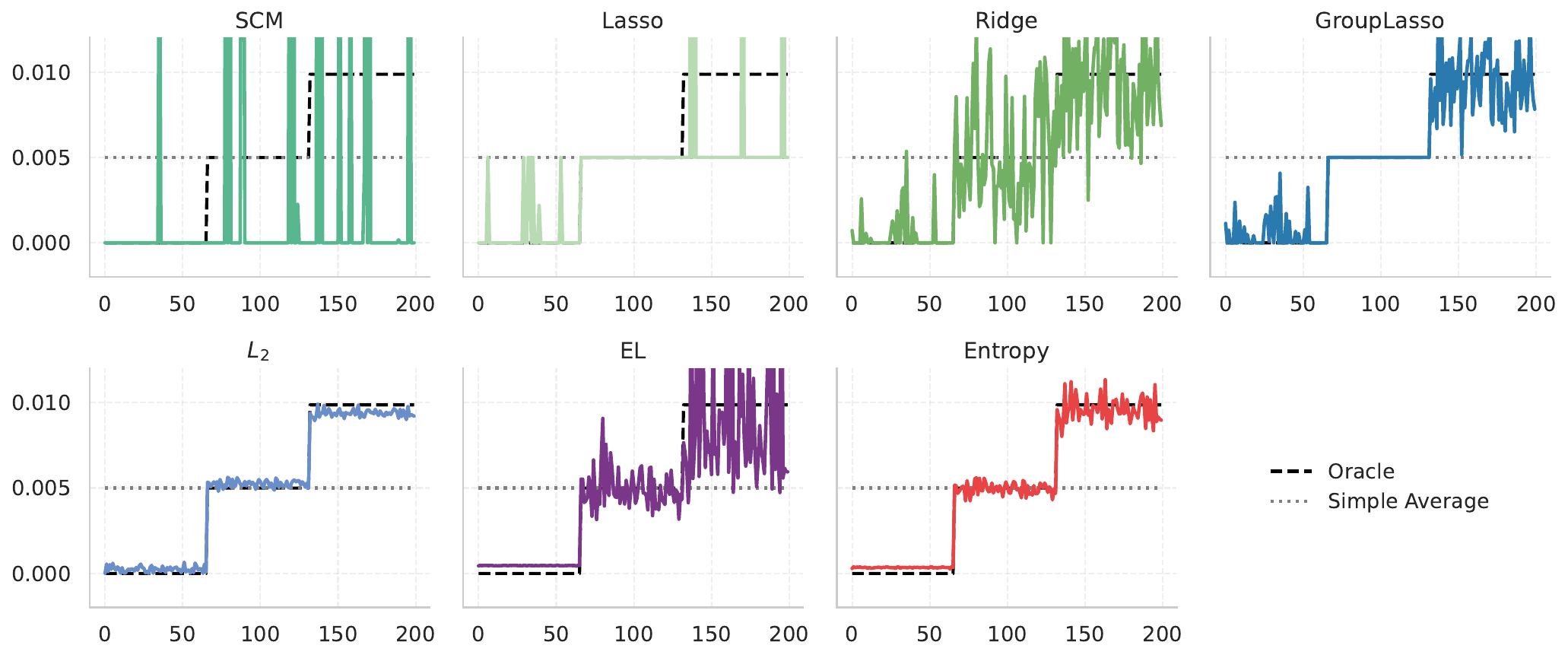}

\caption{Estimated weight vectors in a typical simulation }\label{fig:weight-comparison-simu}
\end{figure}

The performance of the estimated weights is displayed in \Tabref{Mean-distance-ratios-DGP1}.
Lasso and Ridge are better in recovering the oracle weights than SCM,
and Ridge outperforms Lasso in terms of $L_{2}$-distance as it tends
to produce weights with a smaller $L_{2}$-norm. Among SCM-relaxation
methods, $L_{2}$-SCM-relaxation consistently excels in estimating
the weights for $K\leq r$. Again, because of potentially different
oracle targets when $K>r$, the $L_{2}$-objective is not guaranteed
to yield the best performance compared with EL and entropy. For evaluation
in terms of $L_{1}$-distance, the three relaxation methods perform
comparably well. These findings align with our theory.

To see how well these methods recover the group structure of oracle
weights, \Figref{weight-comparison-simu} illustrates the estimated
weights in a typical simulation with 3 groups, 3 factors, 200 control
units and 100 pre-treatment periods. The oracle weights (the black
dash line) are clustered by groups for visualization. As is clear
in the figure, SCM produces sparse weights, concentrating on groups
with nonzero oracle weights but deviating far from the within-group
equal oracle weight. Lasso tends to collapse its weights to the simple
average, the target of shrinkage. Ridge, with the denser solution,
better approximates the group structure. Group Lasso performs much
better than Lasso and Ridge because its penalty function uses the
oracle group information. As predicted by our theory, $L_{2}$-SCM-relaxation
nearly perfectly traces the group pattern with the smallest estimation
errors.

While EL and entropy objectives partially recover the group patterns,
they exhibit large fluctuations in estimating weights, particularly
for groups with higher oracle weights. This variability stems from
the nonuniform curvature of their objective functions’ penalty terms,
as measured by second derivatives \citep[Section 9.3]{wainwright2019high}.
For example, the second derivative of the entropy function $x\log x$
is $1/x$, which decreases from $+\infty$ to 100 as the oracle weight
increases from 0 to 0.01. Consequently, the entropy function imposes
disproportionately heavy penalties on groups with near-zero oracle
weights compared to those with higher weights. This is why, in the
last subgraph of \Figref{weight-comparison-simu}, the volatility
of weights keeps growing from the first group to the third group.
The same reasoning applies to the ``EL'' subgraph. In contrast,
the constant curvature of the $L_{2}$ penalty ensures uniform penalization
across all groups.

\subsection{Approximate Group Structures}

\begin{table}[t]
\centering
\caption{Average post-treatment prediction error ratio under approximate group
structures }\label{tab:Mean-MSE-ratios-DGP3}

\medskip{}

{\footnotesize\begin{tabular*}{\linewidth}{@{\enspace\extracolsep{\fill}}ccccccccc@{\enspace}}
\toprule
& & & & & & \multicolumn{3}{c}{Relaxation} \\
\cmidrule{7-9}
{$J$} & {$T_0$} & {Lasso} & {Ridge} & {GroupLasso} & {fsPDA} & {$L_2$} & {EL} & {Entropy} \\
\midrule
\multicolumn{9}{c}{Panel A: $K<r$} \\
50 & 25 & 0.8080 & 0.5189 & 0.5229 & 3.8169 & \textbf{0.3191} & 0.8353 & 0.3394 \\
50 & 50 & 0.7233 & 0.4510 & 0.4436 & 2.9675 & \textbf{0.1705} & 0.6970 & 0.1897 \\
50 & 100 & 0.6225 & 0.4148 & 0.3946 & 2.9786 & \textbf{0.2332} & 0.4965 & 0.2438 \\
100 & 50 & 0.7353 & 0.4404 & 0.4417 & 3.3038 & \textbf{0.1714} & 0.7353 & 0.1882 \\
100 & 100 & 0.6225 & 0.4018 & 0.4041 & 3.1202 & \textbf{0.2293} & 0.4905 & 0.2387 \\
100 & 200 & 0.6970 & 0.4219 & 0.4217 & 2.6966 & \textbf{0.2516} & 0.5709 & 0.2901 \\
200 & 100 & 0.6302 & 0.3836 & 0.4058 & 3.4238 & \textbf{0.2012} & 0.5052 & 0.2242 \\
200 & 200 & 0.6863 & 0.3951 & 0.3879 & 2.9600 & \textbf{0.2349} & 0.5645 & 0.2744 \\
200 & 400 & 0.6179 & 0.3476 & 0.3236 & 3.3050 & \textbf{0.1913} & 0.5156 & 0.2221 \\
\midrule
\multicolumn{9}{c}{Panel B: $K=r$} \\
50 & 25 & 0.8051 & 0.6582 & 0.5868 & 2.9176 & \textbf{0.5732} & 0.8293 & 0.6153 \\
50 & 50 & 0.6970 & 0.5387 & 0.4891 & 1.8485 & \textbf{0.3217} & 0.5600 & 0.3548 \\
50 & 100 & 0.6865 & 0.5628 & 0.5163 & 2.7738 & \textbf{0.4809} & 0.7663 & 0.5196 \\
100 & 50 & 0.7074 & 0.4966 & 0.4697 & 2.1994 & \textbf{0.3078} & 0.5953 & 0.3491 \\
100 & 100 & 0.6942 & 0.5533 & 0.5064 & 2.9537 & \textbf{0.4470} & 0.7318 & 0.4883 \\
100 & 200 & 0.7118 & 0.5358 & 0.5365 & 2.8092 & \textbf{0.3547} & 0.5148 & 0.3640 \\
200 & 100 & 0.6785 & 0.5070 & 0.4723 & 3.3877 & \textbf{0.4170} & 0.6874 & 0.4523 \\
200 & 200 & 0.6955 & 0.4961 & 0.4854 & 3.0983 & \textbf{0.3293} & 0.5019 & 0.3348 \\
200 & 400 & 0.6462 & 0.4470 & 0.4249 & 3.1869 & \textbf{0.2805} & 0.4690 & 0.2859 \\
\midrule
\multicolumn{9}{c}{Panel C: $K>r$} \\
50 & 25 & 0.8401 & 0.6034 & 0.6430 & 2.7073 & 0.5243 & 0.6166 & \textbf{0.5087} \\
50 & 50 & 0.7460 & 0.5477 & 0.5388 & 1.9607 & 0.3706 & 0.5034 & \textbf{0.3526} \\
50 & 100 & 0.7559 & 0.5655 & 0.6045 & 2.5131 & \textbf{0.4087} & 0.5033 & 0.4149 \\
100 & 50 & 0.7279 & 0.5054 & 0.4995 & 2.0952 & 0.3666 & 0.4928 & \textbf{0.3334} \\
100 & 100 & 0.7104 & 0.5433 & 0.5576 & 2.6731 & \textbf{0.3832} & 0.4650 & 0.3861 \\
100 & 200 & 0.6998 & 0.4891 & 0.4716 & 2.0388 & 0.4428 & \textbf{0.3802} & 0.4010 \\
200 & 100 & 0.7187 & 0.5274 & 0.5544 & 2.8991 & \textbf{0.3680} & 0.4624 & 0.3792 \\
200 & 200 & 0.6970 & 0.4502 & 0.4476 & 2.2148 & 0.4254 & \textbf{0.3795} & 0.3997 \\
200 & 400 & 0.6273 & 0.3941 & 0.3789 & 2.3705 & \textbf{0.3246} & 0.3903 & 0.3412 \\
\bottomrule
\end{tabular*}
}{\footnotesize\par}
\end{table}

In this section, we consider an approximate group design to check
the robustness of the estimation methods when the exact group structures
do not hold. All the settings are the same as those in the previous
subsection except that the loadings are generated as $\bm{\Lambda}=\bm{Z}\bm{\Lambda}^{\mathrm{co}}+\bm{\Xi}$
where $\bm{\Xi}$ consists of entries drawn independently from $\text{Uniform}(-0.2/\sqrt{r},0.2/\sqrt{r})$
and is independent of other random variables. \Tabref{Mean-MSE-ratios-DGP3}
presents the average post-treatment prediction error ratios. The patterns
are similar to those in \Tabref{Mean-MSE-ratios-DGP1} in exact group
settings: the relaxation methods outperform substantially SCM as well
as Lasso, Ridge and Group Lasso; moreover, $L_{2}$SCM-relaxation
is the best when $K\leq r$.

\begin{table}[t]
\centering
\caption{Coverage probability }\label{tab:CPs}

\medskip{}

{\footnotesize\begin{tabular*}{\linewidth}{@{\enspace\extracolsep{\fill}}cccccccccc@{\enspace}}
\toprule
 &  & \multicolumn{4}{c}{Exact Group Structure} & \multicolumn{4}{c}{Approximate Group Structure} \\
{$J$} & {$T_0$} & {SCM} & {$L_2$} & {EL} & {Entropy} & {SCM} & {$L_2$} & {EL} & {Entropy} \\
\midrule
\multicolumn{10}{c}{Panel A: $K<r$} \\
50 & 25 & 90.19 & 90.59 & 89.79 & 90.09 & 90.28 & 89.68 & 88.68 & 89.38 \\
50 & 50 & 89.50 & 90.50 & 89.60 & 90.30 & 89.19 & 88.89 & 88.19 & 88.49 \\
50 & 100 & 89.60 & 91.00 & 89.80 & 90.30 & 90.70 & 90.00 & 90.10 & 89.90 \\
100 & 50 & 92.46 & 90.95 & 91.56 & 91.26 & 92.27 & 92.47 & 91.06 & 92.47 \\
100 & 100 & 89.33 & 88.12 & 87.51 & 87.92 & 91.47 & 90.96 & 91.16 & 91.06 \\
100 & 200 & 89.96 & 90.16 & 90.56 & 90.56 & 90.05 & 90.45 & 89.45 & 90.85 \\
200 & 100 & 88.44 & 89.05 & 89.35 & 89.15 & 91.46 & 89.55 & 90.85 & 89.85 \\
200 & 200 & 90.14 & 89.94 & 90.04 & 89.53 & 91.11 & 90.91 & 90.51 & 90.91 \\
200 & 400 & 91.14 & 90.73 & 90.53 & 90.32 & 90.84 & 90.54 & 89.93 & 90.64 \\
\midrule
\multicolumn{10}{c}{Panel B: $K=r$} \\
50 & 25 & 90.88 & 90.78 & 90.18 & 91.18 & 91.00 & 89.60 & 90.00 & 89.50 \\
50 & 50 & 90.18 & 89.68 & 90.38 & 89.68 & 89.08 & 89.08 & 89.48 & 89.48 \\
50 & 100 & 89.30 & 91.10 & 90.30 & 90.50 & 90.39 & 90.09 & 90.29 & 89.59 \\
100 & 50 & 92.95 & 91.24 & 91.34 & 90.84 & 90.99 & 91.40 & 89.88 & 91.60 \\
100 & 100 & 88.20 & 87.90 & 87.80 & 88.30 & 90.67 & 90.27 & 90.07 & 90.57 \\
100 & 200 & 89.67 & 90.37 & 90.47 & 90.77 & 91.08 & 91.58 & 90.88 & 91.08 \\
200 & 100 & 89.24 & 89.64 & 89.24 & 89.64 & 90.70 & 90.50 & 90.80 & 90.80 \\
200 & 200 & 90.23 & 90.33 & 90.53 & 90.84 & 92.15 & 91.05 & 90.95 & 90.95 \\
200 & 400 & 90.92 & 90.51 & 90.51 & 90.41 & 90.74 & 90.63 & 90.74 & 90.94 \\
\midrule
\multicolumn{10}{c}{Panel C: $K>r$} \\
50 & 25 & 91.20 & 90.70 & 90.40 & 91.10 & 89.70 & 89.20 & 89.00 & 89.40 \\
50 & 50 & 89.90 & 90.10 & 90.10 & 90.20 & 89.58 & 89.98 & 89.18 & 90.18 \\
50 & 100 & 90.17 & 90.67 & 90.37 & 90.77 & 90.49 & 90.39 & 90.29 & 90.19 \\
100 & 50 & 92.46 & 91.56 & 90.05 & 91.46 & 90.34 & 90.64 & 90.74 & 90.54 \\
100 & 100 & 88.47 & 87.56 & 88.16 & 87.46 & 91.38 & 89.88 & 90.28 & 89.68 \\
100 & 200 & 89.85 & 90.15 & 90.66 & 90.05 & 91.27 & 90.86 & 90.56 & 90.46 \\
200 & 100 & 89.13 & 90.04 & 90.14 & 89.23 & 91.30 & 90.59 & 91.19 & 90.59 \\
200 & 200 & 90.70 & 91.03 & 90.05 & 90.92 & 91.78 & 90.43 & 90.84 & 90.01 \\
200 & 400 & 91.00 & 91.11 & 90.44 & 90.66 & 90.55 & 89.92 & 90.23 & 89.38 \\
\bottomrule
\end{tabular*}
}{\footnotesize\par}
\end{table}

\subsection{Coverage Probabilities}

Finally, to verify that our relaxation methods can yield valid $t$-test
CIs, we present the CP of the CI constructed using $\widehat{\bm{w}}_{(g)}$
as the weight estimator in Table \ref{tab:CPs}. Across all $(J,T_{0})$,
group structures, and the numbers of groups, the CPs of SCM-relaxation
CIs are close to the nominal size 90\%. This demonstrates that our
relaxation methods are suitable for statistical inference using the
$t$-test method. In Table \ref{tab:length-ratios} of Appendix~\ref{sec:Additional-Simulation-Results},
we compare the lengths of SCM-relaxation and SCM CIs, further justifying
the finite sample superiority in terms of shorter CIs.

The Monte Carlo simulation evidence shows that the relaxation methods
are robust against mild deviation from group structure, manifesting
their practical usefulness.

\section{Empirical Application }\label{sec:Empirical-applications}

In this section, we examine the impact of the 2016 Brexit referendum
on the real GDP of the United Kingdom (UK). On June 23, 2016, a referendum
on the UK's membership in the European Union (EU) resulted in 51.89\%
of voters favoring departure, triggering a withdrawal process that
would conclude on January 31, 2020. The referendum's outcome was largely
unexpected by markets. We designate the third quarter of 2016 as the
starting point of the treatment.

\begin{figure}
\centering
\includegraphics[width=1\textwidth]{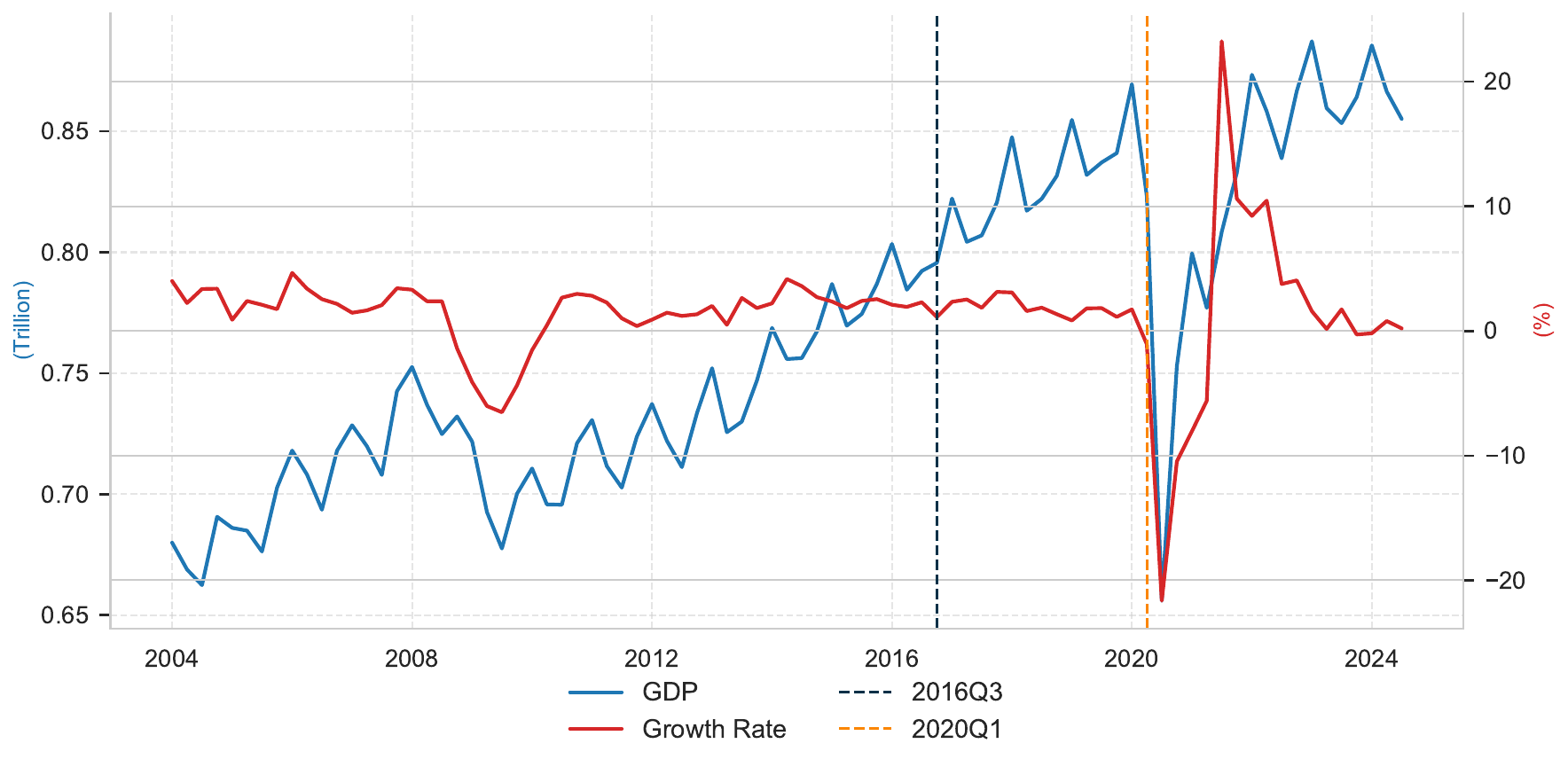}

\caption{Real GDP and growth rates of the UK}\label{fig:UK_GDP_GR}

\end{figure}

We collect quarterly real GDP data for all available economies from
the CEIC database, a subsidiary of Caixin providing business information.
The completeness of the quarterly GDP time series varies across the
countries and regions, with a fraction of them having records only
in recent years. We shape the available data into a balanced panel,
with a comprehensive donor pool of 57 countries from 2002Q4 to 2024Q2.
To remove time trends and seasonality, we construct the year-over-year
(YoY) GDP growth rate $\tilde{y}_{jt}=y_{jt}/y_{j(t-4)}-1$ for all
economies. Figure \ref{fig:UK_GDP_GR} plots UK's quarterly real GDP
(blue line) and its YoY growth rate (red line).

\begin{figure}
\centering
\begin{raggedright}
\includegraphics[width=1\textwidth]{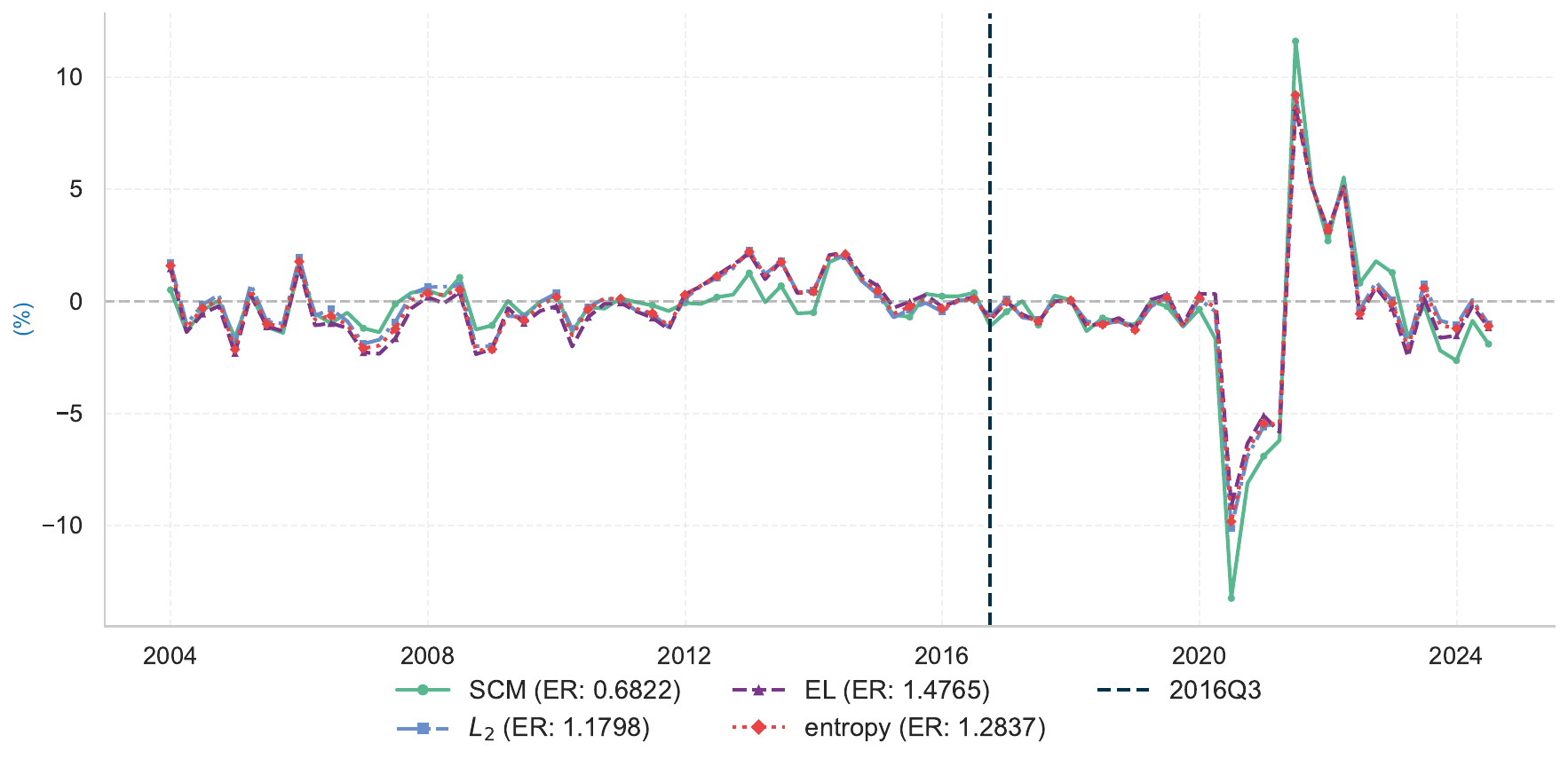}
\par\end{raggedright}
\begin{raggedright}
\footnotesize{Note: ER is the in-sample empirical risk.}
\par\end{raggedright}
\caption{Gap between realized growth rate and the fitted value (Treatment time
2016Q3) }\label{fig:Fitted-GDP-growth}
\end{figure}

\begin{figure}[ph]
\centering
\includegraphics[width=0.9\textwidth]{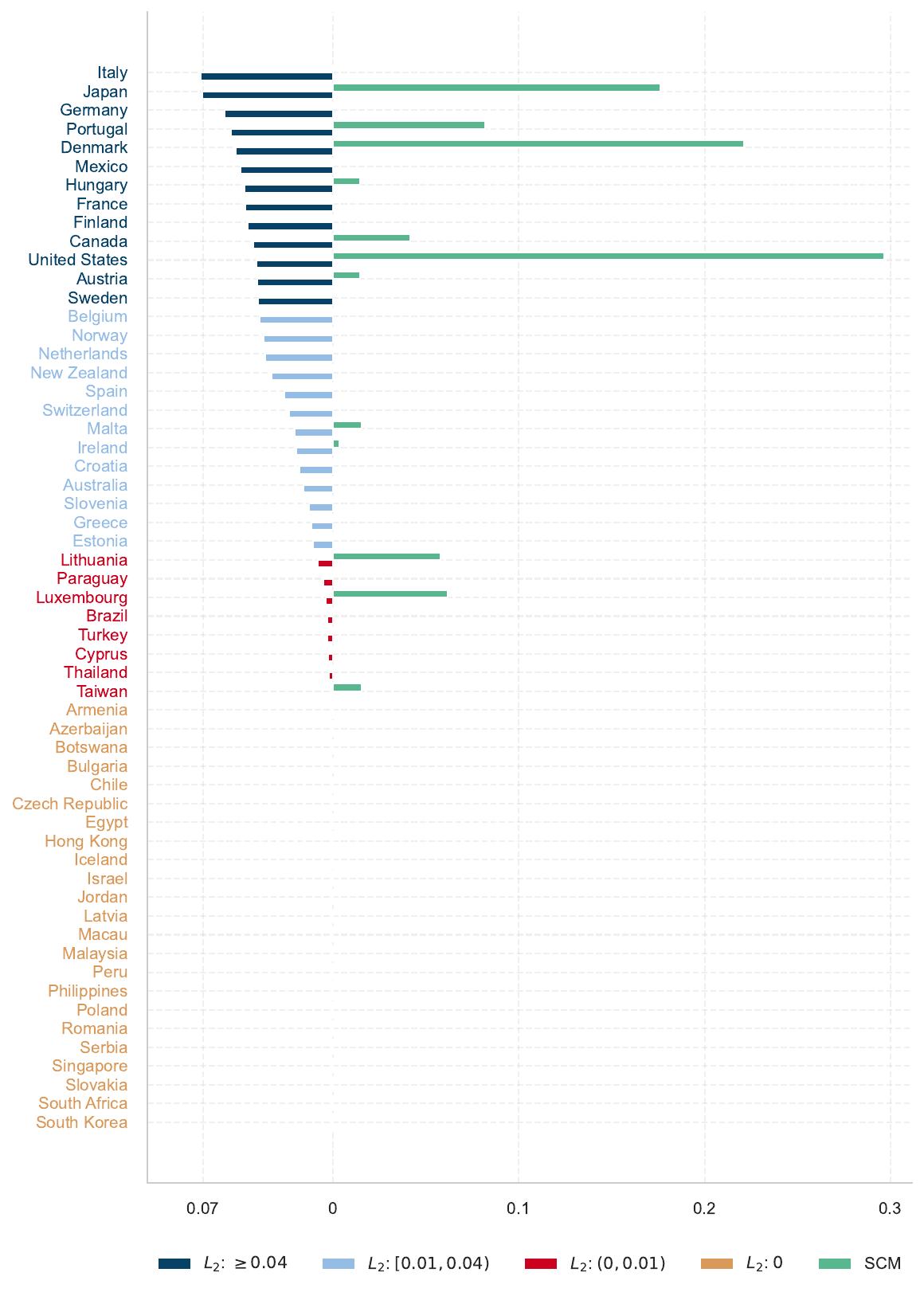}

\caption{$L_{2}$-SCM-relaxation weights versus SCM weights (Treatment time
2016Q3) }\label{fig:weight_L2_SCM_2016}
\end{figure}

We fit and predict the GDP growth rate by SCM and the three relaxation
methods. Shown in Figure~\ref{fig:Fitted-GDP-growth}, all methods
exhibit similar patterns with pre-treatment fit. SCM has the smallest
in-sample empirical risk, which is exactly the objective function
it minimizes; it leads to an aggressive post-treatment extrapolation
that visibly deviates from the realization. The relaxation approaches
are more conservative in both the in-sample fit and the out-of-sample
prediction.

Given that the counterfactuals are constructed by weighting the control
units, we report in Figure \ref{fig:weight_L2_SCM_2016} the weights
from SCM and those of $L_{2}$-SCM-relaxation.\footnote{The weights from the EL and entropy objectives are similar to those
of $L_{2}$ and thus omitted.} The country/region names are sorted by the size of the $L_{2}$ weights,
according to which we \emph{manually} classified into Group 1 ($\geq0.04$,
dark blue), Group 2 ($[0.01,0.04)$, light blue), Group 3 ($(0,0.01)$,
red), and Group 4 (exactly zero, brown). Group 1 includes major EU
countries Italy, Germany, Denmark, and France, as well as big English-speaking
countries Canada and the United States. They are important economic
partners of the UK. Group 2 further covers several middle-sized EU
economies. Those in Group 3 and 4 are mostly geographically distant
from the UK. In contrast, the sparse weights produced by SCM are less
interpretable. While USA alone takes 30\% of the weight, the estimate
excludes Germany, France, and Italy, the top 3 EU countries by GDP,
but instead places non-trivial weights on Lithuania and Luxembourg,
countries that play a minor economic role.

\begin{figure}
\centering
\includegraphics[width=1\textwidth]{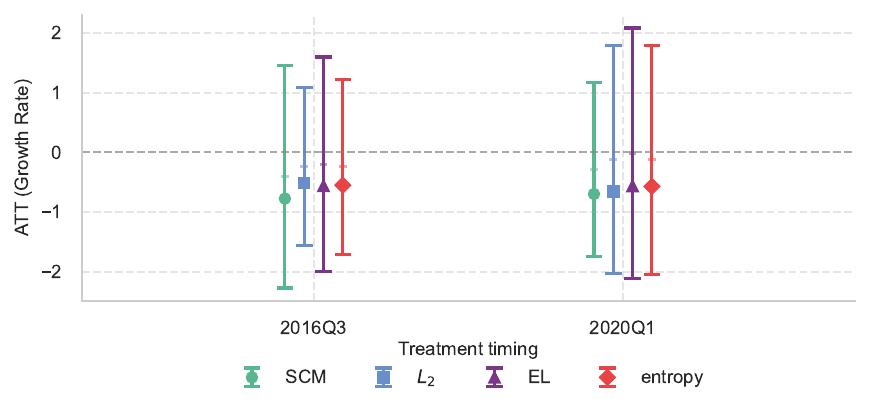}

\caption{Average treatment effects and confidence intervals}\label{fig:ATT}
\end{figure}

We report in Figure \ref{fig:ATT} the 90\% CIs for the ATT via the
cross-fitting method with $B=5$. The CI based on $L_{2}$-SCM-relaxation
is the shortest given treatment time 2016Q3. Though ATTs are insignificant
for all methods, perhaps due to the limited sample size, it is more
direct to assess the GDP loss in levels. We use the estimated GDP
growth rate counterfactuals to reconstruct the counterfactual real
GDP. Let $\widehat{\tilde{y}}_{0t}^{N}=\sum_{j\in[J]}\hat{w}_{j}\tilde{y}_{jt}$
in the post-treatment period, and we compute the level GDP as
\[
\hat{y}_{0t}^{N}=(1+\widehat{\tilde{y}}_{0t}^{N})z_{0(t-4)},\quad t\in\mathcal{T}_{1},
\]
where the base $z_{0t}=y_{0t}$ if $t\leq T_{0}$ and $z_{0t}=\hat{y}_{0t}^{N}$
for $t>T_{0}$.

\begin{figure}
\centering
\begin{centering}
\includegraphics[width=1\textwidth]{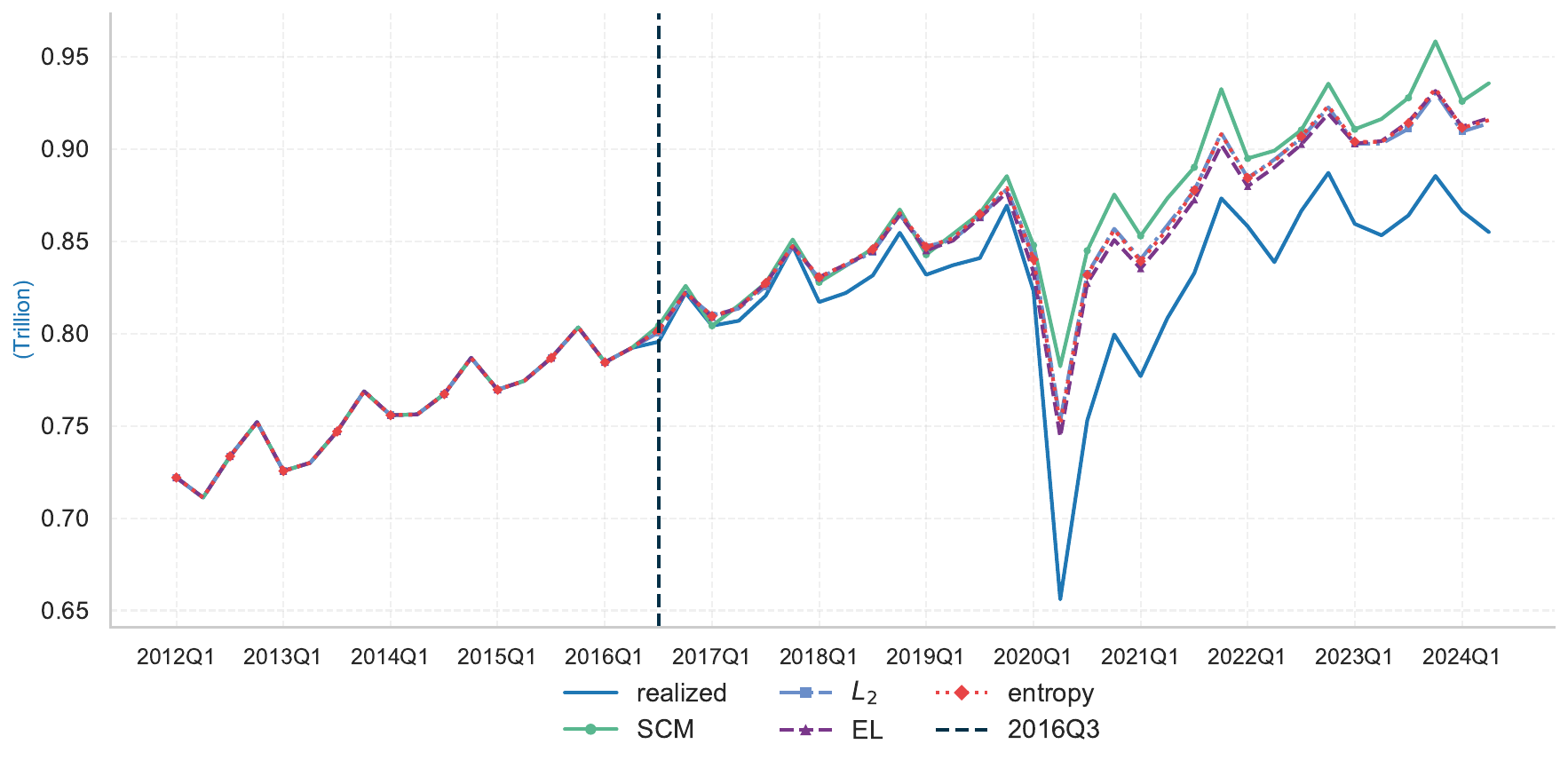}
\par\end{centering}
\caption{Real GDP versus counterfactual GDP (Treatment time 2016Q3) }\label{fig:(Log)-Difference}
\end{figure}

We calculate $\sum_{t\in\mathcal{T}_{1}}(y_{0t}^{I}-\hat{y}_{0t}^{N})/\sum_{t\in\mathcal{T}_{1}}y_{0t}^{I}$,
the treatment effects relative to the UK's total GDP after the treatment.
SCM predicts a substantial loss of $4.93\%$ of the total realized
GDP. $L_{2}$-, EL- and entropy-SCM-relaxation estimate $3.85\%$,
$3.64\%$ and $3.90\%$, respectively. All numbers suggest that Brexit
has shrunken the UK's economy. In Figure~\ref{fig:(Log)-Difference},
the gap between the counterfactual and the realization is visible
from 2016Q3 to 2020Q1, and the negative impact quickly worsens after
2020Q1. This underscores the delayed but substantial economic consequences
of Brexit, which were not immediately evident in the aftermath of
the referendum but emerged clearly following the formal exit.\footnote{The first quarter of the official departure coincided with the breakout
of Covid-19, which triggered a worldwide economic recession. The pandemic
was a global event that affected all countries. Given that the UK's
public health system was relatively well developed compared to other
countries, the GDP gap could be even larger in absence of the pandemic.} The substantial effect after official departure is huge; if we compute
the corresponding ratio after 2020Q1, SCM yields a shocking loss of
$7.74\%$, whereas the relaxation method outcomes range from $5.58\%$
to $5.98$\%.

\begin{figure}
\centering
\begin{raggedright}
\includegraphics[width=1\textwidth]{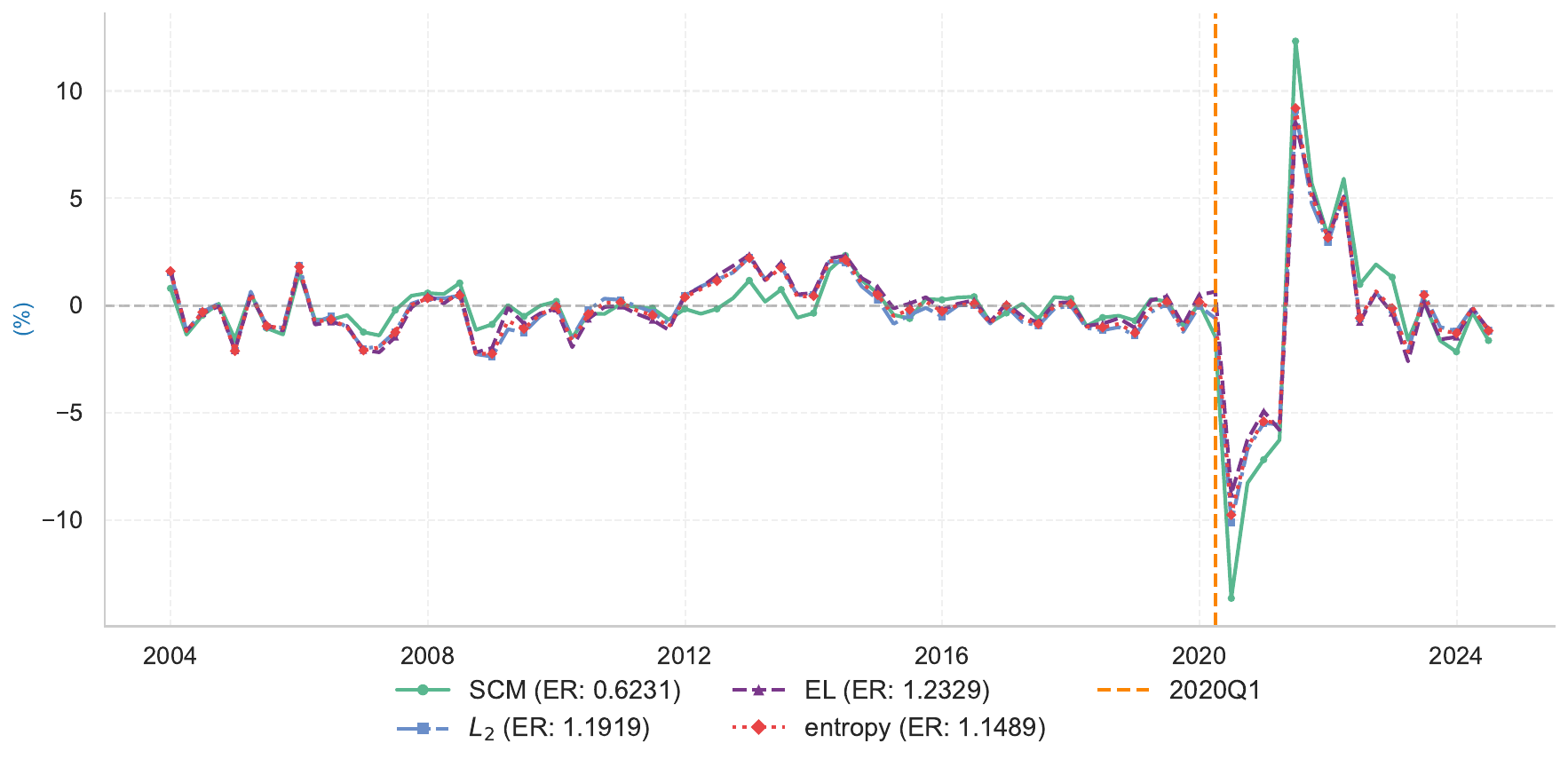}
\par\end{raggedright}
\begin{raggedright}
\footnotesize{Note: ER is the in-sample empirical risk.}
\par\end{raggedright}
\caption{Gap between realized growth rate and the fitted value (Treatment time
2020Q1)}\label{fig:Fitted-GDP-growth-1}
\end{figure}

\begin{figure}[ph]
\centering
\includegraphics[width=0.9\textwidth]{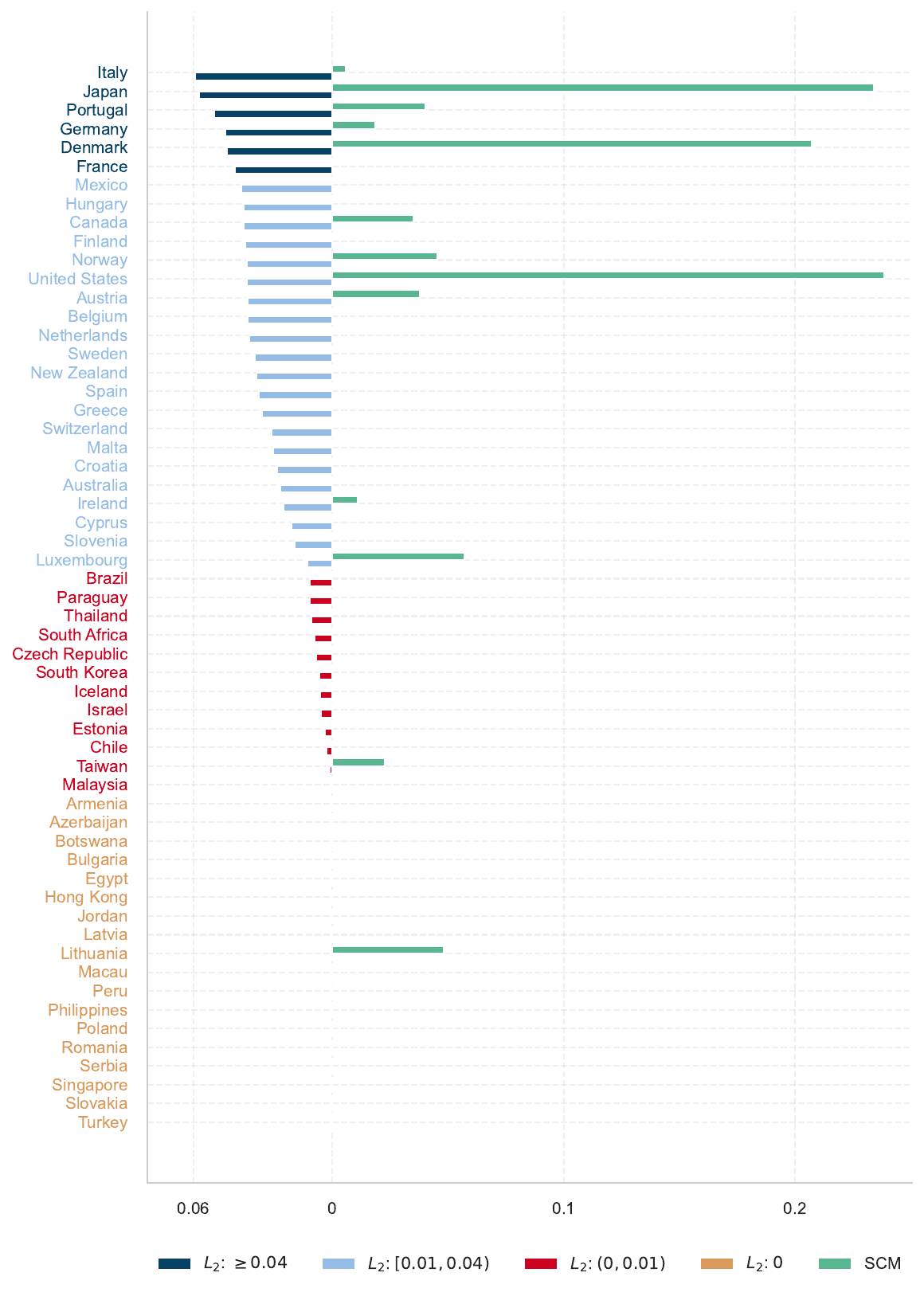}

\caption{$L_{2}$-SCM-relaxation weights versus SCM weights (Treatment time
2020Q1) }\label{fig:weight_L2_SCM_2020}
\end{figure}

\begin{figure}
\centering
\includegraphics[width=1\textwidth]{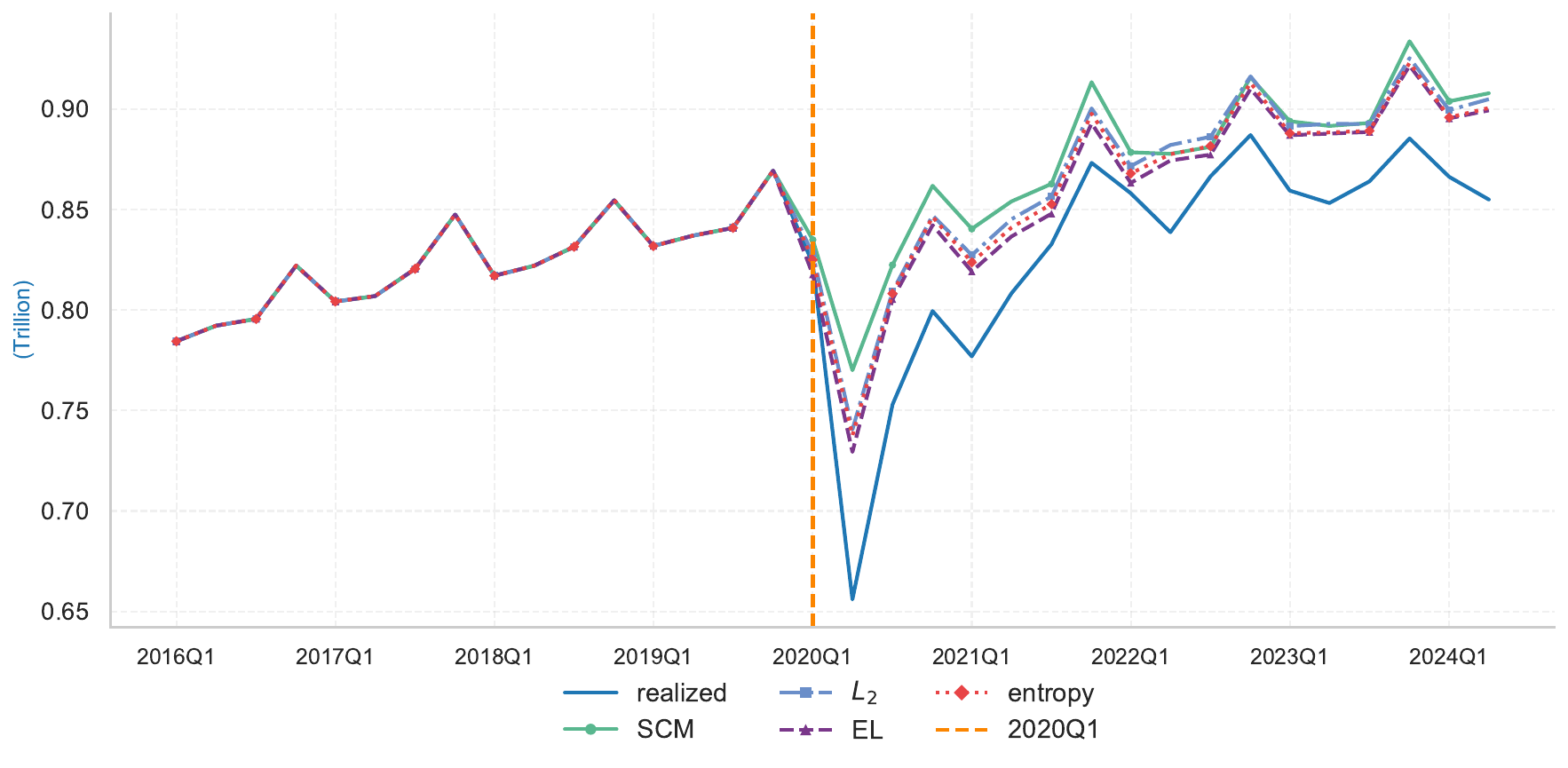}

\caption{Real GDP versus counterfactual GDP (Treatment time 2020Q1) }\label{fig:Difference-between-real}
\end{figure}

While the outcome of referendum was largely unexpected, the economic
decoupling has been working in progress after 2016, with 2020 in anticipation.
As a robustness check, we carry out re-estimation by setting the treatment
starting point at 2020Q1. In Figure \ref{fig:Fitted-GDP-growth-1},
the in-sample empirical risks are similar to those of Figure \ref{fig:Fitted-GDP-growth},
whereas the variation of SCM weights in Figure \ref{fig:weight_L2_SCM_2020}
is substantial relative to those in Figure \ref{fig:weight_L2_SCM_2016}.
The ATTs for growth rates are still insignificant as demonstrated
in Figure \ref{fig:ATT}. For the loss of total GDP, SCM estimates
$5.22\%$ of the UK's total GDP after 2020Q1, while the $L_{2}$,
EL and entropy estimate $4.42\%$, $3.61\%$ and $4.02\%$, respectively.
These enlarged effects are due to the drop of the ``mild loss''
period between 2016 and 2020 as shown in Figure \ref{fig:Difference-between-real}.
They are smaller than the numbers for the same time period reported
at the end of the last paragraph. This is due to the fact that this
exercise uses the 2016--2020 data for in-sample fit, where the downward
anticipation is leaked into the pre-treatment period and contaminates
the weight estimation. Therefore we view that the results are cleaner
with 2016Q1 as the treatment point. Overall, regardless of the choices
of treatment point, the departure of the UK has inflicted a profound
economic loss.

\section{Conclusion}

We propose a machine learning algorithm, termed SCM-relaxation, to
estimate the weights for counterfactual prediction of a treated unit
at the presence of many potential control units. Given a factor model
with the latent group structures, $L_{2}$-SCM-relaxation produces
weights that converge asymptotically to the oracle weights. The convergence
rate is sufficiently fast to ensure oracle prediction accuracy as
if the group membership were known. If one chooses the EL or the entropy
objective function to conduct the relaxation, under extra regularization
conditions they possess similar properties as the $L_{2}$ counterpart
if the true weight is non-zero. Our theoretical study suggests that
$L_{2}$-SCM-relaxation enjoys desirable properties under fewer conditions
and is thus preferred. We apply them to evaluate the impact of Brexit
on the UK's real GDP, and find that the estimated counterfactual without
Brexit would be substantially higher than the realized values.

\bigskip \bigskip

\bibliographystyle{ecta}
\bibliography{emplikeli,ref_books,scm}

%\onehalfspacing \normalsize

\newpage\renewcommand{\theequation}{S.\arabic{equation}}
\renewcommand{\thethm}{S.\arabic{thm}}
\renewcommand{\thelem}{S.\arabic{lem}}
\renewcommand{\thetable}{S.\arabic{table}}
\setcounter{equation}{0}
\setcounter{thm}{0}
\setcounter{lem}{0}
\setcounter{page}{1}
\begin{center}
{\LARGE\textbf{Online Appendices}}{\LARGE\par}
\par\end{center}

\appendix
The online appendices have two sections. Appendix \ref{sec:Proof-of-main}
presents proofs of the main theoretical results. Appendix \ref{sec:Standardized-SCM-relaxation}
shows that the theory extends to the factor model with heterogeneous
scales if the data is scale-standardized. Appendix \ref{sec:Additional-Simulation-Results}
includes additional simulation results.

\section{Proof of Main Results}\label{sec:Proof-of-main}

This appendix presents the proofs of the main results. \subsecref{Proof-of-Lemma}
shows that the oracle weight is within-group equal for both $K\geq r$
and $K<r$, and \subsecref{Technical-Lemmas} collects essential technical
lemmas. Leveraging these results, \subsecref{proof-thm1} establishes
that the estimated weight of $L_{2}$-SCM-relaxation converges to
the oracle weight at a sufficiently fast speed. This convergence speed
enables us to derive in \subsecref{proof-thm2} the oracle inequalities
for the empirical risks $R_{\mathcal{T}_{0}}(\hat{\bm{w}})$ and $R_{\mathcal{T}_{1}}(\hat{\bm{w}})$.
Finally, \subsecref{proof-thm3} extends the theoretical results to
allow for generic objective functions.

\textbf{Additional notation.} ``w.p.a.1'' means ``with probability
approaching one.'' For a generic matrix $\bm{A}$, let $\bm{A}^{\dagger}$
denote its Moore-Penrose pseudo inverse. For two random variables
$x$ and $y$, we use $x\lesssim_{p}y$ as shorthand for $x\leq By$
for some absolute constant $B$ w.p.a.1., and $x\gtrsim_{p}y$ means
$y\lesssim_{p}x$. For an $m\times n$ matrix $\bm{A}$, its $(i,j)$-th
entry is denoted as $A_{ij}$, and let $\left\Vert \bm{A}\right\Vert _{p}=\sup_{\bm{x}\neq\boldsymbol{0}}\left\Vert \bm{A}\bm{x}\right\Vert _{p}/\left\Vert \bm{x}\right\Vert _{p}$
be its $L_{p}$-norm. In particular, when $p=1$, $\|\bm{A}\|_{1}=\max_{j\in[n]}\sum_{i\in[m]}|A_{i,j}|$
is the maximum absolute column sum; when $p=\infty$, $\left\Vert \bm{A}\right\Vert _{\infty}=\max_{i\in[m]}\sum_{j\in[n]}|A_{ij}|$
is the maximum absolute row sum. The entry-wise maximum norm of $\bm{A}$
is defined as $\|\bm{A}\|_{\max}=\max_{i\in[m],j\in[n]}|A_{ij}|$.\medskip

Throughout proofs, define the covariance matrices for the ``core''
component as $\hat{\bm{\Sigma}}^{\mathrm{co}}:=T_{0}^{-1}\bm{\Lambda}^{\mathrm{co}}\bm{F}'\bm{F}\bm{\Lambda}^{\mathrm{co}}{}'$
and $\hat{\bm{\Upsilon}}^{\mathrm{co}}:=T_{0}^{-1}\bm{\Lambda}^{\mathrm{co}}\bm{F}'\bm{F}\bm{\lambda}_{0}$.
It is clear that $\hat{\bm{\Sigma}}^{*}=\bm{Z}\hat{\bm{\Sigma}}^{\mathrm{co}}\bm{Z}'$
and $\hat{\bm{\Upsilon}}^{*}=\bm{Z}\hat{\bm{\Upsilon}}^{\mathrm{co}}$.

\subsection{Proof of \lemref{oracle_target_relaxation}}\label{subsec:Proof-of-Lemma}
\begin{proof}[Proof of \lemref{oracle_target_relaxation}]
We first remove the repeated entries from the balancing constraint
$\hat{\bm{\Sigma}}^{*}\bm{w}-\hat{\bm{\Upsilon}}^{*}+\gamma\bm{1}_{J}=\bm{0}_{J}$
due to the group structure. Noticing the fact that $\bm{Z}\bm{1}_{K}=\bm{1}_{J}$,
we thus have $\bm{Z}(\hat{\bm{\Sigma}}^{\mathrm{co}}\bm{Z}^{\prime}\bm{w}-\hat{\bm{\Upsilon}}^{\mathrm{co}}+\gamma\bm{1}_{K})=\bm{0}_{J}$,
which, in view of $J>K$, implies
\begin{equation}
\hat{\bm{\Sigma}}^{\mathrm{co}}\bm{w}_{\mathcal{G}}-\hat{\bm{\Upsilon}}^{\mathrm{co}}+\gamma\bm{1}_{K}=\bm{0}_{K},\label{eq:cons_reduced}
\end{equation}
where $\bm{w}_{\mathcal{G}}:=\bm{Z}^{\prime}\bm{w}=\bigl(\sum_{j\in\mathcal{G}_{1}}w_{j},\dots,\sum_{j\in\mathcal{G}_{K}}w_{j}\bigr)^{\prime}$
is the $K\times1$ group-level aggregated weight vector, and it also
satisfies the sum-to-one constraint $\bm{1}_{K}^{\prime}\bm{w}_{\mathcal{G}}=1$.

If $K\leq r$, then $\hat{\bm{\Sigma}}^{\mathrm{co}}=T_{0}^{-1}\bm{\Lambda}^{\mathrm{co}}\bm{F}^{\prime}\bm{F}\bm{\Lambda}^{\mathrm{co}\prime}$
is nonsingular, and therefore (\ref{eq:cons_reduced}) leads to
\begin{equation}
\bm{w}_{\mathcal{G}}=(\hat{\bm{\Sigma}}^{\mathrm{co}})^{-1}(\hat{\bm{\Upsilon}}^{\mathrm{co}}-\gamma\bm{1}_{K}).\label{eq:wG_solution}
\end{equation}
Substituting (\ref{eq:wG_solution}) into the sum-to-one constraint
$\bm{1}_{K}^{\prime}\bm{w}_{G}=1$, we solve $\gamma^{*}$ as (where
we use ``$*$'' to represent the solution)
\[
\gamma^{*}=\frac{\bm{1}_{K}'(\hat{\bm{\Sigma}}^{\mathrm{co}})^{-1}\hat{\bm{\Upsilon}}^{\mathrm{co}}-1}{\bm{1}_{K}'(\hat{\bm{\Sigma}}^{\mathrm{co}})^{-1}\bm{1}_{K}}.
\]
Plugging $\gamma^{*}$ back into (\ref{eq:wG_solution}), we have
\begin{equation}
\bm{w}_{\mathcal{G}}^{*}=(\hat{\bm{\Sigma}}^{\mathrm{co}})^{-1}\left[\hat{\bm{\Upsilon}}^{\mathrm{co}}-\frac{\bm{1}_{K}'(\hat{\bm{\Sigma}}^{\mathrm{co}})^{-1}\hat{\bm{\Upsilon}}^{\mathrm{co}}-1}{\bm{1}_{K}'(\hat{\bm{\Sigma}}^{\mathrm{co}})^{-1}\bm{1}_{K}}\bm{1}_{K}\right].\label{eq:w_G_oracle}
\end{equation}
Thus, the oracle problem (\ref{eq:oracle_relaxation_problem}) can
be recast as
\[
\min_{\bm{w}}{}\sum_{k=1}^{K}\sum_{j\in\mathcal{G}_{k}}w_{j}^{2}\qquad\text{s.t. }\sum_{j\in\mathcal{G}_{k}}^{J}w_{j}=w_{\mathcal{G}_{k}}^{*}\text{ for }k=1,\dots,K.
\]
This amounts to optimization within each group. Since the objective
is strictly convex and the constraints are affine, the solution must
be of equal weight within each group, that is, $\bm{w}^{*}=\bm{Z}(\bm{Z}'\bm{Z})^{-1}\bm{w}_{\mathcal{G}}^{*}$.

If $K>r$, then $\hat{\bm{\Sigma}}^{\mathrm{co}}=T_{0}^{-1}\bm{\Lambda}^{\mathrm{co}}\bm{F}^{\prime}\bm{F}\bm{\Lambda}^{\mathrm{co}\prime}$
is singular, but $\hat{\bm{\Omega}}_{\bm{F}}=T_{0}^{-1}\bm{F}^{\prime}\bm{F}$
is nonsingular in general. Hence, we can instead derive $\bm{\Lambda}^{\mathrm{co}\prime}\bm{w}_{\mathcal{G}}=\bm{\lambda}_{0}-\gamma\bm{b}$
from (\ref{eq:cons_reduced}), where
\begin{equation}
\bm{b}:=\hat{\bm{\Omega}}_{\bm{F}}^{-1}(\bm{\Lambda}^{\mathrm{co}\prime}\bm{\Lambda}^{\mathrm{co}})^{-1}\bm{\Lambda}^{\mathrm{co}\prime}\bm{1}_{K}.\label{eq:b_def}
\end{equation}
Together with the sum-to-one constraint $\bm{1}_{K}^{\prime}\bm{w}_{\mathcal{G}}=1$,
they are collected into the following matrix form
\begin{equation}
\begin{bmatrix}\underset{(r\times K)}{\bm{\Lambda}^{\mathrm{co}\prime}} & \underset{(r\times1)}{\bm{b}}\\
\bm{1}_{K}^{\prime} & 0
\end{bmatrix}\begin{bmatrix}\underset{(k\times1)}{\bm{w}_{\mathcal{G}}}\\
\gamma
\end{bmatrix}=\begin{bmatrix}\underset{(r\times1)}{\bm{\lambda}_{0}}\\
1
\end{bmatrix}.\label{eq:cons_mat}
\end{equation}
The oracle problem (\ref{eq:oracle_relaxation_problem}) can be recast
as (for convenience, we scale the objective by $1/2$)
\[
\min_{\bm{w}}{}\frac{1}{2}\sum_{k=1}^{K}\sum_{j\in\mathcal{G}_{k}}w_{j}^{2}\qquad\text{subject to (\ref{eq:cons_mat}) and }w_{j}\geq0.
\]
Let $\bm{\mu}_{1}$ be the $r\times1$ Lagrangian multiplier vector
associated with the balancing constraint, $\mu_{2}$ the Lagrangian
multiplier associated with the sum-to-one constraint, and $\nu_{j}\geq0$
the multiplier associated with $w_{j}\geq0$. Then, the KKT condition
with respect to $w_{j}$ for $j\in\mathcal{G}_{k}$ is given by
\[
w_{j}-\nu_{j}-\bm{\mu}_{1}^{\prime}\bm{\lambda}_{k}^{\mathrm{co}}-\mu_{2}=0\quad\text{and}\quad\nu_{j}w_{j}=0,
\]
where $\bm{\lambda}_{k}^{\mathrm{co}}$ is the $k$-th column of $\bm{\Lambda}^{\mathrm{co}\prime}$.
This implies that, if $i$ and $j$ are in the same group, then $w_{i}-w_{j}=\nu_{i}-\nu_{j}$.
If $\nu_{i}=\nu_{j}=0$, then $w_{i}=w_{j}$. If $\nu_{i}>0$ and
$\nu_{i}\ne\nu_{j}$, then $w_{i}=0$ so that $\nu_{i}-\nu_{j}=-w_{j}\leq0$,
which means $\nu_{j}\geq\nu_{i}>0$, but this further gives $w_{j}=0$.
Hence, the oracle weights must be equal within each group. The oracle
problem (\ref{eq:oracle_relaxation_problem}) can be further reduced
to
\[
\min_{\bm{w}_{\mathcal{G}},\gamma}{}\frac{1}{2}\bm{w}_{\mathcal{G}}^{\prime}(\bm{Z}^{\prime}\bm{Z})^{-1}\bm{w}_{\mathcal{G}}\qquad\text{subject to (\ref{eq:cons_mat})}.
\]

Now suppose the solution is in the interior of the simplex, i.e.,
$w_{j}\geq0$ all slack, so that the closed-form solution is available.
Although (\ref{eq:cons_mat}) contains $r+1$ constraints, the number
of linearly independent ones can be either $r$ or $r+1$, each leading
to a solution.

\textbf{Case 1: $\bm{1}_{K}$ lies inside the column space of $\bm{\Lambda}^{\mathrm{co}}$}.
It must hold that $\bm{\Lambda}^{\mathrm{co}}\bm{d}=\bm{1}_{K}$ for
$\bm{d}=\bm{\Lambda}^{\mathrm{co}\dagger}\bm{1}_{K}$ (and $\bm{d}$
is the unique solution) where $\bm{\Lambda}^{\mathrm{co}\dagger}=(\bm{\Lambda}^{\mathrm{co}\prime}\bm{\Lambda}^{\mathrm{co}})^{-1}\bm{\Lambda}^{\mathrm{co}\prime}$
is the pseudo inverse of $\bm{\Lambda}^{\mathrm{co}}$. Then, we can
get $1=\bm{d}^{\prime}\bm{\Lambda}^{\mathrm{co}\prime}\bm{w}_{\mathcal{G}}=\bm{d}^{\prime}(\bm{\lambda}_{0}-\gamma\bm{b})$
and $\bm{b}=\hat{\bm{\Omega}}_{\bm{F}}^{-1}\bm{d}$, which implies
$\gamma^{*}=(\bm{d}^{\prime}\bm{\lambda}_{0}-1)/(\bm{d}^{\prime}\hat{\bm{\Omega}}_{\bm{F}}^{-1}\bm{d})$.
The effective constraint system is
\[
\bm{\Lambda}^{\mathrm{co}\prime}\bm{w}_{\mathcal{G}}=\bm{\lambda}_{0}-\frac{\bm{d}^{\prime}\bm{\lambda}_{0}-1}{\bm{d}^{\prime}\hat{\bm{\Omega}}_{\bm{F}}^{-1}\bm{d}}\hat{\bm{\Omega}}_{\bm{F}}^{-1}\bm{d}.
\]
The FOC with respect to $\bm{w}_{\mathcal{G}}$ is $(\bm{Z}^{\prime}\bm{Z})^{-1}\bm{w}_{\mathcal{G}}=\bm{\Lambda}^{\mathrm{co}}\bm{\mu}_{1}.$
Combining these two expressions yields
\begin{equation}
\bm{\mu}_{1}=(\bm{\Lambda}^{\mathrm{co}\prime}\bm{Z}^{\prime}\bm{Z}\bm{\Lambda}^{\mathrm{co}})^{-1}\biggl(\bm{\lambda}_{0}-\frac{\bm{d}^{\prime}\bm{\lambda}_{0}-1}{\bm{d}^{\prime}\hat{\bm{\Omega}}_{\bm{F}}^{-1}\bm{d}}\hat{\bm{\Omega}}_{\bm{F}}^{-1}\bm{d}\biggr).\label{eq:mu1-in-col}
\end{equation}
It follows that
\begin{equation}
\bm{w}_{\mathcal{G}}=\bm{Z}^{\prime}\bm{Z}\bm{\Lambda}^{\mathrm{co}}(\bm{\Lambda}^{\mathrm{co}\prime}\bm{Z}^{\prime}\bm{Z}\bm{\Lambda}^{\mathrm{co}})^{-1}\biggl(\bm{\lambda}_{0}-\frac{\bm{d}^{\prime}\bm{\lambda}_{0}-1}{\bm{d}^{\prime}\hat{\bm{\Omega}}_{\bm{F}}^{-1}\bm{d}}\hat{\bm{\Omega}}_{\bm{F}}^{-1}\bm{d}\biggr).\label{eq:wg_oracle_r_less_K_in_col}
\end{equation}

\textbf{Case 2: $\bm{1}_{K}$ is out of the column space of $\bm{\Lambda}^{\mathrm{co}}$}.
The FOC with respect to $\bm{w}_{\mathcal{G}}$ can be written as
\begin{equation}
(\bm{Z}^{\prime}\bm{Z})^{-1}\bm{w}_{\mathcal{G}}=\bm{\Lambda}^{\mathrm{co}}\bm{\mu}_{1}+\mu_{2}\bm{1}_{K}.\label{eq:foc_mat}
\end{equation}
Using the sum-to-one constraint $\bm{1}_{K}^{\prime}\bm{w}_{\mathcal{G}}=1$,
we have $\bm{1}_{J}^{\prime}\bm{Z}\bm{\Lambda}^{\mathrm{co}}\bm{\mu}_{1}+J\mu_{2}=1$,
which implies
\begin{equation}
\mu_{2}=(1-\bm{1}_{J}^{\prime}\bm{Z}\bm{\Lambda}^{\mathrm{co}}\bm{\mu}_{1})/J.\label{eq:mu2}
\end{equation}
Plugging this equation into (\ref{eq:foc_mat}) and using the first
$r$ constraints $\bm{\Lambda}^{\mathrm{co}\prime}\bm{w}_{\mathcal{G}}+\gamma\bm{b}=\bm{\lambda}_{0}$
in (\ref{eq:cons_mat}), we have
\[
\bm{\Lambda}^{\mathrm{co}\prime}(\bm{Z}^{\prime}\bm{Z})\left(\bm{M}_{\bm{Z}}\bm{\Lambda}^{\mathrm{co}}\bm{\mu}_{1}+\bm{1}_{K}/J\right)+\gamma\bm{b}=\bm{\lambda}_{0},
\]
where $\bm{M}_{\bm{Z}}:=\bm{I}_{K}-\frac{1}{J}\bm{1}_{K}\bm{1}_{J}^{\prime}\bm{Z}$.
We have $\bm{Z}\bm{M}_{\bm{Z}}=\bm{M}_{J}\bm{Z}$ where $\bm{M}_{J}:=\bm{I}_{J}-\frac{1}{J}\bm{1}_{J}\bm{1}_{J}'$.
Under the assumption that $\bm{1}_{K}$ is not in the column space
of $\bm{\Lambda}^{\mathrm{co}}$, it follows that $\bm{\Lambda}^{\mathrm{co}\prime}(\bm{Z}^{\prime}\bm{Z})\bm{M}_{\bm{Z}}\bm{\Lambda}^{\mathrm{co}}=\bm{\Lambda}^{\mathrm{co}\prime}\bm{Z}^{\prime}\bm{M}_{J}\bm{Z}\bm{\Lambda}^{\mathrm{co}}$
has full rank $r$, and we obtain
\begin{equation}
\bm{\mu}_{1}=[\bm{\Lambda}^{\mathrm{co}\prime}(\bm{Z}^{\prime}\bm{M}_{J}\bm{Z})\bm{\Lambda}^{\mathrm{co}}]^{-1}\left[\bm{\lambda}_{0}-\gamma\bm{b}-\bm{\Lambda}^{\mathrm{co}\prime}(\bm{Z}^{\prime}\bm{Z})\bm{1}_{K}/J\right].\label{eq:mu1-not-in-col}
\end{equation}
 As the FOC with respect to $\gamma$ is $\bm{\mu}_{1}^{\prime}\bm{b}=0$,
plugging (\ref{eq:mu1-not-in-col}) into it yields the solution to
$\gamma$:
\begin{equation}
\gamma^{*}=\frac{\bm{b}^{\prime}[\bm{\Lambda}^{\mathrm{co}\prime}(\bm{Z}^{\prime}\bm{M}_{J}\bm{Z})\bm{\Lambda}^{\mathrm{co}}]^{-1}\left[\bm{\lambda}_{0}-\frac{1}{J}\bm{\Lambda}^{\mathrm{co}\prime}(\bm{Z}^{\prime}\bm{Z})\bm{1}_{K}\right]}{\bm{b}^{\prime}[\bm{\Lambda}^{\mathrm{co}\prime}(\bm{Z}^{\prime}\bm{M}_{J}\bm{Z})\bm{\Lambda}^{\mathrm{co}}]^{-1}\bm{b}}.\label{eq:gam}
\end{equation}
By (\ref{eq:foc_mat})--(\ref{eq:gam}), it follows that
\begin{align}
\bm{w}_{\mathcal{G}}^{*} & =\bm{Z}^{\prime}\bm{Z}\left(\bm{M}_{\bm{Z}}\bm{\Lambda}^{\mathrm{co}}\bm{\mu}_{1}^{*}+\frac{1}{J}\bm{1}_{K}\right)\nonumber \\
 & =\bm{Z}^{\prime}\bm{Z}\left(\bm{M}_{\bm{Z}}\bm{\Lambda}^{\mathrm{co}}[\bm{\Lambda}^{\mathrm{co}\prime}(\bm{Z}^{\prime}\bm{M}_{J}\bm{Z})\bm{\Lambda}^{\mathrm{co}}]^{-1}\left[\bm{\lambda}_{0}-\gamma^{*}\bm{b}-\frac{1}{J}\bm{\Lambda}^{\mathrm{co}\prime}(\bm{Z}^{\prime}\bm{Z})\bm{1}_{K}\right]+\frac{1}{J}\bm{1}_{K}\right),\label{eq:wg_oracle_r_less_K_not_in_col}
\end{align}
where $\bm{b}$ has been defined in (\ref{eq:b_def}). Note that this
expression only holds when the objective is the $L_{2}$ norm. For
other types of objective such as EL- and entropy-relaxation, we do
not have a closed-form solution as above.
\end{proof}
The relation between $K$ and $r$ is summarized as follows. The smaller
one between them determines the number of effective (or linearly independent)
constraints. When $K\leq r$, the variation in the $r$ factors provides
enough information for identification of the $K$ groups. When $K>r$,
however, under-identification arises, and therefore a regularization
is needed to select the solution. It turns out that the strictly convex
objective plays two roles: strict convexity not only ensures the within-group
equal weights, but it also imposes a criterion for regularizing $\bm{w}_{\mathcal{G}}$
when there are are many groups.

\subsection{Technical Lemmas }\label{subsec:Technical-Lemmas}

We establish key technical lemmas to support our analysis. In \lemref{bounds_for_mu},
we show that (i) when $K\leq r$, the eigenvalues of the ``core''
covariance matrix $\hat{\bm{\Sigma}}^{\mathrm{co}}$ are bounded away
from zero so that $\hat{\bm{\Sigma}}^{\mathrm{co}}$ is non-degenerate;
(ii) when $K>r$, $\hat{\bm{\Sigma}}^{\mathrm{co}}$ is rank deficient
and thus the lower bound for minimum eigenvalue no longer holds, while
we prove that the Lagrange multiplier $\bm{\mu}_{1}$ associated with
the balancing constraint satisfies $\|\bm{\mu}_{1}\|_{2}=O_{p}(r/J)$,
which is sufficient for subsequent results in the $K>r$ case. \lemref{sampling-error}
provides bounds on $\|\hat{\bm{\Sigma}}^{\mathrm{e}}\|_{1}$ and $\|\hat{\bm{\Upsilon}}^{\mathrm{e}}\|_{\infty}$.
\lemref{w_star_feasibility} shows that the oracle weight is a feasible
solution for the $L_{2}$-SCM-relaxation, a critical step for proving
the compatibility inequalities between $\hat{\bm{w}}-\bm{w}^{*}$
and $\hat{\bm{w}}_{\mathcal{G}}-\bm{w}_{\mathcal{G}}^{*}$ in \lemref{link_two_norms}.
\begin{lem}
\label{lem:bounds_for_mu}Suppose \assuref{DGP} holds.
\begin{enumerate}[label=(\roman*),parsep=0pt,topsep=0pt,font=\upshape]
\item \label{enu:K_less_r}If $K\leq r$, then $c_{1}\leq\phi_{\min}(\hat{\bm{\Sigma}}^{\mathrm{co}})\leq\phi_{\max}(\hat{\bm{\Sigma}}^{\mathrm{co}})\leq c_{2}$
holds w.p.a.1.~for some absolute constants $c_{1}$ and $c_{2}$.
\item \label{enu:K_gtr_r}If $K>r$, then the $r$-dimensional Lagrange
multiplier $\bm{\mu}_{1}$ given by (\ref{eq:mu1-in-col}) and (\ref{eq:mu1-not-in-col})
has order $\|\bm{\mu}_{1}\|_{2}=O_{p}(r/J)$.
\end{enumerate}
\end{lem}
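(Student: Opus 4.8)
I would begin from the factorization $\hat{\bm{\Sigma}}^{\mathrm{co}}=\bm{\Lambda}^{\mathrm{co}}\hat{\bm{\Omega}}_{\bm{F}}\bm{\Lambda}^{\mathrm{co}\prime}$, an $r\times r$ positive definite form squeezed between the $K\times r$ core-loading matrix and its transpose. When $K\leq r$, \assuref{DGP}\ref{enu:Loadings} gives $\mathrm{rank}(\bm{\Lambda}^{\mathrm{co}})=K\wedge r=K$, so $\bm{\Lambda}^{\mathrm{co}\prime}$ has trivial kernel and for every unit $\bm{v}\in\mathbb{R}^{K}$ one has $\varsigma_{\min}^{2}(\bm{\Lambda}^{\mathrm{co}})\leq\|\bm{\Lambda}^{\mathrm{co}\prime}\bm{v}\|_{2}^{2}\leq\varsigma_{\max}^{2}(\bm{\Lambda}^{\mathrm{co}})$, where these two quantities are pinned between the absolute constants $\underline{c}$ and $\bar{c}$ by \assuref{DGP}\ref{enu:Loadings} since $K\wedge r=K$. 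Hence $\bm{v}'\hat{\bm{\Sigma}}^{\mathrm{co}}\bm{v}=(\bm{\Lambda}^{\mathrm{co}\prime}\bm{v})'\hat{\bm{\Omega}}_{\bm{F}}(\bm{\Lambda}^{\mathrm{co}\prime}\bm{v})$ lies between $\phi_{\min}(\hat{\bm{\Omega}}_{\bm{F}})\,\varsigma_{\min}^{2}(\bm{\Lambda}^{\mathrm{co}})$ and $\phi_{\max}(\hat{\bm{\Omega}}_{\bm{F}})\,\varsigma_{\max}^{2}(\bm{\Lambda}^{\mathrm{co}})$, and since $\|\hat{\bm{\Omega}}_{\bm{F}}-\bm{\Omega}_{\bm{F}}\|_{2}\to_{p}0$, Weyl's inequality gives $\phi_{\min}(\hat{\bm{\Omega}}_{\bm{F}})\geq\underline{c}/2$ and $\phi_{\max}(\hat{\bm{\Omega}}_{\bm{F}})\leq2\bar{c}$ w.p.a.1.; taking $c_{1}=\underline{c}^{2}/2$ and $c_{2}=2\bar{c}^{2}$ closes the part.

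\textbf{Part (ii): reduction and the case $\bm{1}_{K}\in\mathrm{col}(\bm{\Lambda}^{\mathrm{co}})$.} When $K>r$ the matrix $\hat{\bm{\Sigma}}^{\mathrm{co}}$ is rank deficient, so I would work with the closed forms of $\bm{\mu}_{1}$ derived in the proof of \lemref{oracle_target_relaxation}. In both subcases $\bm{\mu}_{1}=\bm{A}^{-1}\bm{v}$ with $\bm{A}$ a symmetric positive definite $r\times r$ matrix and $\bm{v}\in\mathbb{R}^{r}$, so since $\|\bm{\mu}_{1}\|_{2}\leq\phi_{\min}(\bm{A})^{-1}\|\bm{v}\|_{2}$ it suffices to show (a) $\phi_{\min}(\bm{A})\gtrsim_{p}J/r$ and (b) $\|\bm{v}\|_{2}=O_{p}(1)$. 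I read the group-size condition as $J_{k}\asymp J/K$ uniformly, so $\bm{Z}'\bm{Z}=\mathrm{diag}(J_{k})\asymp(J/K)\bm{I}_{K}$, while \assuref{DGP}\ref{enu:Loadings} with $K\wedge r=r$ gives $\varsigma_{\min}^{2}(\bm{\Lambda}^{\mathrm{co}})\asymp\varsigma_{\max}^{2}(\bm{\Lambda}^{\mathrm{co}})\asymp K/r$. If $\bm{1}_{K}\in\mathrm{col}(\bm{\Lambda}^{\mathrm{co}})$, then (\ref{eq:mu1-in-col}) gives $\bm{A}=\bm{\Lambda}^{\mathrm{co}\prime}\bm{Z}'\bm{Z}\bm{\Lambda}^{\mathrm{co}}$, whence $\phi_{\min}(\bm{A})\asymp\phi_{\max}(\bm{A})\asymp(J/K)(K/r)=J/r$, which settles (a). For (b) I would show $\|\bm{d}\|_{2}\asymp\sqrt{r}$ from $\bm{d}=\bm{\Lambda}^{\mathrm{co}\dagger}\bm{1}_{K}$ together with $\bm{\Lambda}^{\mathrm{co}}\bm{d}=\bm{1}_{K}$ and the spectral bounds, which yields $|\bm{d}'\bm{\lambda}_{0}-1|=O(\sqrt{r})$, $\bm{d}'\hat{\bm{\Omega}}_{\bm{F}}^{-1}\bm{d}\gtrsim_{p}r$ and $\|\hat{\bm{\Omega}}_{\bm{F}}^{-1}\bm{d}\|_{2}=O_{p}(\sqrt{r})$; with $\|\bm{\lambda}_{0}\|_{2}\leq\bar{c}$ the bracketed vector in (\ref{eq:mu1-in-col}) is $O_{p}(1)$.

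\textbf{Part (ii): the case $\bm{1}_{K}\notin\mathrm{col}(\bm{\Lambda}^{\mathrm{co}})$, and the main obstacle.} Here (\ref{eq:mu1-not-in-col}) gives $\bm{A}=\bm{\Lambda}^{\mathrm{co}\prime}(\bm{Z}'\bm{M}_{J}\bm{Z})\bm{\Lambda}^{\mathrm{co}}$, which by $\bm{M}_{J}\bm{Z}=\bm{Z}\bm{M}_{\bm{Z}}$ equals $(\bm{M}_{\bm{Z}}\bm{\Lambda}^{\mathrm{co}})'(\bm{Z}'\bm{Z})(\bm{M}_{\bm{Z}}\bm{\Lambda}^{\mathrm{co}})$; hence $\phi_{\max}(\bm{A})\leq\phi_{\max}(\bm{\Lambda}^{\mathrm{co}\prime}\bm{Z}'\bm{Z}\bm{\Lambda}^{\mathrm{co}})\asymp J/r$ and $\phi_{\min}(\bm{A})\geq\phi_{\min}(\bm{Z}'\bm{Z})\,\varsigma_{\min}^{2}(\bm{M}_{\bm{Z}}\bm{\Lambda}^{\mathrm{co}})\gtrsim(J/K)\,\varsigma_{\min}^{2}(\bm{M}_{\bm{Z}}\bm{\Lambda}^{\mathrm{co}})$. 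For (b), write $\bm{v}=\bm{\lambda}_{0}-\gamma^{*}\bm{b}-\bar{\bm{\lambda}}$ with $\bar{\bm{\lambda}}:=J^{-1}\bm{\Lambda}^{\mathrm{co}\prime}(\bm{Z}'\bm{Z})\bm{1}_{K}=J^{-1}\sum_{k}J_{k}\bm{\lambda}_{k}^{\mathrm{co}}$, the bounded weighted-mean loading; (\ref{eq:gam}) together with a Cauchy--Schwarz estimate in the $\bm{A}^{-1}$-inner product gives $\|\gamma^{*}\bm{b}\|_{2}\leq\sqrt{\phi_{\max}(\bm{A})/\phi_{\min}(\bm{A})}\,\|\bm{\lambda}_{0}-\bar{\bm{\lambda}}\|_{2}=O_{p}(1)$ once $\phi_{\max}(\bm{A})\asymp\phi_{\min}(\bm{A})\asymp J/r$, so $\|\bm{v}\|_{2}=O_{p}(1)$. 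The delicate ingredient — the step I expect to be the main obstacle — is the lower bound $\varsigma_{\min}^{2}(\bm{M}_{\bm{Z}}\bm{\Lambda}^{\mathrm{co}})\gtrsim K/r$: the rows of $\bm{M}_{\bm{Z}}\bm{\Lambda}^{\mathrm{co}}$ are the group loadings with the weighted all-ones direction projected out, and although the proof of \lemref{oracle_target_relaxation} shows this matrix has full column rank $r$ in the present subcase, upgrading that qualitative fact to the quantitative rate $K/r$ requires that projecting out $\bm{1}_{K}$ does not collapse the conditioning of $\bm{\Lambda}^{\mathrm{co}}$ below the level granted by \assuref{DGP}\ref{enu:Loadings}; I would obtain it by quantifying the angular separation between $\bm{1}_{K}$ and $\mathrm{col}(\bm{\Lambda}^{\mathrm{co}})$ that this subcase, combined with the spectral and balanced-group conditions, forces. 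Combining (a) and (b) then delivers $\|\bm{\mu}_{1}\|_{2}=O_{p}(r/J)$ in both subcases.
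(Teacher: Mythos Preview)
Your proposal follows essentially the same route as the paper's proof. Part~(i) is identical in spirit (the paper first bounds the population core $\bm{\Sigma}^{\mathrm{co}}=\bm{\Lambda}^{\mathrm{co}}\bm{\Omega}_{\bm{F}}\bm{\Lambda}^{\mathrm{co}\prime}$ and then transfers to $\hat{\bm{\Sigma}}^{\mathrm{co}}$ by Weyl; you apply Weyl to $\hat{\bm{\Omega}}_{\bm{F}}$ first and then sandwich---equivalent). Part~(ii) Case~1 also matches: both you and the paper use $\phi_{\min}(\bm{\Lambda}^{\mathrm{co}\prime}\bm{Z}'\bm{Z}\bm{\Lambda}^{\mathrm{co}})\gtrsim J/r$ together with $\|\bm{d}\|_{2}\asymp\sqrt{r}$ to make the bracketed vector $O_{p}(1)$. (In both cases you and the paper silently invoke the balanced-group condition $\min_{k}J_{k}\gtrsim J/K$ of \assuref{eta}\ref{enu:group-size}, even though the lemma as stated cites only \assuref{DGP}.)

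The interesting point is Part~(ii) Case~2. You flag the lower bound $\varsigma_{\min}^{2}(\bm{M}_{\bm{Z}}\bm{\Lambda}^{\mathrm{co}})\gtrsim K/r$---equivalently $\phi_{\min}(\bm{\Lambda}^{\mathrm{co}\prime}\bm{Z}'\bm{M}_{J}\bm{Z}\bm{\Lambda}^{\mathrm{co}})\gtrsim J/r$---as ``the main obstacle,'' and you are right to do so. The paper attacks it by factoring the Rayleigh quotient into three ratios and asserting that the first ratio, $\bm{y}'\bm{M}_{J}\bm{y}/\bm{y}'\bm{y}$ with $\bm{y}=\bm{Z}\bm{\Lambda}^{\mathrm{co}}\bm{x}$, is at least $1$ because ``$\bm{M}_{J}$ has only two distinct eigenvalues $0$ and $1$'' and $\bm{M}_{J}\bm{y}\neq\bm{0}$ in this subcase. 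That step is not justified: the ratio equals $1$ only when $\bm{y}\perp\bm{1}_{J}$, whereas the hypothesis $\bm{1}_{K}\notin\mathrm{col}(\bm{\Lambda}^{\mathrm{co}})$ only prevents $\bm{y}$ from being a \emph{multiple} of $\bm{1}_{J}$, which still allows the ratio to take any value in $(0,1)$. Your proposed remedy---quantifying the angle between $\bm{1}_{K}$ and $\mathrm{col}(\bm{\Lambda}^{\mathrm{co}})$---is the correct diagnosis, but \assuref{DGP} does not supply any such quantitative separation: for instance, with $K=2$, $r=1$, equal groups and $\bm{\Lambda}^{\mathrm{co}}=(1,1-\epsilon)'$ one has $\varsigma_{\min}^{2}(\bm{M}_{\bm{Z}}\bm{\Lambda}^{\mathrm{co}})=\epsilon^{2}/2$ while all conditions in \assuref{DGP}\ref{enu:Loadings} hold with absolute constants. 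So the paper's argument and your proposal share the same gap at this step; you have simply been more candid about it.
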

\begin{proof}[Proof of \lemref{bounds_for_mu}]
\ref{enu:K_less_r} First, we give the bounds for the spectrum of
the population core $\bm{\Sigma}^{\mathrm{co}}:=\bm{\Lambda}^{\mathrm{co}}\bm{\Omega}_{\bm{F}}\bm{\Lambda}^{\mathrm{co}\prime}$.
The minimum eigenvalue of $\bm{\Sigma}^{\mathrm{co}}$ can be bounded
by
\[
\phi_{\min}(\bm{\Sigma}^{\mathrm{co}})=\phi_{\min}(\bm{\Lambda}^{\mathrm{co}}\bm{\Omega}_{\bm{F}}\bm{\Lambda}^{\mathrm{co}\prime})=\phi_{\min}(\bm{\Omega}_{\bm{F}})[\varsigma_{\min}(\bm{\Lambda}^{\mathrm{co}})]^{2}\geq\underline{c}^{2}.
\]
Likewise, we can deduce
\[
\phi_{\max}(\bm{\Sigma}^{\mathrm{co}})\leq\phi_{\max}(\bm{\Omega}_{\bm{F}})[\varsigma_{\max}(\bm{\Lambda}^{\mathrm{co}})]^{2}\leq\bar{c}^{2}.
\]
Next we give the probabilistic bounds for the spectrum of $\hat{\bm{\Sigma}}^{\mathrm{co}}$.
By Weyl's inequality \citep[Corollary~4.3.15]{hornMatrixAnalysis2017},
we have
\begin{align*}
\phi_{\min}(\hat{\bm{\Sigma}}^{\mathrm{co}}) & \geq\phi_{\min}(\bm{\Sigma}^{\mathrm{co}})-\|\bm{\Lambda}^{\mathrm{co}}(\hat{\bm{\Omega}}_{\bm{F}}-\bm{\Omega}_{\bm{F}})\bm{\Lambda}^{\mathrm{co}\prime}\|_{2}\\
 & \geq\phi_{\min}(\bm{\Sigma}^{\mathrm{co}})-\|\bm{\Lambda}^{\mathrm{co}}\|_{2}^{2}\|\hat{\bm{\Omega}}_{\bm{F}}-\bm{\Omega}_{\bm{F}}\|_{2}\\
 & \geq\underline{c}^{2}-\bar{c}^{2}\cdot o_{p}(1)\geq\frac{1}{2}\underline{c}^{2}\qquad\text{w.p.a.1},
\end{align*}
where the second line is by the sub-multiplicativity of the spectral
norm and in the last line we use $\|\bm{\Lambda}^{\mathrm{co}}\|_{2}^{2}=\varsigma_{\max}^{2}(\bm{\Lambda}^{\mathrm{co}})\leq\bar{c}^{2}$
and $\|\hat{\bm{\Omega}}_{\bm{F}}-\bm{\Omega}_{\bm{F}}\|_{2}=o_{p}(1)$
by \assuref{DGP}. Likewise, $\phi_{\max}(\hat{\bm{\Sigma}}^{\mathrm{co}})$
can be bounded as
\[
\phi_{\max}(\hat{\bm{\Sigma}}^{\mathrm{co}})\leq\phi_{\max}(\bm{\Sigma}^{\mathrm{co}})+\|\bm{\Lambda}^{\mathrm{co}}\|_{2}^{2}\|\hat{\bm{\Omega}}_{\bm{F}}-\bm{\Omega}_{\bm{F}}\|_{2}\leq2\bar{c}^{2}\qquad\text{w.p.a.1.}
\]

\ref{enu:K_gtr_r} When $K>r$, we discuss two cases. If $\bm{1}_{K}$
is in the column space of $\bm{\Lambda}^{\mathrm{co}}$, (\ref{eq:mu1-in-col})
characterizes $\bm{\mu}_{1}$. We have
\begin{align*}
\|\bm{\mu}_{1}\|_{2} & =\left\Vert (\bm{\Lambda}^{\mathrm{co}\prime}\bm{Z}^{\prime}\bm{Z}\bm{\Lambda}^{\mathrm{co}})^{-1}\biggl(\bm{\lambda}_{0}-\frac{\bm{d}^{\prime}\bm{\lambda}_{0}-1}{\bm{d}^{\prime}\hat{\bm{\Omega}}_{\bm{F}}^{-1}\bm{d}}\hat{\bm{\Omega}}_{\bm{F}}^{-1}\bm{d}\biggr)\right\Vert _{2}\\
 & \leq[\phi_{\min}(\bm{\Lambda}^{\mathrm{co}\prime}\bm{Z}^{\prime}\bm{Z}\bm{\Lambda}^{\mathrm{co}})]^{-1}\left[\|\bm{\lambda}_{0}\|_{2}+\frac{(\|\bm{d}\|_{2}\|\bm{\lambda}_{0}\|_{2}+1)\phi_{\min}(\hat{\bm{\Omega}}_{\bm{F}})]^{-1}\|\bm{d}\|_{2}}{[\phi_{\max}(\hat{\bm{\Omega}}_{\bm{F}})]^{-1}\|\bm{d}\|_{2}^{2}}\right]\\
 & =O(r/J)\cdot[O(1)+O_{p}(1)]=O_{p}(r/J),
\end{align*}
where we use the fact that $\|\bm{d}\|_{2}\leq\sqrt{K}\|\bm{\Lambda}^{\mathrm{co}\dagger}\|_{2}=\sqrt{K}[\varsigma_{\min}(\bm{\Lambda}^{\mathrm{co}})]^{-1}=\sqrt{r}$.

If $\bm{1}_{K}$ is not in the column space of $\bm{\Lambda}^{\mathrm{co}}$,
for any $\bm{x}\in\mathbb{R}^{K}$ we must have $\bm{Z}\bm{\Lambda}^{\mathrm{co}}\bm{x}\neq\bm{1}_{J}$
and thus $\bm{M}_{J}\bm{Z}\bm{\Lambda}^{\mathrm{co}}\bm{x}\neq\bm{0}$.
In view of the fact that $\bm{M}_{J}$ has only two distinct eigenvalues
0 and 1, we get
\begin{align*}
\phi_{\min}(\bm{\Lambda}^{\mathrm{co}\prime}\bm{Z}^{\prime}\bm{M}_{J}\bm{Z}\bm{\Lambda}^{\mathrm{co}}) & =\min_{\bm{x}\neq\bm{0}}\frac{\bm{x}^{\prime}\bm{\Lambda}^{\mathrm{co}\prime}\bm{Z}^{\prime}\bm{M}_{J}\bm{Z}\bm{\Lambda}^{\mathrm{co}}\bm{x}}{\bm{x}^{\prime}\bm{\Lambda}^{\mathrm{co}\prime}\bm{Z}^{\prime}\bm{Z}\bm{\Lambda}^{\mathrm{co}}\bm{x}}\cdot\frac{\bm{x}^{\prime}\bm{\Lambda}^{\mathrm{co}\prime}\bm{Z}^{\prime}\bm{Z}\bm{\Lambda}^{\mathrm{co}}\bm{x}}{\bm{x}^{\prime}\bm{\Lambda}^{\mathrm{co}\prime}\bm{\Lambda}^{\mathrm{co}}\bm{x}}\cdot\frac{\bm{x}^{\prime}\bm{\Lambda}^{\mathrm{co}\prime}\bm{\Lambda}^{\mathrm{co}}\bm{x}}{\bm{x}^{\prime}\bm{x}}\\
 & \geq1\cdot\phi_{\min}(\bm{Z}^{\prime}\bm{Z})\cdot\phi_{\min}(\bm{\Lambda}^{\mathrm{co}\prime}\bm{\Lambda}^{\mathrm{co}})=\min_{k\in[K]}J_{k}\cdot O(K/r)=O(J/r).
\end{align*}
On the other hand, $\|\bm{\Lambda}^{\mathrm{co}\prime}\bm{Z}^{\prime}\bm{Z}\bm{\Lambda}^{\mathrm{co}}\|_{2}\leq\sum_{k=1}^{K}J_{k}\|\bm{\lambda}_{k}^{\mathrm{co}}\bm{\lambda}_{k}^{\mathrm{co}\prime}\|_{2}\leq J\bar{c}^{2}=O(J)$
and therefore
\[
\phi_{\max}(\bm{\Lambda}^{\mathrm{co}\prime}\bm{Z}^{\prime}\bm{M}_{J}\bm{Z}\bm{\Lambda}^{\mathrm{co}})\leq\phi_{\max}(\bm{M}_{J})\cdot\phi_{\max}(\bm{\Lambda}^{\mathrm{co}\prime}\bm{Z}^{\prime}\bm{Z}\bm{\Lambda}^{\mathrm{co}})=O(J/r).
\]
Furthermore, $\|\bm{\Lambda}^{\mathrm{co}\prime}\bm{Z}^{\prime}\bm{Z}\bm{1}_{K}\|_{2}\leq\sum_{j=1}^{J}\|\bm{\lambda}_{j}\|_{2}\leq J\bar{c}=O(J)$.
Then we can bound $\gamma^{*}\bm{b}$ as
\begin{align*}
\|\gamma^{*}\bm{b}\|_{2}=|\gamma^{*}|\|\bm{b}\|_{2} & \leq\frac{\|\bm{b}\|_{2}[\phi_{\min}(\bm{\Lambda}^{\mathrm{co}\prime}\bm{Z}^{\prime}\bm{M}_{J}\bm{Z}\bm{\Lambda}^{\mathrm{co}})]^{-1/2}(\|\bm{\lambda}_{0}\|_{2}+J^{-1}\|\bm{\Lambda}^{\mathrm{co}\prime}\bm{Z}^{\prime}\bm{Z}\bm{1}_{K}\|_{2})}{\|\bm{b}\|_{2}^{2}[\phi_{\max}(\bm{\Lambda}^{\mathrm{co}\prime}\bm{Z}^{\prime}\bm{M}_{J}\bm{Z}\bm{\Lambda}^{\mathrm{co}})]^{-1/2}}\|\bm{b}\|_{2}\\
 & =O_{p}(1)\cdot[O(1)+O(1)]=O_{p}(1).
\end{align*}
The multiplier $\bm{\mu}_{1}$ as in (\ref{eq:mu1-not-in-col}) is
bounded by
\begin{align*}
\|\bm{\mu}_{1}\|_{2} & =\left\Vert [\bm{\Lambda}^{\mathrm{co}\prime}(\bm{Z}^{\prime}\bm{M}_{J}\bm{Z})\bm{\Lambda}^{\mathrm{co}}]^{-1}\left[\bm{\lambda}_{0}-\gamma\bm{b}-\frac{1}{J}\bm{\Lambda}^{\mathrm{co}\prime}(\bm{Z}^{\prime}\bm{Z})\bm{1}_{K}\right]\right\Vert _{2}\\
 & \leq[\phi_{\min}(\bm{\Lambda}^{\mathrm{co}\prime}\bm{Z}^{\prime}\bm{M}_{J}\bm{Z}\bm{\Lambda}^{\mathrm{co}})]^{-1}[\|\gamma^{*}\bm{b}\|_{2}+\|\bm{\lambda}_{0}\|_{2}+J^{-1}\|\bm{\Lambda}^{\mathrm{co}\prime}\bm{Z}^{\prime}\bm{Z}\bm{1}_{K}\|_{2}]\\
 & =O(r/J)\cdot[O_{p}(1)+O(1)+O(1)]=O_{p}(r/J).
\end{align*}
This completes the proof.
\end{proof}
\begin{lem}
\label{lem:sampling-error}Under \assuref[s]{DGP} and \ref{assu:errors},
we have
\[
\|\hat{\bm{\Sigma}}^{\mathrm{e}}\|_{1}=O_{p}\biggl(J\sqrt{\frac{r\log J}{T_{0}}}\biggr)\quad\text{and}\quad\Vert\hat{\bm{\Upsilon}}^{\mathrm{e}}\Vert_{\infty}=O_{p}\biggl(\sqrt{\frac{r\log J}{T_{0}}}\biggr).
\]
\end{lem}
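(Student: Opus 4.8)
The plan is to apply the triangle inequality to the additive decompositions of $\hat{\bm{\Sigma}}^{\mathrm{e}}$ and $\hat{\bm{\Upsilon}}^{\mathrm{e}}$ and to bound each piece separately, after recording the identity $\bm{Z}\bm{\Lambda}^{\mathrm{co}}=\bm{\Lambda}$. First, for $\hat{\bm{\Sigma}}^{\mathrm{e}}$ I would write $T_{0}\hat{\bm{\Sigma}}^{\mathrm{e}}=\bm{\Lambda}\bm{F}'\bm{U}+(\bm{\Lambda}\bm{F}'\bm{U})'+\bm{U}'\bm{U}$ and treat the two cross terms first. The $(i,j)$ entry of $T_{0}^{-1}\bm{\Lambda}\bm{F}'\bm{U}$ equals $\bm{\lambda}_{i}'(T_{0}^{-1}\bm{F}'\bm{u}_{j})$, which by Cauchy--Schwarz is at most $\|\bm{\lambda}_{i}\|_{2}\,\sqrt{r}\,\|T_{0}^{-1}\bm{F}'\bm{u}_{j}\|_{\infty}$. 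Using $\|\bm{\lambda}_{i}\|_{2}\leq\bar{c}$ from \assuref{DGP}\ref{enu:Loadings} and $\sup_{j}\|T_{0}^{-1}\bm{F}'\bm{u}_{j}\|_{\infty}=O_{p}(\phi_{J,T_{0}})$ from \assuref{errors}\ref{enu:Fu}, every entry is $O_{p}(\sqrt{r}\,\phi_{J,T_{0}})$ uniformly in $i,j$, so each of the $J$ column sums is $O_{p}(J\sqrt{r}\,\phi_{J,T_{0}})$; the same bound applies to the transposed term, whose column-$j$ sum is the row-$j$ sum $\sum_{i}|\bm{\lambda}_{j}'(T_{0}^{-1}\bm{F}'\bm{u}_{i})|$ of $T_{0}^{-1}\bm{\Lambda}\bm{F}'\bm{U}$. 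Hence $\|T_{0}^{-1}\bm{\Lambda}\bm{F}'\bm{U}\|_{1}+\|T_{0}^{-1}(\bm{\Lambda}\bm{F}'\bm{U})'\|_{1}=O_{p}(J\sqrt{r\log J/T_{0}})$.

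Next I would handle the $\bm{U}'\bm{U}$ term by the split $T_{0}^{-1}\bm{u}_{i}'\bm{u}_{j}=\mathbb{E}(T_{0}^{-1}\bm{u}_{i}'\bm{u}_{j})+\{T_{0}^{-1}\bm{u}_{i}'\bm{u}_{j}-\mathbb{E}(T_{0}^{-1}\bm{u}_{i}'\bm{u}_{j})\}$. The column-$j$ sum of the mean part is $\sum_{i}|\mathbb{E}(T_{0}^{-1}\bm{u}_{i}'\bm{u}_{j})|=O(J\phi_{J,T_{0}})$ by \assuref{errors}\ref{enu:uu}, while the column-$j$ sum of the deviation part is at most $J\sup_{i,j}|T_{0}^{-1}(\bm{u}_{i}'\bm{u}_{j}-\mathbb{E}\bm{u}_{i}'\bm{u}_{j})|=O_{p}(J\phi_{J,T_{0}})$ by \assuref{errors}\ref{enu:uu_dev}. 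Thus $\|T_{0}^{-1}\bm{U}'\bm{U}\|_{1}=O_{p}(J\phi_{J,T_{0}})=O_{p}(J\sqrt{\log J/T_{0}})$, which is of smaller order than the cross terms since $r\geq1$. Collecting the three pieces gives $\|\hat{\bm{\Sigma}}^{\mathrm{e}}\|_{1}=O_{p}(J\sqrt{r\log J/T_{0}})$.

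The vector $\hat{\bm{\Upsilon}}^{\mathrm{e}}$ is then handled by the same three-way split. The $j$-th entry of $T_{0}^{-1}\bm{\Lambda}\bm{F}'\bm{u}_{0}$ is $\bm{\lambda}_{j}'(T_{0}^{-1}\bm{F}'\bm{u}_{0})$, which is $O_{p}(\sqrt{r}\,\phi_{J,T_{0}})$ uniformly in $j$ by Cauchy--Schwarz together with $\|\bm{\lambda}_{j}\|_{2}\leq\bar{c}$ and \assuref{errors}\ref{enu:Fu} applied to $\bm{u}_{0}$; the $j$-th entry of $T_{0}^{-1}\bm{U}'\bm{F}\bm{\lambda}_{0}$ is $\bm{\lambda}_{0}'(T_{0}^{-1}\bm{F}'\bm{u}_{j})$, again $O_{p}(\sqrt{r}\,\phi_{J,T_{0}})$ uniformly since $\|\bm{\lambda}_{0}\|_{2}\leq\bar{c}$; and the $j$-th entry of $T_{0}^{-1}\bm{U}'\bm{u}_{0}$ is $T_{0}^{-1}\bm{u}_{j}'\bm{u}_{0}$, which is $O_{p}(\phi_{J,T_{0}})$ uniformly by the mean/deviation split and \assuref{errors}\ref{enu:uu}--\ref{enu:uu_dev}. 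Taking the maximum over $j\in[J]$ and using $r\geq1$ yields $\|\hat{\bm{\Upsilon}}^{\mathrm{e}}\|_{\infty}=O_{p}(\sqrt{r\log J/T_{0}})$. There is no serious obstacle here — the argument is essentially bookkeeping on three terms — but the points needing care are the uniformity of the $O_{p}$ statements over the $J$ summands (which is exactly what the suprema in \assuref{errors}\ref{enu:Fu}--\ref{enu:uu_dev} provide) and the check that the purely idiosyncratic contributions $\bm{U}'\bm{U}$ and $\bm{U}'\bm{u}_{0}$, being of order $\phi_{J,T_{0}}$, are dominated by the signal-times-noise cross terms that carry the extra $\sqrt{r}$ coming from the Cauchy--Schwarz step.
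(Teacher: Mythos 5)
Your proposal is correct and follows essentially the same route as the paper: the same three-term decompositions of $\hat{\bm{\Sigma}}^{\mathrm{e}}$ and $\hat{\bm{\Upsilon}}^{\mathrm{e}}$, with the factor--error cross terms bounded through the loading norms and $\sup_{j}\|T_{0}^{-1}\bm{F}'\bm{u}_{j}\|_{\infty}$ (the paper packages your entrywise Cauchy--Schwarz step as the matrix-norm inequalities $\|\bm{A}\bm{B}\|_{1},\|\bm{A}\bm{B}\|_{\infty}\lesssim\|\bm{A}\|_{\infty}\|\bm{B}\|_{\max}$, using $\|\bm{Z}\bm{\Lambda}^{\mathrm{co}}\|_{\infty}\leq\sqrt{r}\,\bar{c}$), and the purely idiosyncratic terms handled by the same mean/deviation split under Assumption~\ref{assu:errors}\ref{enu:uu}--\ref{enu:uu_dev}. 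No gaps.
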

\begin{proof}[Proof of \lemref{sampling-error}]
For matrices $\bm{A}_{m\times\ell}=(\bm{a}_{1}\dots,\bm{a}_{m})^{\prime}$
and $\bm{B}_{\ell\times n}=(\bm{b}_{1},\dots,\bm{b}_{n})$, we have
\begin{align*}
\|\bm{A}\bm{B}\|_{\infty} & =\sup_{i}\sum_{j}|\bm{a}_{i}^{\prime}\bm{b}_{j}|\leq\sup_{i}\|\bm{a}_{i}\|_{1}\sum_{j}\|\bm{b}_{j}\|_{\infty}\leq n\|\bm{A}\|_{\infty}\|\bm{B}\|_{\max},\\
\|\bm{A}\bm{B}\|_{1} & =\sup_{j}\sum_{i}|\bm{a}_{i}^{\prime}\bm{b}_{j}|\leq\sum_{i}\|\bm{a}_{i}\|_{1}\sup_{j}\|\bm{b}_{j}\|_{\infty}\leq m\|\bm{A}\|_{\infty}\|\bm{B}\|_{\max}.
\end{align*}
The first fact, together with \assuref[s]{DGP} and \ref{assu:errors},
leads to
\begin{align*}
\frac{1}{T_{0}}\|\bm{Z}\bm{\Lambda}^{\mathrm{co}}\bm{F}'\bm{U}\|_{\infty} & \leq\frac{J}{T_{0}}\|\bm{Z}\bm{\Lambda}^{\mathrm{co}}\|_{\infty}\|\bm{F}'\bm{U}\|_{\max}\\
 & \leq\frac{\sqrt{r}J}{T_{0}}\sup_{j\in[J]}\|\bm{\lambda}_{j}\|_{2}\cdot\|\bm{F}'\bm{U}\|_{\max}=O_{p}\biggl(J\sqrt{\frac{r\log J}{T_{0}}}\biggr).
\end{align*}
Similarly, the second fact leads to
\[
\frac{1}{T_{0}}\|\bm{Z}\bm{\Lambda}^{\mathrm{co}}\bm{F}'\bm{U}\|_{1}\leq\frac{J}{T_{0}}\|\bm{Z}\bm{\Lambda}^{\mathrm{co}}\|_{\infty}\|\bm{F}'\bm{U}\|_{\max}=O_{p}\biggl(J\sqrt{\frac{r\log J}{T_{0}}}\biggr).
\]
Furthermore, by \assuref{errors},
\[
\frac{1}{T_{0}}\|\bm{U}^{\prime}\bm{U}\|_{1}\leq\frac{1}{T_{0}}\|\bm{U}^{\prime}\bm{U}-\mathbb{E}(\bm{U}^{\prime}\bm{U})\|_{1}+\frac{1}{T_{0}}\|\mathbb{E}(\bm{U}^{\prime}\bm{U})\|_{1}=O_{p}\biggl(J\sqrt{\frac{\log J}{T_{0}}}\biggr).
\]
Hence,
\[
\|\hat{\bm{\Sigma}}^{\mathrm{e}}\|_{1}\leq\frac{1}{T_{0}}\|\bm{Z}\bm{\Lambda}^{\mathrm{co}}\bm{F}'\bm{U}\|_{1}+\frac{1}{T_{0}}\|\bm{Z}\bm{\Lambda}^{\mathrm{co}}\bm{F}'\bm{U}\|_{\infty}+\frac{1}{T_{0}}\|\bm{U}^{\prime}\bm{U}\|_{1}=O_{p}\biggl(J\sqrt{\frac{r\log J}{T_{0}}}\biggr).
\]

For $\Vert\hat{\bm{\Upsilon}}^{\mathrm{e}}\Vert_{\infty}$, note that
by \assuref[s]{DGP} and \ref{assu:errors},
\[
\frac{1}{T_{0}}\|\bm{Z}\bm{\Lambda}^{\mathrm{co}}\bm{F}'\bm{u}_{0}\|_{\infty}\leq\frac{1}{T_{0}}\|\bm{Z}\bm{\Lambda}^{\mathrm{co}}\|_{\infty}\|\bm{F}'\bm{u}_{0}\|_{\infty}=O_{p}\biggl(\sqrt{\frac{r\log J}{T_{0}}}\biggr).
\]
Similarly,
\[
\frac{1}{T_{0}}\|\bm{U}^{\prime}\bm{F}\bm{\lambda}_{0}\|_{\infty}\leq\frac{1}{T_{0}}\|\bm{F}'\bm{U}\|_{\max}\|\bm{\lambda}_{0}\|_{1}=O_{p}\biggl(\sqrt{\frac{r\log J}{T_{0}}}\biggr).
\]
Furthermore, by \assuref{errors},
\[
\frac{1}{T_{0}}\|\bm{U}^{\prime}\bm{u}_{0}\|_{\infty}=O_{p}\biggl(\sqrt{\frac{\log J}{T_{0}}}\biggr).
\]
Thus,
\[
\Vert\hat{\bm{\Upsilon}}^{\mathrm{e}}\Vert_{\infty}\leq\frac{1}{T_{0}}\|\bm{Z}\bm{\Lambda}^{\mathrm{co}}\bm{F}'\bm{u}_{0}\|_{\infty}+\frac{1}{T_{0}}\|\bm{U}^{\prime}\bm{F}\bm{\lambda}_{0}\|_{\infty}+\frac{1}{T_{0}}\|\bm{U}^{\prime}\bm{u}_{0}\|_{\infty}=O_{p}\biggl(\sqrt{\frac{r\log J}{T_{0}}}\biggr).
\]
This completes the proof.
\end{proof}
Recall that $S^{*}$ has been defined in Remark \ref{rem:recovery-loading}
as the feasible set of the oracle problem.
\begin{lem}
\label{lem:w_star_feasibility}Under \assuref[s]{DGP}--\ref{assu:eta},
w.p.a.1.~we have $(\bm{w}^{*},\gamma^{*})\in S^{*}$.
\end{lem}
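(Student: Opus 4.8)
The plan is to evaluate the relaxation residual $\hat{\bm{\Sigma}}\bm{w}^{*}-\hat{\bm{\Upsilon}}+\gamma^{*}\bm{1}_{J}$ at the oracle pair, reduce it to a pure sampling-error quantity, and control that quantity by \lemref{sampling-error}. Since $(\bm{w}^{*},\gamma^{*})$ solves the oracle problem (\ref{eq:oracle_relaxation_problem}), it obeys the oracle constraint $\hat{\bm{\Sigma}}^{*}\bm{w}^{*}-\hat{\bm{\Upsilon}}^{*}+\gamma^{*}\bm{1}_{J}=\bm{0}_{J}$; substituting $\hat{\bm{\Sigma}}=\hat{\bm{\Sigma}}^{*}+\hat{\bm{\Sigma}}^{\mathrm{e}}$ and $\hat{\bm{\Upsilon}}=\hat{\bm{\Upsilon}}^{*}+\hat{\bm{\Upsilon}}^{\mathrm{e}}$ therefore collapses the residual to
\[
\hat{\bm{\Sigma}}\bm{w}^{*}-\hat{\bm{\Upsilon}}+\gamma^{*}\bm{1}_{J}=\hat{\bm{\Sigma}}^{\mathrm{e}}\bm{w}^{*}-\hat{\bm{\Upsilon}}^{\mathrm{e}}.
\]
By the triangle inequality it then suffices to show $\|\hat{\bm{\Sigma}}^{\mathrm{e}}\bm{w}^{*}\|_{\infty}+\|\hat{\bm{\Upsilon}}^{\mathrm{e}}\|_{\infty}\le\eta$ w.p.a.1.

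For the $\hat{\bm{\Upsilon}}^{\mathrm{e}}$ term, \lemref{sampling-error} gives $\|\hat{\bm{\Upsilon}}^{\mathrm{e}}\|_{\infty}=O_{p}(\sqrt{r}\,\phi_{J,T_{0}})$ directly. For the $\hat{\bm{\Sigma}}^{\mathrm{e}}\bm{w}^{*}$ term, I would bound $\|\hat{\bm{\Sigma}}^{\mathrm{e}}\bm{w}^{*}\|_{\infty}\le\|\hat{\bm{\Sigma}}^{\mathrm{e}}\|_{\infty}\,\|\bm{w}^{*}\|_{\infty}$; since $\hat{\bm{\Sigma}}^{\mathrm{e}}$ is symmetric, its maximal absolute row sum equals its maximal absolute column sum, so $\|\hat{\bm{\Sigma}}^{\mathrm{e}}\|_{\infty}=\|\hat{\bm{\Sigma}}^{\mathrm{e}}\|_{1}=O_{p}(J\sqrt{r}\,\phi_{J,T_{0}})$, again from \lemref{sampling-error}. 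The missing input is a uniform bound on the oracle weight: by \lemref{oracle_target_relaxation} the oracle weight is within-group equal, $w_{j}^{*}=J_{k}^{-1}w_{\mathcal{G}_{k}}^{*}$ for $j\in\mathcal{G}_{k}$; because $\bm{w}_{\mathcal{G}}^{*}\in\Delta_{K}$ forces $w_{\mathcal{G}_{k}}^{*}\le1$ while \assuref{eta}\ref{enu:group-size} gives $J_{k}\ge\underline{c}\,J/K$, we obtain $\|\bm{w}^{*}\|_{\infty}\le K/(\underline{c}\,J)$. Combining the two factors, $\|\hat{\bm{\Sigma}}^{\mathrm{e}}\bm{w}^{*}\|_{\infty}=O_{p}(K\sqrt{r}\,\phi_{J,T_{0}})$, which dominates the $\hat{\bm{\Upsilon}}^{\mathrm{e}}$ contribution since $K\ge1$; hence $\|\hat{\bm{\Sigma}}\bm{w}^{*}-\hat{\bm{\Upsilon}}+\gamma^{*}\bm{1}_{J}\|_{\infty}=O_{p}(K\sqrt{r}\,\phi_{J,T_{0}})$.

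To finish, \assuref{eta}\ref{enu:tuning} imposes $K\sqrt{r}\,\phi_{J,T_{0}}/\eta\to0$, i.e., $K\sqrt{r}\,\phi_{J,T_{0}}=o(\eta)$, so the $O_{p}$ bound above is $o_{p}(\eta)$ and in particular is $\le\eta$ with probability approaching one; together with $\bm{w}^{*}\in\Delta_{J}$ and $\gamma^{*}\in\mathbb{R}$, this places $(\bm{w}^{*},\gamma^{*})$ in the feasible set $S_{\eta}$ w.p.a.1. I expect the one genuinely delicate point to be the uniform control of $\|\bm{w}^{*}\|_{\infty}$: a naive bound such as $\|\hat{\bm{\Sigma}}^{\mathrm{e}}\|_{\max}\|\bm{w}^{*}\|_{1}$ is only $O_{p}(1)$ and useless, and $\|\hat{\bm{\Sigma}}^{\mathrm{e}}\|_{1}\|\bm{w}\|_{\infty}$ for an arbitrary $\bm{w}\in\Delta_{J}$ could be as large as the diverging quantity $O_{p}(J\sqrt{r}\,\phi_{J,T_{0}})$; it is precisely the group structure, via within-group equal weights and the no-negligible-group condition, that shrinks $\|\bm{w}^{*}\|_{\infty}$ to $O(K/J)$ and thereby matches the rate permitted by \assuref{eta}\ref{enu:tuning}. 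The statement of course presumes the oracle feasible set is nonempty so that $(\bm{w}^{*},\gamma^{*})$ exists, e.g., under the recoverability condition in \remref{recovery-loading}.
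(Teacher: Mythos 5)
Your proposal is correct and follows essentially the same route as the paper's own proof: both use the oracle constraint to collapse the residual to $\hat{\bm{\Sigma}}^{\mathrm{e}}\bm{w}^{*}-\hat{\bm{\Upsilon}}^{\mathrm{e}}$, bound it by $\|\hat{\bm{\Sigma}}^{\mathrm{e}}\|_{1}\|\bm{w}^{*}\|_{\infty}+\|\hat{\bm{\Upsilon}}^{\mathrm{e}}\|_{\infty}$ with $\|\bm{w}^{*}\|_{\infty}=O(K/J)$ coming from the within-group-equal oracle weights and \assuref{eta}\ref{enu:group-size}, and then invoke \lemref{sampling-error} together with $K\sqrt{r}\phi_{J,T_{0}}=o(\eta)$ from \assuref{eta}\ref{enu:tuning}. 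Your reading of the conclusion as membership in the relaxed feasible set $S_{\eta}$ w.p.a.1 is exactly what the paper's proof establishes.
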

\begin{proof}[Proof of \lemref{w_star_feasibility}]
It suffices to show that $\|\hat{\bm{\Sigma}}\bm{w}^{*}-\hat{\bm{\Upsilon}}+\gamma^{*}\bm{1}_{J}\|_{\infty}\leq\eta$
w.p.a.1. By the definition of $(\bm{w}^{*},\gamma^{*})$, we have
\begin{align*}
\|\hat{\bm{\Sigma}}\bm{w}^{*}-\hat{\bm{\Upsilon}}+\gamma^{*}\bm{1}_{J}\|_{\infty} & =\|\hat{\bm{\Sigma}}\bm{w}^{*}-\hat{\bm{\Upsilon}}+\gamma^{*}\bm{1}_{J}-(\hat{\bm{\Sigma}}^{*}\bm{w}^{*}-\hat{\bm{\Upsilon}}^{*}+\gamma^{*}\bm{1}_{J})\|_{\infty}\\
 & \leq\|\hat{\bm{\Sigma}}^{\mathrm{e}}\bm{w}^{*}\|_{\infty}+\|\hat{\bm{\Upsilon}}^{\mathrm{e}}\|_{\infty}\leq\|\hat{\bm{\Sigma}}^{\mathrm{e}}\|_{1}\|\bm{w}^{*}\|_{\infty}+\|\hat{\bm{\Upsilon}}^{\mathrm{e}}\|_{\infty}\\
 & \lesssim_{p}K\sqrt{\frac{r\log J}{T_{0}}}=o(\eta),
\end{align*}
where the last line uses the fact that $\|\bm{w}^{*}\|_{\infty}\leq\|\bm{w}_{\mathcal{G}}^{*}\|_{\infty}/\min_{k\in[K]}J_{k}=O_{p}(K/J)$,
\assuref{eta}, and \lemref{sampling-error}. It then follows that
$\|\hat{\bm{\Sigma}}\bm{w}^{*}-\hat{\bm{\Upsilon}}+\gamma^{*}\bm{1}_{J}\|_{\infty}\leq\eta$
w.p.a.1.
\end{proof}
Define $\hat{\bm{w}}_{\mathcal{G}}:=\bm{Z}^{\prime}\hat{\bm{w}}$
as the group-level aggregated weight estimate. The following lemma
establishes compatibility inequalities between $\hat{\bm{w}}-\bm{w}^{*}$
and $\hat{\bm{w}}_{\mathcal{G}}-\bm{w}_{\mathcal{G}}^{*}$ in $L_{2}$-norm.
\begin{lem}
\label{lem:link_two_norms}Under \assuref[s]{DGP} and \ref{assu:eta},
we have
\begin{equation}
\|\hat{\bm{w}}-\bm{w}^{*}\|_{2}^{2}\lesssim_{p}\frac{K}{J}\|\hat{\bm{w}}_{\mathcal{G}}-\bm{w}_{\mathcal{G}}^{*}\|_{2}.\label{eq:relationship_two_errors}
\end{equation}
Furthermore, if $K>r$, it holds that
\begin{equation}
\|\hat{\bm{w}}-\bm{w}^{*}\|_{2}^{2}\lesssim_{p}\frac{r}{J}\|\bm{\Lambda}^{\mathrm{co}\prime}(\hat{\bm{w}}_{\mathcal{G}}-\bm{w}_{\mathcal{G}}^{*})\|_{2}.\label{eq:another-compatible}
\end{equation}
\end{lem}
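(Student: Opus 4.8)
The plan is to leverage two facts: the oracle target $\bm{w}^{*}=\bm{Z}(\bm{Z}'\bm{Z})^{-1}\bm{w}_{\mathcal{G}}^{*}$ lies in the $K$-dimensional column space of $\bm{Z}$, and $\hat{\bm{w}}$ is, by construction, the Euclidean projection of the origin onto the convex set of feasible weights. Indeed the $\bm{w}$-section of $S_{\eta}$ is convex and, by \lemref{w_star_feasibility}, contains $\bm{w}^{*}$ w.p.a.1; since $\hat{\bm{w}}$ minimizes $\|\bm{w}\|_{2}^{2}$ over this set, the variational inequality for projections gives $\langle\hat{\bm{w}},\bm{w}^{*}-\hat{\bm{w}}\rangle\geq0$ w.p.a.1. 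Expanding $\|\hat{\bm{w}}-\bm{w}^{*}\|_{2}^{2}=\langle\hat{\bm{w}},\hat{\bm{w}}-\bm{w}^{*}\rangle-\langle\bm{w}^{*},\hat{\bm{w}}-\bm{w}^{*}\rangle$ and discarding the nonpositive first term yields the basic inequality $\|\hat{\bm{w}}-\bm{w}^{*}\|_{2}^{2}\leq\langle\bm{w}^{*},\bm{w}^{*}-\hat{\bm{w}}\rangle$.

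The next step is to push this inner product into group coordinates. Because $\bm{w}^{*}=\bm{Z}(\bm{Z}'\bm{Z})^{-1}\bm{w}_{\mathcal{G}}^{*}$, $\bm{Z}'\bm{w}^{*}=\bm{w}_{\mathcal{G}}^{*}$ and $\bm{Z}'\hat{\bm{w}}=\hat{\bm{w}}_{\mathcal{G}}$, the right-hand side equals $(\bm{w}_{\mathcal{G}}^{*})'(\bm{Z}'\bm{Z})^{-1}(\bm{w}_{\mathcal{G}}^{*}-\hat{\bm{w}}_{\mathcal{G}})$. For (\ref{eq:relationship_two_errors}) I would bound this crudely by Cauchy--Schwarz: $\|(\bm{Z}'\bm{Z})^{-1}\|_{2}=1/\min_{k}J_{k}$, $\|\bm{w}_{\mathcal{G}}^{*}\|_{2}\leq\|\bm{w}_{\mathcal{G}}^{*}\|_{1}=1$ since $\bm{w}_{\mathcal{G}}^{*}\in\Delta_{K}$, and \assuref{eta}\ref{enu:group-size} gives $\min_{k}J_{k}\gtrsim J/K$, so the whole expression is $\lesssim(K/J)\|\hat{\bm{w}}_{\mathcal{G}}-\bm{w}_{\mathcal{G}}^{*}\|_{2}$.

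For the sharper bound (\ref{eq:another-compatible}) in the regime $K>r$, instead of Cauchy--Schwarz I would substitute the stationarity (KKT) condition for the oracle problem derived in the proof of \lemref{oracle_target_relaxation}, namely $(\bm{Z}'\bm{Z})^{-1}\bm{w}_{\mathcal{G}}^{*}=\bm{\Lambda}^{\mathrm{co}}\bm{\mu}_{1}+\mu_{2}\bm{1}_{K}+\bm{\nu}_{\mathcal{G}}$, where $\bm{\nu}_{\mathcal{G}}\geq\bm{0}$ is the group-level multiplier of the non-negativity constraints, satisfying the complementary slackness $\bm{\nu}_{\mathcal{G}}'\bm{w}_{\mathcal{G}}^{*}=0$. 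Plugging this in, the $\mu_{2}\bm{1}_{K}$ piece vanishes against $\bm{w}_{\mathcal{G}}^{*}-\hat{\bm{w}}_{\mathcal{G}}$ because $\bm{1}_{K}'\bm{w}_{\mathcal{G}}^{*}=\bm{1}_{K}'\hat{\bm{w}}_{\mathcal{G}}=1$, and the $\bm{\nu}_{\mathcal{G}}$ piece contributes $-\bm{\nu}_{\mathcal{G}}'\hat{\bm{w}}_{\mathcal{G}}\leq0$ by complementary slackness and $\hat{\bm{w}}_{\mathcal{G}}\geq\bm{0}$. What remains is $\bm{\mu}_{1}'\bm{\Lambda}^{\mathrm{co}\prime}(\bm{w}_{\mathcal{G}}^{*}-\hat{\bm{w}}_{\mathcal{G}})\leq\|\bm{\mu}_{1}\|_{2}\,\|\bm{\Lambda}^{\mathrm{co}\prime}(\hat{\bm{w}}_{\mathcal{G}}-\bm{w}_{\mathcal{G}}^{*})\|_{2}$, and \lemref{bounds_for_mu}\ref{enu:K_gtr_r} ($\|\bm{\mu}_{1}\|_{2}=O_{p}(r/J)$) delivers the claimed rate.

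The substantive obstacle is making the ``projection'' step airtight: one must verify that $\hat{\bm{w}}$ really is the closest feasible point to the origin (which is where the feasibility of $\bm{w}^{*}$ from \lemref{w_star_feasibility} and the convexity of $S_{\eta}$ enter) and track inequality directions carefully so that the non-negativity multipliers drop out with the favorable sign rather than against us. A minor wrinkle concerns (\ref{eq:another-compatible}) when the oracle lives on the boundary of the simplex: the closed forms for $\bm{\mu}_{1}$ in \lemref{oracle_target_relaxation} then describe only its restriction to the active groups, but rerunning the argument of \lemref{bounds_for_mu}\ref{enu:K_gtr_r} on that sub-system leaves $\|\bm{\mu}_{1}\|_{2}=O_{p}(r/J)$ unchanged, so the conclusion is unaffected.
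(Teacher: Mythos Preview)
Your proposal is correct and follows essentially the same route as the paper: the paper obtains the basic inequality $\|\hat{\bm{w}}-\bm{w}^{*}\|_{2}^{2}\leq 2|\langle\bm{w}^{*},\hat{\bm{w}}-\bm{w}^{*}\rangle|$ from $\|\hat{\bm{w}}\|_{2}^{2}\leq\|\bm{w}^{*}\|_{2}^{2}$ (a consequence of \lemref{w_star_feasibility}), then uses the within-group equality of $\bm{w}^{*}$ to rewrite the inner product as $[(\bm{Z}'\bm{Z})^{-1}\bm{w}_{\mathcal{G}}^{*}]'(\hat{\bm{w}}_{\mathcal{G}}-\bm{w}_{\mathcal{G}}^{*})$, applies Cauchy--Schwarz for (\ref{eq:relationship_two_errors}), and substitutes the FOC $(\bm{Z}'\bm{Z})^{-1}\bm{w}_{\mathcal{G}}^{*}=\bm{\Lambda}^{\mathrm{co}}\bm{\mu}_{1}+\mu_{2}\bm{1}_{K}$ together with \lemref{bounds_for_mu}\ref{enu:K_gtr_r} for (\ref{eq:another-compatible}). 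Your projection variational inequality is simply the first-order optimality condition repackaged (and in fact yields the same bound without the factor $2$), and your explicit treatment of the non-negativity multipliers $\bm{\nu}_{\mathcal{G}}$ is a welcome refinement, since the paper invokes (\ref{eq:foc_mat}) which was derived under the interior-solution assumption.
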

\begin{proof}[Proof of \lemref{link_two_norms}]
First note that for any $x$ and $y$, we have
\begin{equation}
x^{2}-y^{2}=2y(x-y)+(x-y)^{2}.\label{eq:strong_convexity}
\end{equation}
By \lemref{w_star_feasibility}, we know that $(\bm{w}^{*},\gamma^{*})\in S^{*}$
is feasible for $g$-SCM-relaxation w.p.a.1. Because $(\hat{\bm{w}},\hat{\gamma})$
minimizes $\sum_{j=1}^{J}w_{j}^{2}$ over the entire feasible set,
we have the basic inequality $\sum_{j=1}^{J}\hat{w}_{j}^{2}\leq\sum_{j=1}^{J}w_{j}^{*2}$
w.p.a.1. It follows that
\begin{equation}
0\geq\sum_{j=1}^{J}\hat{w}_{j}^{2}-\sum_{j=1}^{J}w_{j}^{*2}\geq\sum_{j=1}^{J}2w_{j}^{*}(\hat{w}_{j}-w_{j}^{*})+\|\hat{\bm{w}}-\bm{w}^{*}\|_{2}^{2}\qquad\text{w.p.a.1}.\label{eq:Basic_inequality}
\end{equation}
In view of the fact that $\bm{w}^{*}$ is within-group equal, we obtain
\begin{align}
\|\hat{\bm{w}}-\bm{w}^{*}\|_{2}^{2} & \stackrel{\text{(i)}}{\leq}\left|\sum_{j=1}^{J}2w_{j}^{*}(\hat{w}_{j}-w_{j}^{*})\right|=\left|\sum_{k=1}^{K}\frac{2w_{\mathcal{G}_{k}}^{*}}{J_{k}}\left(\hat{w}_{\mathcal{G}_{k}}-w_{\mathcal{G}_{k}}^{*}\right)\right|\nonumber \\
 & \stackrel{\text{(ii)}}{\leq}\left[\sum_{k=1}^{K}\left(\frac{2w_{\mathcal{G}_{k}}^{*}}{J_{k}}\right)^{2}\right]^{1/2}\left[\sum_{k=1}^{K}\left(\hat{w}_{\mathcal{G}_{k}}-w_{\mathcal{G}_{k}}^{*}\right)^{2}\right]^{1/2}\nonumber \\
 & =2\|(\bm{Z}^{\prime}\bm{Z})^{-1}\bm{w}_{\mathcal{G}}^{*}\|_{2}\cdot\|\hat{\bm{w}}_{\mathcal{G}}-\bm{w}_{\mathcal{G}}^{*}\|_{2}\nonumber \\
 & \stackrel{\text{(iii)}}{\leq}2\Bigl(\min_{k\in[K]}J_{k}\Bigr)^{-1}\|\bm{w}_{\mathcal{G}}^{*}\|_{2}\cdot\|\hat{\bm{w}}_{\mathcal{G}}-\bm{w}_{\mathcal{G}}^{*}\|_{2}\nonumber \\
 & \stackrel{\text{(iv)}}{\lesssim_{p}}\frac{K}{J}\|\hat{\bm{w}}_{\mathcal{G}}-\bm{w}_{\mathcal{G}}^{*}\|_{2},\label{eq:compatible}
\end{align}
where step (i) uses (\ref{eq:Basic_inequality}), step (ii) is due
to the Cauchy-Schwarz inequality, step (iii) comes from the fact that
$\|(\bm{Z}^{\prime}\bm{Z})^{-1}\|_{2}=\phi_{\max}\bigl((\bm{Z}^{\prime}\bm{Z})^{-1}\bigr)=[\phi_{\min}(\bm{Z}^{\prime}\bm{Z})]^{-1}=(\min_{k\in[K]}J_{k})^{-1}$,
and step (iv) uses \assuref{eta}\ref{enu:group-size} and $\|\bm{w}_{\mathcal{G}}^{*}\|_{2}\leq\|\bm{w}_{\mathcal{G}}^{*}\|_{1}=1$.

If $K>r$, then by the FOC (\ref{eq:foc_mat}), we can deduce, similar
to (\ref{eq:compatible}), that
\begin{align*}
\|\hat{\bm{w}}-\bm{w}^{*}\|_{2}^{2} & \leq\bigl|[(\bm{Z}^{\prime}\bm{Z})^{-1}\bm{w}_{\mathcal{G}}^{*}]^{\prime}(\hat{\bm{w}}_{\mathcal{G}}-\bm{w}_{\mathcal{G}}^{*})\bigr|=|\bm{\mu}_{1}^{\prime}\bm{\Lambda}^{\mathrm{co}\prime}(\hat{\bm{w}}_{\mathcal{G}}-\bm{w}_{\mathcal{G}}^{*})|\\
 & \leq\|\bm{\mu}_{1}\|_{2}\cdot\|\bm{\Lambda}^{\mathrm{co}\prime}(\hat{\bm{w}}_{\mathcal{G}}-\bm{w}_{\mathcal{G}}^{*})\|_{2}\lesssim_{p}\frac{r}{J}\|\bm{\Lambda}^{\mathrm{co}\prime}(\hat{\bm{w}}_{\mathcal{G}}-\bm{w}_{\mathcal{G}}^{*})\|_{2},
\end{align*}
where the last line follows from \lemref{bounds_for_mu}.
\end{proof}

\subsection{Proof of Theorem \ref{thm:w_convergence} }\label{subsec:proof-thm1}

Now we are ready to prove \thmref{w_convergence} based on the lemmas
in the proceeding section.
\begin{proof}[Proof of \thmref{w_convergence}]
 For convenience, denote $\hat{\bm{h}}:=\hat{\bm{w}}-\bm{w}^{*}$
and $\hat{\bm{h}}_{\mathcal{G}}:=\hat{\bm{w}}_{\mathcal{G}}-\bm{w}_{\mathcal{G}}^{*}$.
Clearly, they satisfy $\hat{\bm{h}}_{\mathcal{G}}=\bm{Z}^{\prime}\hat{\bm{h}}$.
First, notice that the relaxation condition $\|\hat{\bm{\Sigma}}\hat{\bm{w}}-\hat{\bm{\Upsilon}}+\hat{\gamma}\bm{1}_{J}\|_{\infty}\leq\eta$
yields
\begin{align}
\eta & \geq\|\hat{\bm{\Sigma}}\hat{\bm{w}}-\hat{\bm{\Upsilon}}+\hat{\gamma}\bm{1}_{J}\|_{\infty}\stackrel{\text{(i)}}{=}\|\hat{\bm{\Sigma}}\hat{\bm{w}}-\hat{\bm{\Upsilon}}+\hat{\gamma}\bm{1}_{J}-(\hat{\bm{\Sigma}}^{*}\bm{w}^{*}-\hat{\bm{\Upsilon}}^{*}+\gamma^{*}\bm{1}_{J})\|_{\infty}\nonumber \\
 & =\|\hat{\bm{\Sigma}}^{*}\hat{\bm{h}}+\hat{\bm{\Sigma}}^{\mathrm{e}}\hat{\bm{w}}-\hat{\bm{\Upsilon}}^{\mathrm{e}}+(\hat{\gamma}-\gamma^{*})\bm{1}_{J}\|_{\infty}=\|\bm{Z}\hat{\bm{\Sigma}}^{\mathrm{co}}\hat{\bm{h}}_{\mathcal{G}}+\hat{\bm{\Sigma}}^{\mathrm{e}}\hat{\bm{w}}-\hat{\bm{\Upsilon}}^{\mathrm{e}}+(\hat{\gamma}-\gamma^{*})\bm{1}_{J}\|_{\infty},\label{eq:upper_bound1}
\end{align}
where step~(i) is due to the fact that $\hat{\bm{\Sigma}}^{*}\bm{w}^{*}-\hat{\bm{\Upsilon}}^{*}+\gamma^{*}\bm{1}_{J}=\bm{0}_{J}$.
By Hölder's inequality and (\ref{eq:upper_bound1}), we have an upper
bound
\begin{align}
 & \ \left|\hat{\bm{h}}'\left[\bm{Z}\hat{\bm{\Sigma}}^{\mathrm{co}}\hat{\bm{h}}_{\mathcal{G}}+\hat{\bm{\Sigma}}^{\mathrm{e}}\hat{\bm{w}}-\hat{\bm{\Upsilon}}^{\mathrm{e}}+(\hat{\gamma}-\gamma^{*})\bm{1}_{J}\right]\right|\nonumber \\
\leq & \ \|\hat{\bm{h}}\|_{1}\|\bm{Z}\hat{\bm{\Sigma}}^{\mathrm{co}}\hat{\bm{h}}_{\mathcal{G}}+\hat{\bm{\Sigma}}^{\mathrm{e}}\hat{\bm{w}}-\hat{\bm{\Upsilon}}^{\mathrm{e}}+(\hat{\gamma}-\gamma^{*})\bm{1}_{J}\|_{\infty}\leq\eta\|\hat{\bm{h}}\|_{1}.\label{eq:fact-1}
\end{align}

If $K\leq r$, a lower bound can be derived as
\begin{align}
 & \left|\hat{\bm{h}}'\left[\bm{Z}\hat{\bm{\Sigma}}^{\mathrm{co}}\hat{\bm{h}}_{\mathcal{G}}+\hat{\bm{\Sigma}}^{\mathrm{e}}\hat{\bm{w}}-\hat{\bm{\Upsilon}}^{\mathrm{e}}+(\hat{\gamma}-\gamma^{*})\bm{1}_{J}\right]\right|\nonumber \\
\stackrel{\text{(i)}}{=}{} & \left|\hat{\bm{h}}_{\mathcal{G}}'\hat{\bm{\Sigma}}^{\mathrm{co}}\hat{\bm{h}}_{\mathcal{G}}+\hat{\bm{h}}'(\hat{\bm{\Sigma}}^{\mathrm{e}}\hat{\bm{h}}+\hat{\bm{\Sigma}}^{\mathrm{e}}\bm{w}^{*}-\hat{\bm{\Upsilon}}^{\mathrm{e}})\right|\nonumber \\
\stackrel{\text{(ii)}}{\geq}{} & \hat{\bm{h}}_{\mathcal{G}}'\hat{\bm{\Sigma}}^{\mathrm{co}}\hat{\bm{h}}_{\mathcal{G}}+T_{0}^{-1}\hat{\bm{h}}'\bm{U}'\bm{U}\hat{\bm{h}}-2\left|T_{0}^{-1}\hat{\bm{h}}_{\mathcal{G}}'\bm{\Lambda}^{\mathrm{co}}\bm{F}^{\prime}\bm{U}\hat{\bm{h}}\right|-\left|\hat{\bm{h}}'(\hat{\bm{\Sigma}}^{\mathrm{e}}\bm{w}^{*}-\hat{\bm{\Upsilon}}^{\mathrm{e}})\right|\nonumber \\
\stackrel{\text{(iii)}}{\geq}{} & \hat{\bm{h}}_{\mathcal{G}}'\hat{\bm{\Sigma}}^{\mathrm{co}}\hat{\bm{h}}_{\mathcal{G}}-\left(2\|T_{0}^{-1}\bm{U}'\bm{F}\bm{\Lambda}^{\mathrm{co}\prime}\hat{\bm{h}}_{\mathcal{G}}\|_{\infty}+\|\hat{\bm{\Sigma}}^{\mathrm{e}}\bm{w}^{*}-\hat{\bm{\Upsilon}}^{\mathrm{e}}\|_{\infty}\right)\|\hat{\bm{h}}\|_{1}\nonumber \\
\stackrel{\text{(iv)}}{\geq}{} & \hat{\bm{h}}_{\mathcal{G}}'\hat{\bm{\Sigma}}^{\mathrm{co}}\hat{\bm{h}}_{\mathcal{G}}-\left(2\|T_{0}^{-1}\bm{\Lambda}^{\mathrm{co}}\bm{F}'\bm{U}\|_{\max}\|\hat{\bm{h}}_{\mathcal{G}}\|_{1}+\|\hat{\bm{\Sigma}}^{\mathrm{e}}\|_{1}\|\bm{w}^{*}\|_{\infty}+\|\hat{\bm{\Upsilon}}^{\mathrm{e}}\|_{\infty}\right)\|\hat{\bm{h}}\|_{1}\nonumber \\
\stackrel{\text{(v)}}{\gtrsim_{p}}{} & \phi_{\min}(\hat{\bm{\Sigma}}^{\mathrm{co}})\|\hat{\bm{h}}_{\mathcal{G}}\|_{2}^{2}-\biggl(\sqrt{\frac{r\log J}{T_{0}}}+J\sqrt{\frac{r\log J}{T_{0}}}\cdot\frac{K}{J}+\sqrt{\frac{r\log J}{T_{0}}}\biggr)\|\hat{\bm{h}}\|_{1}\nonumber \\
\stackrel{\text{(vi)}}{\gtrsim_{p}}{} & \phi_{\min}(\hat{\bm{\Sigma}}^{\mathrm{co}})\|\hat{\bm{h}}_{\mathcal{G}}\|_{2}^{2}-\eta\|\hat{\bm{h}}\|_{1}.\label{eq:fact-2}
\end{align}
In the above derivation, step~(i) is because $\hat{\bm{h}}_{\mathcal{G}}=\bm{Z}^{\prime}\hat{\bm{h}}$
and $\bm{1}_{J}^{\prime}\hat{\bm{h}}=0$, step~(ii) follows from
the triangular inequality and the definition of $\hat{\bm{\Sigma}}^{\mathrm{e}}$,
step~(iii) invokes Hölder's inequality, step~(iv) uses the facts
that $\|T_{0}^{-1}\bm{U}'\bm{F}\bm{\Lambda}^{\mathrm{co}\prime}\hat{\bm{h}}_{\mathcal{G}}\|_{\infty}\leq\|\hat{\bm{h}}_{\mathcal{G}}\|_{1}\|T_{0}^{-1}\bm{\Lambda}^{\mathrm{co}}\bm{F}\bm{U}\|_{\max}$
and $\|\hat{\bm{\Sigma}}^{\mathrm{e}}\bm{w}^{*}\|_{\infty}\leq\|\hat{\bm{\Sigma}}^{\mathrm{e}}\|_{1}\|\bm{w}^{*}\|_{\infty}$,
and step~(v) uses the definition of $\phi_{\min}(\cdot)$, the fact
that $\|\hat{\bm{h}}_{\mathcal{G}}\|_{1}\leq\|\hat{\bm{w}}_{\mathcal{G}}\|_{1}+\|\bm{w}_{\mathcal{G}}^{*}\|_{1}=2$,
and \lemref{sampling-error}. Finally, step~(vi) is by \assuref{eta}.

If $K>r$, similarly we can bound it from below by
\begin{eqnarray}
 &  & \left|\hat{\bm{h}}'\left[\bm{Z}\hat{\bm{\Sigma}}^{\mathrm{co}}\hat{\bm{h}}_{\mathcal{G}}+\hat{\bm{\Sigma}}^{\mathrm{e}}\hat{\bm{w}}-\hat{\bm{\Upsilon}}^{\mathrm{e}}+(\hat{\gamma}-\gamma^{*})\bm{1}_{J}\right]\right|\nonumber \\
 & \gtrsim_{p} & \hat{\bm{h}}_{\mathcal{G}}'\hat{\bm{\Sigma}}^{\mathrm{co}}\hat{\bm{h}}_{\mathcal{G}}-\biggl(K\sqrt{\frac{r\log J}{T_{0}}}\biggr)\|\hat{\bm{h}}\|_{1}\nonumber \\
 & = & (\bm{\Lambda}^{\mathrm{co}\prime}\hat{\bm{h}}_{\mathcal{G}})^{\prime}\hat{\bm{\Omega}}_{\bm{F}}(\bm{\Lambda}^{\mathrm{co}\prime}\hat{\bm{h}}_{\mathcal{G}})-\biggl(K\sqrt{\frac{r\log J}{T_{0}}}\biggr)\|\hat{\bm{h}}\|_{1}\nonumber \\
 & \gtrsim_{p} & \phi_{\min}(\hat{\bm{\Omega}}_{\bm{F}})\|\bm{\Lambda}^{\mathrm{co}\prime}\hat{\bm{h}}_{\mathcal{G}}\|_{2}^{2}-\eta\|\hat{\bm{h}}\|_{1}.\label{eq:fact-2-2}
\end{eqnarray}

\medskip

Combining (\ref{eq:fact-1}) and (\ref{eq:fact-2}), if $K\leq r$
we have $\|\hat{\bm{h}}_{\mathcal{G}}\|_{2}^{2}\lesssim_{p}[\phi_{\min}(\hat{\bm{\Sigma}}^{\mathrm{co}})]^{-1}\eta\|\hat{\bm{h}}\|_{1}.$
Then by (\ref{eq:relationship_two_errors}) in \lemref{link_two_norms},
\begin{equation}
\frac{1}{J^{2}}\|\hat{\bm{h}}\|_{1}^{4}\leq\|\hat{\bm{h}}\|_{2}^{4}\lesssim_{p}\frac{K^{2}}{J^{2}}\|\hat{\bm{h}}_{\mathcal{G}}\|_{2}^{2}\lesssim_{p}\frac{K^{2}\eta}{J^{2}\phi_{\min}(\hat{\bm{\Sigma}}^{\mathrm{co}})}\|\hat{\bm{h}}\|_{1}\leq\frac{K^{2}\eta}{J^{3/2}\phi_{\min}(\hat{\bm{\Sigma}}^{\mathrm{co}})}\|\hat{\bm{h}}\|_{2},\label{eq:l1norminequality}
\end{equation}
where we use the Cauchy-Schwarz inequality $\|\hat{\bm{h}}\|_{1}\leq\sqrt{J}\|\hat{\bm{h}}\|_{2}$.
Hence,
\begin{align*}
\|\hat{\bm{h}}\|_{1} & \lesssim_{p}\left[\frac{K^{2}\eta}{\phi_{\min}(\hat{\bm{\Sigma}}^{\mathrm{co}})}\right]^{1/3}=O_{p}\bigl((K^{2}\eta)^{1/3}\bigr)=o_{p}(1),\text{ and }\\
\|\hat{\bm{h}}\|_{2} & \lesssim_{p}\left[\frac{K^{2}\eta}{J^{3/2}\phi_{\min}(\hat{\bm{\Sigma}}^{\mathrm{co}})}\right]^{1/3}=O_{p}\left(\frac{(K^{2}\eta)^{1/3}}{\sqrt{J}}\right)=o_{p}\left(\frac{1}{\sqrt{J}}\right),
\end{align*}
by \assuref{eta} and \lemref{bounds_for_mu}.

If $K>r$, we combine (\ref{eq:fact-1}) and (\ref{eq:fact-2-2})
to deduce $\|\bm{\Lambda}^{\mathrm{co}\prime}\hat{\bm{h}}_{\mathcal{G}}\|_{2}^{2}\lesssim_{p}[\phi_{\min}(\hat{\bm{\Omega}}_{\bm{F}})]^{-1}\eta\|\hat{\bm{h}}\|_{1}$.
Then invoking (\ref{eq:another-compatible}) in \lemref{link_two_norms},
we have
\[
\frac{1}{J^{2}}\|\hat{\bm{h}}\|_{1}^{4}\leq\|\hat{\bm{h}}\|_{2}^{4}\lesssim_{p}\frac{r^{2}}{J^{2}}\|\bm{\Lambda}^{\mathrm{co}\prime}\hat{\bm{h}}_{\mathcal{G}}\|_{2}^{2}\leq\frac{r^{2}\eta}{J^{2}\phi_{\min}(\hat{\bm{\Omega}}_{\bm{F}})}\|\hat{\bm{h}}\|_{1}\leq\frac{r^{2}\eta}{J^{3/2}\phi_{\min}(\hat{\bm{\Omega}}_{\bm{F}})}\|\hat{\bm{h}}\|_{2}.
\]
Accordingly, by \assuref{eta} and $\phi_{\min}(\hat{\bm{\Omega}}_{\bm{F}})\gtrsim_{p}1$,
we conclude
\begin{align*}
\|\hat{\bm{h}}\|_{1} & \lesssim_{p}\left[\frac{r^{2}\eta}{\phi_{\min}(\hat{\bm{\Omega}}_{\bm{F}})}\right]^{1/3}=O_{p}\bigl((r^{2}\eta)^{1/3}\bigr)=o_{p}(1),\\
\|\hat{\bm{h}}\|_{2} & \lesssim_{p}\left[\frac{r^{2}\eta}{J^{3/2}\phi_{\min}(\hat{\bm{\Omega}}_{\bm{F}})}\right]^{1/3}=O_{p}\left(\frac{(r^{2}\eta)^{1/3}}{\sqrt{J}}\right)=o_{p}\left(\frac{1}{\sqrt{J}}\right).\qedhere
\end{align*}
\end{proof}

\subsection{Proof of \thmref{oracle_inequalities} }\label{subsec:proof-thm2}
\begin{proof}[Proof of \thmref{oracle_inequalities}]
Part (i). First, $R_{\mathcal{T}_{0}}(\hat{\bm{w}})-R_{\mathcal{T}_{0}}(\bm{w}^{*})$
can be written as
\begin{align*}
 & T_{0}^{-1}\sum_{t\in\mathcal{T}_{0}}(\hat{\bm{w}}'\bm{y}_{t}-y_{0t})^{2}-T_{0}^{-1}\sum_{t\in\mathcal{T}_{0}}(\bm{w}^{*}{}'\bm{y}_{t}-y_{0t})^{2}\\
={} & T_{0}^{-1}\|\bm{Y}\hat{\bm{w}}-\bm{y}_{0}\|_{2}^{2}-T_{0}^{-1}\|\bm{Y}\bm{w}^{*}-\bm{y}_{0}\|_{2}^{2}\\
={} & \hat{\bm{h}}^{\prime}\hat{\bm{\Sigma}}\hat{\bm{h}}+2\bm{w}^{*\prime}\hat{\bm{\Sigma}}\hat{\bm{h}}-2\hat{\bm{\Upsilon}}^{\prime}\hat{\bm{h}}\\
={} & \hat{\bm{h}}'(\hat{\bm{\Sigma}}^{*}+\hat{\bm{\Sigma}}^{\mathrm{e}})\hat{\bm{h}}+2\hat{\bm{h}}'[(\hat{\bm{\Sigma}}^{*}+\hat{\bm{\Sigma}}^{\mathrm{e}})\bm{w}^{*}-\hat{\bm{\Upsilon}}^{*}-\hat{\bm{\Upsilon}}^{\mathrm{e}}]\\
\stackrel{\text{(i)}}{=}{} & \hat{\bm{h}}'(\hat{\bm{\Sigma}}^{*}+\hat{\bm{\Sigma}}^{\mathrm{e}})\hat{\bm{h}}+2\hat{\bm{h}}'(\hat{\bm{\Sigma}}^{\mathrm{e}}\bm{w}^{*}-\hat{\bm{\Upsilon}}^{\mathrm{e}})\\
\leq{} & \phi_{\max}(\hat{\bm{\Sigma}}^{\mathrm{co}})\|\hat{\bm{h}}_{\mathcal{G}}\|_{2}^{2}+\|\hat{\bm{\Sigma}}^{\mathrm{e}}\|_{2}\|\hat{\bm{h}}\|_{2}^{2}+2\|\hat{\bm{\Sigma}}^{\mathrm{e}}\|_{1}\|\hat{\bm{h}}\|_{1}\left\Vert \bm{w}^{*}\right\Vert _{\infty}+2\|\hat{\bm{\Upsilon}}^{\mathrm{e}}\|_{\infty}\|\hat{\bm{h}}\|_{1}\\
={} & (I)+(\mathit{II})+(\mathit{III})+(\mathit{IV}),
\end{align*}
where step~(i) uses the fact $\hat{\bm{\Sigma}}^{*}\bm{w}^{*}-\hat{\bm{\Upsilon}}^{*}+\gamma^{*}\bm{1}_{J}=\bm{0}_{J}$.

For $(I)$, noticing that $\|\hat{\bm{h}}\|_{2}=o_{p}(J^{-1/2})$
by \thmref{w_convergence}, we have
\[
\|\hat{\bm{h}}_{\mathcal{G}}\|_{2}^{2}=\hat{\bm{h}}^{\prime}\bm{Z}^{\prime}\bm{Z}\hat{\bm{h}}\leq\|\hat{\bm{h}}\|_{2}^{2}\|\bm{Z}^{\prime}\bm{Z}\|_{2}=o_{p}(J^{-1})\cdot O(J)=o_{p}(1).
\]
By \lemref{bounds_for_mu} we have $\phi_{\max}(\hat{\bm{\Sigma}}^{\mathrm{co}})\lesssim_{p}1$,
so $(I)=o_{p}(1)$. For $(\mathit{II})$, using the same argument
as in \lemref{sampling-error} we can show $\|\hat{\bm{\Sigma}}^{\mathrm{e}}\|_{2}=O_{p}\bigl(J\sqrt{(\log J)/T}\bigr)$.
It follows that
\[
(\mathit{II})=\|\hat{\bm{\Sigma}}^{\mathrm{e}}\|_{2}\|\hat{\bm{h}}\|_{2}^{2}=O_{p}\biggl(J\sqrt{\frac{\log J}{T_{0}}}\biggr)\cdot o_{p}(J^{-1})=o_{p}\biggl(\sqrt{\frac{\log J}{T_{0}}}\biggr)=o_{p}(1).
\]
For $(\mathit{III})$, by \lemref{sampling-error},
\[
(\mathit{III})=2\|\hat{\bm{\Sigma}}^{\mathrm{e}}\|_{1}\|\hat{\bm{h}}\|_{1}\left\Vert \bm{w}^{*}\right\Vert _{\infty}=O_{p}\biggl(J\sqrt{\frac{r\log J}{T_{0}}}\biggr)\cdot o_{p}(1)\cdot O_{p}\left(\frac{K}{J}\right)=o_{p}(1).
\]
For $(\mathit{\mathit{IV}})$, again by \lemref{sampling-error},
\[
(\mathit{\mathit{IV}})=2\|\hat{\bm{\Upsilon}}^{\mathrm{e}}\|_{\infty}\|\hat{\bm{h}}\|_{1}=O_{p}\biggl(\sqrt{\frac{r\log J}{T_{0}}}\biggr)\cdot o_{p}(1)=o_{p}(1).
\]
Hence, $(I)+(\mathit{II})+(\mathit{III})+(\mathit{IV})=o_{p}(1)$
and Part~(i) follows.

Part (ii). Let $\tilde{\bm{\Sigma}}:=T_{1}^{-1}\sum_{t\in\mathcal{T}_{1}}\bm{y}_{t}\bm{y}_{t}'$
and $\tilde{\bm{\Upsilon}}:=T_{1}^{-1}\sum_{t\in\mathcal{T}_{1}}\bm{y}_{t}y_{0t}^{N}$
be the post-treatment period sample covariance matrices, and other
matrices like $\tilde{\bm{\Sigma}}^{*}$, $\tilde{\bm{\Sigma}}^{\mathrm{e}}$,
$\tilde{\bm{\Sigma}}^{\mathrm{co}}$, $\tilde{\bm{\Upsilon}}^{*}$,
and $\tilde{\bm{\Upsilon}}^{\mathrm{e}}$ are defined conformably.
Since $y_{jt}^{N}$ for $t\in\mathcal{T}_{1}$ follows that the same
DGP as for $t\in\mathcal{T}_{1}$, then $\tilde{\bm{\Sigma}}^{\mathrm{e}}$,
$\tilde{\bm{\Sigma}}^{\mathrm{co}}$, and $\tilde{\bm{\Upsilon}}^{\mathrm{e}}$
have the same properties as $\hat{\bm{\Sigma}}^{\mathrm{e}}$, $\hat{\bm{\Sigma}}^{\mathrm{co}}$,
and $\hat{\bm{\Upsilon}}^{\mathrm{e}}$, respectively. Then by a direct
calculation, we have
\begin{align*}
 & \ T_{1}^{-1}\sum_{t\in\mathcal{T}_{1}}(\hat{\bm{w}}'\bm{y}_{t}^{N}-y_{0t}^{N})^{2}-T_{1}^{-1}\sum_{t\in\mathcal{T}_{1}}(\bm{w}^{*}{}'\bm{y}_{t}^{N}-y_{0t}^{N})^{2}\\
= & \ \hat{\bm{h}}^{\prime}\tilde{\bm{\Sigma}}\hat{\bm{h}}+2\bm{w}^{*\prime}\tilde{\bm{\Sigma}}\hat{\bm{h}}-2\tilde{\bm{\Upsilon}}^{\prime}\hat{\bm{h}}\\
= & \ \hat{\bm{h}}'(\hat{\bm{\Sigma}}^{*}+\hat{\bm{\Sigma}}^{\mathrm{e}})\hat{\bm{h}}+2\hat{\bm{h}}'(\hat{\bm{\Sigma}}^{\mathrm{e}}\bm{w}^{*}-\hat{\bm{\Upsilon}}^{\mathrm{e}})\\
 & \ +\hat{\bm{h}}'(\tilde{\bm{\Sigma}}-\hat{\bm{\Sigma}})\hat{\bm{h}}+2\hat{\bm{h}}'[(\tilde{\bm{\Sigma}}-\hat{\bm{\Sigma}})\bm{w}^{*}-(\tilde{\bm{\Upsilon}}-\hat{\bm{\Upsilon}})].
\end{align*}
As shown in Part~(i), $\hat{\bm{h}}'(\hat{\bm{\Sigma}}^{*}+\hat{\bm{\Sigma}}^{\mathrm{e}})\hat{\bm{h}}+2\hat{\bm{h}}'(\hat{\bm{\Sigma}}^{\mathrm{e}}\bm{w}^{*}-\hat{\bm{\Upsilon}}^{\mathrm{e}})=o_{p}(1)$,
we only need to prove that $\hat{\bm{h}}'(\tilde{\bm{\Sigma}}-\hat{\bm{\Sigma}})\hat{\bm{h}}+2\hat{\bm{h}}'[(\tilde{\bm{\Sigma}}-\hat{\bm{\Sigma}})\bm{w}^{*}-(\tilde{\bm{\Upsilon}}-\hat{\bm{\Upsilon}})]=o_{p}(1)$.

Since $\tilde{\bm{\Sigma}}-\hat{\bm{\Sigma}}=\tilde{\bm{\Sigma}}^{*}-\hat{\bm{\Sigma}}^{*}+\tilde{\bm{\Sigma}}^{\mathrm{e}}-\hat{\bm{\Sigma}}^{\mathrm{e}}$
and $r\log(J)/T_{1}=O(1)$, we have
\begin{align*}
\hat{\bm{h}}'(\tilde{\bm{\Sigma}}-\hat{\bm{\Sigma}})\hat{\bm{h}} & =\hat{\bm{h}}_{\mathcal{G}}'(\tilde{\bm{\Sigma}}^{\mathrm{co}}-\hat{\bm{\Sigma}}^{\mathrm{co}})\hat{\bm{h}}_{\mathcal{G}}+\hat{\bm{h}}'(\tilde{\bm{\Sigma}}^{\mathrm{e}}-\hat{\bm{\Sigma}}^{\mathrm{e}})\hat{\bm{h}}\\
 & \leq\left(\|\tilde{\bm{\Sigma}}^{\mathrm{co}}-\bm{\Sigma}^{\mathrm{co}}\|_{2}+\|\hat{\bm{\Sigma}}^{\mathrm{co}}-\bm{\Sigma}^{\mathrm{co}}\|_{2}\right)\|\hat{\bm{h}}_{\mathcal{G}}\|_{2}^{2}+\left(\|\tilde{\bm{\Sigma}}^{\mathrm{e}}\|_{2}+\|\hat{\bm{\Sigma}}^{\mathrm{e}}\|_{2}\right)\|\hat{\bm{h}}\|_{2}^{2}\\
 & =o_{p}(1)\cdot o_{p}(1)+\left[O_{p}\biggl(J\sqrt{\frac{\log J}{T_{1}}}\biggr)+O_{p}\biggl(J\sqrt{\frac{\log J}{T_{0}}}\biggr)\right]\cdot o_{p}(J^{-1})\\
 & =o_{p}(1).
\end{align*}
Similarly, we could prove that $\hat{\bm{h}}'(\tilde{\bm{\Sigma}}-\hat{\bm{\Sigma}})\bm{w}^{*}=o_{p}(1)$.
To bound the last term $\hat{\bm{h}}'(\tilde{\bm{\Upsilon}}-\hat{\bm{\Upsilon}})$,
notice that $\tilde{\bm{\Upsilon}}-\hat{\bm{\Upsilon}}=\tilde{\bm{\Upsilon}}^{*}-\bm{\Upsilon}^{*}-(\hat{\bm{\Upsilon}}^{*}-\bm{\Upsilon}^{*})+(\tilde{\bm{\Upsilon}}^{\mathrm{e}}-\hat{\bm{\Upsilon}}^{\mathrm{e}})$.
Hence, as $r\log(J)/T_{1}=O(1)$ and \assuref{eta} implies $r\log(J)/T_{1}=o(1)$,
we have
\begin{align*}
\hat{\bm{h}}'(\tilde{\bm{\Upsilon}}-\hat{\bm{\Upsilon}}) & =\hat{\bm{h}}_{\mathcal{G}}'(\tilde{\bm{\Upsilon}}^{\mathrm{co}}-\bm{\Upsilon}^{\mathrm{co}})+\hat{\bm{h}}_{\mathcal{G}}'(\hat{\bm{\Upsilon}}^{\mathrm{co}}-\bm{\Upsilon}^{\mathrm{co}})+\hat{\bm{h}}'\tilde{\bm{\Upsilon}}^{\mathrm{e}}-\hat{\bm{h}}'\hat{\bm{\Upsilon}}^{\mathrm{e}}\\
 & \leq\left(\|\tilde{\bm{\Upsilon}}^{\mathrm{co}}-\bm{\Upsilon}^{\mathrm{co}}\|_{2}+\|\hat{\bm{\Upsilon}}^{\mathrm{co}}-\bm{\Upsilon}^{\mathrm{co}}\|_{2}\right)\|\hat{\bm{h}}_{\mathcal{G}}\|_{2}+\left(\|\tilde{\bm{\Upsilon}}^{\mathrm{e}}\|_{\infty}+\|\hat{\bm{\Upsilon}}^{\mathrm{e}}\|_{\infty}\right)\|\hat{\bm{h}}\|_{1}\\
 & \leq o_{p}(1)\cdot o_{p}(1)+\left[O_{p}\biggl(\sqrt{\frac{r\log J}{T_{1}}}\biggr)+O_{p}\biggl(\sqrt{\frac{r\log J}{T_{0}}}\biggr)\right]\cdot o_{p}(1)=o_{p}(1).
\end{align*}
This completes the proof.
\end{proof}

\subsection{Proof of \thmref{wg_convergence} }\label{subsec:proof-thm3}

Similar to \lemref{link_two_norms}, we first use the following lemma
that establishes a compatibility condition between $\hat{\bm{w}}_{(g)}-\bm{w}_{(g)}^{*}$
and $\hat{\bm{w}}_{(g),\mathcal{G}}-\bm{w}_{(g),\mathcal{G}}^{*}$.
\begin{lem}
\label{lem:link_two_norms-wg}Suppose \assuref[s]{DGP} and \ref{assu:add_assu}
hold. We have
\begin{equation}
\|\hat{\bm{w}}_{(g)}-\bm{w}_{(g)}^{*}\|_{2}^{2}\lesssim_{p}\frac{\sqrt{K}\beta_{g}}{\alpha_{g}}\|\hat{\bm{w}}_{(g),\mathcal{G}}-\bm{w}_{(g),\mathcal{G}}^{*}\|_{2}.\label{eq:relationship_two_errors-wg}
\end{equation}
Furthermore, if $K>r$, it holds that
\begin{equation}
\|\hat{\bm{w}}_{(g)}-\bm{w}_{(g)}^{*}\|_{2}^{2}\lesssim_{p}\frac{\sqrt{r}\beta_{g}}{\alpha_{g}}\|\bm{\Lambda}^{\mathrm{co}\prime}(\hat{\bm{w}}_{(g),\mathcal{G}}-\bm{w}_{(g),\mathcal{G}}^{*})\|_{2}.\label{eq:another-compatible-wg}
\end{equation}
\end{lem}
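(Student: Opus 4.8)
The plan is to run the proof of \lemref{link_two_norms} essentially verbatim, making two substitutions: the perfect-square identity (\ref{eq:strong_convexity}) is replaced by $\alpha_g$-strong convexity of $g$ on $\mathcal{W}_{K,J}$, and the crude gradient bound used there (the gradient of $\|\bm{w}\|_2^2$ at $\bm{w}^*$ is $2\bm{w}^*$, with $\|\bm{w}^*\|_\infty\lesssim_p K/J$) is replaced by the $\beta_g$-Lipschitz bound $|g'(x)|\le\beta_g$. First I would record, exactly as in \lemref{w_star_feasibility}, that the $g$-oracle $(\bm{w}_{(g)}^*,\gamma_{(g)}^*)$ is feasible for $g$-SCM-relaxation w.p.a.1: it satisfies $\hat{\bm{\Sigma}}^*\bm{w}_{(g)}^*-\hat{\bm{\Upsilon}}^*+\gamma_{(g)}^*\bm{1}_J=\bm{0}_J$ and, since $\bm{w}_{(g)}^*$ is within-group equal with $\bm{Z}'\bm{w}_{(g)}^*\in\Delta_K$, one has $\|\bm{w}_{(g)}^*\|_\infty\le 1/\min_{k\in[K]}J_k\lesssim K/J$ by \assuref{eta}\ref{enu:group-size}, so \lemref{sampling-error} and the tuning-parameter condition in \assuref{add_assu} give $\|\hat{\bm{\Sigma}}\bm{w}_{(g)}^*-\hat{\bm{\Upsilon}}+\gamma_{(g)}^*\bm{1}_J\|_\infty\lesssim_p K\sqrt{(r\log J)/T_0}=o(\eta)$. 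Optimality of $\hat{\bm{w}}_{(g)}$ then yields the basic inequality $\sum_{j\in[J]}g(\hat{w}_{(g),j})\le\sum_{j\in[J]}g(w_{(g),j}^*)$ w.p.a.1, and applying $\alpha_g$-strong convexity term by term — legitimate because $\hat{w}_{(g),j},w_{(g),j}^*\in\mathcal{W}_{K,J}$ w.p.a.1 — gives, with $\hat{\bm{h}}_{(g)}:=\hat{\bm{w}}_{(g)}-\bm{w}_{(g)}^*$, that $\tfrac{\alpha_g}{2}\|\hat{\bm{h}}_{(g)}\|_2^2\le|\nabla g(\bm{w}_{(g)}^*)'\hat{\bm{h}}_{(g)}|$.

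For the first bound (\ref{eq:relationship_two_errors-wg}) I would invoke the generic-$g$ analogue of \lemref{oracle_target_relaxation}: $\bm{w}_{(g)}^*$ is within-group equal, so $\nabla g(\bm{w}_{(g)}^*)=\bm{Z}\bm{g}_{\mathcal{G}}$, where $\bm{g}_{\mathcal{G}}\in\mathbb{R}^K$ has entries $g'(w_{(g),\mathcal{G}_k}^*/J_k)$, each of modulus at most $\beta_g$. Then $\nabla g(\bm{w}_{(g)}^*)'\hat{\bm{h}}_{(g)}=\bm{g}_{\mathcal{G}}'\hat{\bm{h}}_{(g),\mathcal{G}}$ with $\hat{\bm{h}}_{(g),\mathcal{G}}:=\bm{Z}'\hat{\bm{h}}_{(g)}$, and Cauchy--Schwarz gives $|\bm{g}_{\mathcal{G}}'\hat{\bm{h}}_{(g),\mathcal{G}}|\le\|\bm{g}_{\mathcal{G}}\|_2\|\hat{\bm{h}}_{(g),\mathcal{G}}\|_2\le\sqrt{K}\,\beta_g\,\|\hat{\bm{h}}_{(g),\mathcal{G}}\|_2$, whence $\|\hat{\bm{h}}_{(g)}\|_2^2\le(2\sqrt{K}\beta_g/\alpha_g)\|\hat{\bm{h}}_{(g),\mathcal{G}}\|_2$, which is (\ref{eq:relationship_two_errors-wg}).

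For the sharper $K>r$ bound (\ref{eq:another-compatible-wg}) I would reduce the oracle problem to the group coordinates as in the proof of \lemref{oracle_target_relaxation}, where the active constraints become $\bm{\Lambda}^{\mathrm{co}\prime}\bm{w}_{\mathcal{G}}+\gamma\bm{b}=\bm{\lambda}_0$ and $\bm{1}_K'\bm{w}_{\mathcal{G}}=1$, and the objective becomes $\sum_k J_k g(w_{\mathcal{G}_k}/J_k)$, whose gradient in $\bm{w}_{\mathcal{G}}$ is precisely $\bm{g}_{\mathcal{G}}$. The stationarity condition of this reduced problem at $\bm{w}_{(g)}^*$ (taken interior, as in \lemref{oracle_target_relaxation}, and as is automatic for the EL and entropy objectives) reads $\bm{g}_{\mathcal{G}}=\bm{\Lambda}^{\mathrm{co}}\bm{\mu}_1+\mu_2\bm{1}_K$ for some $\bm{\mu}_1\in\mathbb{R}^r$ and $\mu_2\in\mathbb{R}$, in parallel with (\ref{eq:foc_mat}). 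Since $\bm{1}_K'\hat{\bm{h}}_{(g),\mathcal{G}}=\bm{1}_J'\hat{\bm{h}}_{(g)}=0$, the $\mu_2$-term drops out of $\nabla g(\bm{w}_{(g)}^*)'\hat{\bm{h}}_{(g)}=\bm{g}_{\mathcal{G}}'\hat{\bm{h}}_{(g),\mathcal{G}}$, leaving $\bm{\mu}_1'\bm{\Lambda}^{\mathrm{co}\prime}\hat{\bm{h}}_{(g),\mathcal{G}}$, and by Cauchy--Schwarz it remains only to show $\|\bm{\mu}_1\|_2=O_p(\sqrt{r}\,\beta_g)$. Eliminating $\mu_2$ — by applying $\bm{M}:=\bm{I}_K-\bm{1}_K\bm{1}_K'/K$ when $\bm{1}_K$ lies outside the column space of $\bm{\Lambda}^{\mathrm{co}}$, or using $\bm{\Lambda}^{\mathrm{co}}\bm{d}=\bm{1}_K$ directly otherwise, and inverting — one gets $\|\bm{\mu}_1\|_2\lesssim_p[\phi_{\min}(\bm{\Lambda}^{\mathrm{co}\prime}\bm{M}\bm{\Lambda}^{\mathrm{co}})]^{-1}\varsigma_{\max}(\bm{\Lambda}^{\mathrm{co}})\|\bm{g}_{\mathcal{G}}\|_2$. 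Bounding $\phi_{\min}(\bm{\Lambda}^{\mathrm{co}\prime}\bm{M}\bm{\Lambda}^{\mathrm{co}})\gtrsim\varsigma_{\min}^2(\bm{\Lambda}^{\mathrm{co}})\asymp K/r$ and $\varsigma_{\max}(\bm{\Lambda}^{\mathrm{co}})\asymp\sqrt{K/r}$ via \assuref{DGP}, together with $\|\bm{g}_{\mathcal{G}}\|_2\le\sqrt{K}\,\beta_g$, yields $\|\bm{\mu}_1\|_2\lesssim_p(r/K)\sqrt{K/r}\sqrt{K}\,\beta_g=\sqrt{r}\,\beta_g$, and (\ref{eq:another-compatible-wg}) follows.

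The step I expect to be the main obstacle is this last multiplier estimate. In the $L_2$ case one obtains $\|\bm{\mu}_1\|_2=O_p(r/J)$ for free from \lemref{bounds_for_mu}\ref{enu:K_gtr_r}, because the objective gradient is $2\bm{w}^*$ and the multiplier inherits the closed-form expression of $\bm{w}^*$; for a general $g$ no such closed form exists, so the Lagrange multiplier of the $g$-oracle must be bounded directly from the KKT system. The delicate ingredient is keeping $\phi_{\min}(\bm{\Lambda}^{\mathrm{co}\prime}\bm{M}\bm{\Lambda}^{\mathrm{co}})$ bounded away from zero uniformly — equivalently, keeping the constant vector $\bm{1}_K$ quantitatively away from the column space of $\bm{\Lambda}^{\mathrm{co}}$ — and then tracking the powers of $K$ and $r$ (and the curvature ratio $\beta_g/\alpha_g$) so that they collapse to exactly the rate $\sqrt{r}\,\beta_g/\alpha_g$ that feeds into \thmref{wg_convergence} and that reduces to the $L_2$ rate when $\beta_g/\alpha_g\asymp 1$. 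A minor additional point, handled as in \lemref{oracle_target_relaxation}'s proof, is that if some oracle group weight sits on the simplex boundary the extra nonnegativity multiplier appears in the stationarity condition; it contributes a term of favorable sign in the basic inequality above (and is absent for EL and entropy), so it does not affect the bound.
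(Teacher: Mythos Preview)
Your proposal is correct and follows essentially the same route as the paper: feasibility of the $g$-oracle, the basic inequality from optimality combined with $\alpha_g$-strong convexity, within-group constancy of $\bm{w}_{(g)}^*$ to factor the linear term through $\bm{Z}$, and Cauchy--Schwarz with the $\beta_g$-Lipschitz bound to obtain (\ref{eq:relationship_two_errors-wg}); then, for $K>r$, the stationarity condition $\bm{g}_{\mathcal{G}}=\bm{\Lambda}^{\mathrm{co}}\bm{\mu}_1+\mu_2\bm{1}_K$ together with $\bm{1}_K'\hat{\bm{h}}_{(g),\mathcal{G}}=0$ and a bound on $\|\bm{\mu}_1\|_2$.

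The one place where the paper's execution is simpler is the multiplier estimate. Rather than projecting out $\bm{1}_K$ via $\bm{M}$ and then needing $\phi_{\min}(\bm{\Lambda}^{\mathrm{co}\prime}\bm{M}\bm{\Lambda}^{\mathrm{co}})\gtrsim\varsigma_{\min}^2(\bm{\Lambda}^{\mathrm{co}})$ --- the separation condition you correctly flag as delicate --- the paper applies the pseudoinverse $\bm{\Lambda}^{\mathrm{co}\dagger}$ directly to the stationarity equation and reads off $\|\bm{\mu}_1\|_2\le\|\bm{\Lambda}^{\mathrm{co}\dagger}\|_2\cdot\sqrt{K}\beta_g=[\varsigma_{\min}(\bm{\Lambda}^{\mathrm{co}})]^{-1}\sqrt{K}\beta_g\asymp\sqrt{r}\beta_g$, using only \assuref{DGP}\ref{enu:Loadings}. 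This sidesteps the projection altogether in the case $\bm{1}_K\in\mathrm{col}(\bm{\Lambda}^{\mathrm{co}})$ (where there is no $\mu_2$); for the complementary case the paper simply asserts ``the same result can be obtained,'' so the obstacle you identify is one the paper itself does not spell out either.
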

\begin{proof}[Proof of \lemref{link_two_norms-wg}]
We employ the same argument as used in the proof of \lemref{link_two_norms}.
In view of the fact that $(\hat{\bm{w}}_{(g)},\hat{\gamma}_{(g)})$
is the minimizer and that $g$ is $\alpha_{g}$-strongly convex, we
have the basic inequality
\begin{align*}
0 & \geq\sum_{j=1}^{J}g(\hat{w}_{(g),j})-\sum_{j=1}^{J}g(w_{(g),j}^{*})\\
 & \geq\sum_{j=1}^{J}\frac{\mathrm{d}g(w_{(g),j}^{*})}{\mathrm{d}w_{(g),j}^{*}}(\hat{w}_{(g),j}-w_{(g),j}^{*})+\frac{\alpha_{g}}{2}\|\hat{\bm{w}}_{(g)}-\bm{w}_{(g)}^{*}\|_{2}^{2}\qquad\text{w.p.a.1.},
\end{align*}
which implies
\begin{align*}
\frac{\alpha_{g}}{2}\|\hat{\bm{w}}_{(g)}-\bm{w}_{(g)}^{*}\|_{2}^{2} & \leq\left|\sum_{j=1}^{J}\frac{\mathrm{d}g(w_{(g),j}^{*})}{\mathrm{d}w_{(g),j}^{*}}(\hat{w}_{(g),j}-w_{(g),j}^{*})\right|=\left|\sum_{k=1}^{K}\frac{\mathrm{d}g(w_{(g),\mathcal{G}_{k}}^{*}/J_{k})}{\mathrm{d}(w_{(g),\mathcal{G}_{k}}^{*}/J_{k})}\left(\hat{w}_{(g),\mathcal{G}_{k}}-w_{(g),\mathcal{G}_{k}}^{*}\right)\right|\\
 & \leq\left[\sum_{k=1}^{K}\left(\frac{\mathrm{d}g(w_{(g),\mathcal{G}_{k}}^{*}/J_{k})}{\mathrm{d}(w_{(g),\mathcal{G}_{k}}^{*}/J_{k})}\right)^{2}\right]^{1/2}\left[\sum_{k=1}^{K}\left(\hat{w}_{(g),\mathcal{G}_{k}}-w_{(g),\mathcal{G}_{k}}^{*}\right)^{2}\right]^{1/2}\\
 & \leq\sqrt{K}\beta_{g}\cdot\|\hat{\bm{w}}_{(g),\mathcal{G}}-\bm{w}_{(g),\mathcal{G}}^{*}\|_{2}\qquad\text{w.p.a.1},
\end{align*}
where the second line uses the Cauchy-Schwarz inequality and the last
line is by \assuref{add_assu} that $g$ is $\beta_{g}$-Lipschitz.

If $K>r$ and $\bm{1}_{K}$ is in the column space of $\bm{\Lambda}^{\mathrm{co}}$,
the FOC with respect to $w_{(g),\mathcal{G}_{k}}^{*}$ is
\[
\frac{\mathrm{d}g(w_{(g),\mathcal{G}_{k}}^{*}/J_{k})}{\mathrm{d}(w_{(g),\mathcal{G}_{k}}^{*}/J_{k})}-\bm{\mu}_{1}^{\prime}\bm{\lambda}_{k}^{\mathrm{co}}=0,
\]
which implies
\[
\bm{\mu}_{1}=\bm{\Lambda}^{\mathrm{co}\dagger}\left[\frac{\mathrm{d}g(w_{(g),\mathcal{G}_{k}}^{*}/J_{k})}{\mathrm{d}(w_{(g),\mathcal{G}_{k}}^{*}/J_{k})},\dots,\frac{\mathrm{d}g(w_{(g),\mathcal{G}_{k}}^{*}/J_{k})}{\mathrm{d}(w_{(g),\mathcal{G}_{k}}^{*}/J_{k})}\right]^{\prime}.
\]
It follows that
\[
\|\bm{\mu}_{1}\|_{2}\leq\|\bm{\Lambda}^{\mathrm{co}\dagger}\|_{2}\cdot\sqrt{K}\beta_{g}=\sqrt{r}\beta_{g}.
\]
The same result can be obtained for the case when $\bm{1}_{K}$ is
not in the column space of $\bm{\Lambda}^{\mathrm{co}}$. Therefore
\begin{align*}
\frac{\alpha_{g}}{2}\|\hat{\bm{w}}_{(g)}-\bm{w}_{(g)}^{*}\|_{2}^{2} & \leq\|\bm{\mu}_{1}\|_{2}\cdot\|\bm{\Lambda}^{\mathrm{co}\prime}(\hat{\bm{w}}_{(g),\mathcal{G}}-\bm{w}_{(g),\mathcal{G}}^{*})\|_{2}\\
 & \lesssim_{p}\sqrt{r}\beta_{g}\|\bm{\Lambda}^{\mathrm{co}\prime}(\hat{\bm{w}}_{(g),\mathcal{G}}-\bm{w}_{(g),\mathcal{G}}^{*})\|_{2}.
\end{align*}
This completes the proof.
\end{proof}
Next we prove \thmref{wg_convergence}.
\begin{proof}[Proof of \thmref{wg_convergence}]
 Because the argument for (\ref{eq:upper_bound1})--(\ref{eq:fact-2-2})
only involves the constraints but is independent of the objective
function, they remain true here.

If $K\leq r$, we use (\ref{eq:fact-1}), (\ref{eq:fact-2}), and
(\ref{eq:relationship_two_errors-wg}) to deduce
\[
\frac{1}{J^{2}}\|\hat{\bm{h}}\|_{1}^{4}\leq\|\hat{\bm{h}}\|_{2}^{4}\lesssim_{p}\frac{K\beta_{g}^{2}}{\alpha_{g}^{2}}\|\hat{\bm{h}}_{\mathcal{G}}\|_{2}^{2}\lesssim_{p}\frac{K\beta_{g}^{2}\eta}{\alpha_{g}^{2}\phi_{\min}(\hat{\bm{\Sigma}}^{\mathrm{co}})}\|\hat{\bm{h}}\|_{1}\leq\frac{K\sqrt{J}\beta_{g}^{2}\eta}{\alpha_{g}^{2}\phi_{\min}(\hat{\bm{\Sigma}}^{\mathrm{co}})}\|\hat{\bm{h}}\|_{2}.
\]
We conclude that
\begin{align*}
\|\hat{\bm{h}}\|_{1} & \lesssim_{p}\left[\frac{KJ^{2}\beta_{g}^{2}\eta}{\alpha_{g}^{2}\phi_{\min}(\hat{\bm{\Sigma}}^{\mathrm{co}})}\right]^{1/3}=O_{p}\left(\left[\frac{KJ^{2}\beta_{g}^{2}\eta}{\alpha_{g}^{2}}\right]^{1/3}\right)=o_{p}(1),\text{ and }\\
\|\hat{\bm{h}}\|_{2} & \lesssim_{p}\left[\frac{K\sqrt{J}\beta_{g}^{2}\eta}{\alpha_{g}^{2}\phi_{\min}(\hat{\bm{\Sigma}}^{\mathrm{co}})}\right]^{1/3}=O_{p}\left(\left[\frac{KJ^{2}\beta_{g}^{2}\eta}{\alpha_{g}^{2}}\right]^{1/3}\frac{1}{\sqrt{J}}\right)=o_{p}\left(\frac{1}{\sqrt{J}}\right).
\end{align*}

If $K>r$, we use (\ref{eq:fact-1}), (\ref{eq:fact-2-2}), and (\ref{eq:another-compatible-wg})
to get
\[
\frac{1}{J^{2}}\|\hat{\bm{h}}\|_{1}^{4}\leq\|\hat{\bm{h}}\|_{2}^{4}\lesssim_{p}\frac{r\beta_{g}^{2}}{\alpha_{g}^{2}}\|\hat{\bm{h}}_{\mathcal{G}}\|_{2}^{2}\lesssim_{p}\frac{r\beta_{g}^{2}\eta}{\alpha_{g}^{2}\phi_{\min}(\hat{\bm{\Sigma}}^{\mathrm{co}})}\|\hat{\bm{h}}\|_{1}\leq\frac{r\sqrt{J}\beta_{g}^{2}\eta}{\alpha_{g}^{2}\phi_{\min}(\hat{\bm{\Sigma}}^{\mathrm{co}})}\|\hat{\bm{h}}\|_{2}.
\]
The desired results then follow.
\end{proof}
\begin{rem}
When $g(x)=x\log x$ and $K\leq r$, the $L_{1}$--convergence of
$\hat{\bm{w}}_{(g)}$ does not rely on \lemref{link_two_norms-wg}.
Actually, we can establish the result by an application of the well-known
Pinsker's inequality:
\begin{align}
\left\Vert \hat{\bm{w}}_{(g)}-\bm{w}_{(g)}^{*}\right\Vert _{1}^{2} & \stackrel{\text{(i)}}{\leq}\frac{1}{2}\sum_{j=1}^{J}\hat{w}_{(g),j}\log\hat{w}_{(g),j}-\frac{1}{2}\sum_{j=1}^{J}\hat{w}_{(g),j}\log w_{(g),j}^{*}\nonumber \\
 & \stackrel{\text{(ii)}}{\leq}\frac{1}{2}\sum_{j=1}^{J}w_{(g),j}^{*}\log w_{(g),j}^{*}-\frac{1}{2}\sum_{k=1}^{K}\hat{w}_{(g),j}\log w_{(g),j}^{*}\nonumber \\
 & \stackrel{\text{(iii)}}{=}\frac{1}{2}\sum_{k=1}^{K}(w_{(g),\mathcal{G}_{k}}^{*}-\hat{w}_{(g),\mathcal{G}_{k}})\log\frac{w_{(g),\mathcal{G}_{k}}^{*}}{J_{k}}\nonumber \\
 & \leq\frac{1}{2}\left[\sum_{k=1}^{K}\biggl(\log\frac{w_{(g),\mathcal{G}_{k}}^{*}}{J_{k}}\biggr)^{2}\right]^{1/2}\|\hat{\bm{w}}_{(g),\mathcal{G}}-\bm{w}_{(g),\mathcal{G}}^{*}\|_{2}\qquad\text{w.p.a.1.},\label{eq:fact_entropy1}
\end{align}
where step~(i) uses Pinsker's inequality; step~(ii) is by the fact
that $\frac{1}{2}\sum_{j=1}^{J}\hat{w}_{(g),j}\log\hat{w}_{(g),j}\leq\frac{1}{2}\sum_{j=1}^{J}w_{(g),j}^{*}\log w_{(g),j}^{*}$
w.p.a.1.~as $\hat{\bm{w}}_{(g)}$ is the minimizer and $\bm{w}_{(g)}^{*}$
satisfies the constraint w.p.a.1.; step~(iii) holds because $w_{(g),j}^{*}=w_{(g),\mathcal{G}_{k}}^{*}/J_{k}$
if $j\in\mathcal{G}_{k}$, and the last line is by the Cauchy-Schwarz
inequality. By (\ref{eq:fact_entropy1}) and the same argument for
(\ref{eq:fact-1}) and (\ref{eq:fact-2}), we have for $K\leq r$,
\[
\|\hat{\bm{h}}\|_{1}^{4}\lesssim_{p}\left[\sum_{k=1}^{K}\biggl(\log\frac{w_{(g),\mathcal{G}_{k}}^{*}}{J_{k}}\biggr)^{2}\right]\|\hat{\bm{h}}_{\mathcal{G}}\|_{2}^{2}\lesssim_{p}\frac{\eta\sum_{k=1}^{K}[\log(w_{(g),\mathcal{G}_{k}}^{*}/J_{k})]^{2}}{\phi_{\min}(\hat{\bm{\Sigma}}^{\mathrm{co}})}\|\hat{\bm{h}}\|_{1}.
\]
Hence, if $\eta=o(1/[\log J]^{2})$, then
\begin{align*}
\|\hat{\bm{h}}\|_{1} & \lesssim_{p}\left[\frac{\eta\sum_{k=1}^{K}[\log(w_{(g),\mathcal{G}_{k}}^{*}/J_{k})]^{2}}{\phi_{\min}(\hat{\bm{\Sigma}}^{\mathrm{co}})}\right]^{1/3}\\
 & =O_{p}\left(\{\eta K[\log(K/J)]^{2}\}^{1/3}\right)=O_{p}\left([\eta(\log J)^{2}]^{1/3}\right)=o_{p}(1).
\end{align*}

We outline this alternative proof here for completeness. However,
it is difficult to establish similar results by Pinsker's inequality
for other criterion functions like $g(x)=-\log x$. Moreover, the
case $K>r$ and the stronger $L_{2}$-convergence rate rely crucially
on \lemref{link_two_norms-wg}.
\end{rem}

\subsection{Proof of \thmref{asym-dist}}
\begin{proof}[Proof of \thmref{asym-dist}]
It suffices to check Assumptions~1-3 in \citet*[CWZ25, henceforth]{chernozhukov_t-test_2024}.
Their Assumption 1 (covariance stationary) is satisfied, and Assumption
2 holds in view of our \thmref{wg_convergence}. It remains to verify
the conditions in their CWZ25's Assumption~3 one by one.

Let $\bm{x}_{t}:=(y_{1t},y_{2t},\dots,y_{Jt})^{\prime}=\bm{\Lambda}\bm{f}_{t}+\bm{u}_{t}$.
We express $y_{0t}^{N}$ as
\[
y_{0t}^{N}=\bm{x}_{t}^{\prime}\bm{w}^{*}+v_{t},
\]
where $\bm{w}^{*}$ is the oracle weight defined as the solution to
(\ref{eq:oracle_relaxation_problem}) and
\[
v_{t}:=y_{0t}^{N}-\bm{x}_{t}^{\prime}\bm{w}^{*}=(\bm{\lambda}_{0}-\bm{\Lambda}^{\prime}\bm{w}^{*})'\bm{f}_{t}+(u_{0t}-\bm{u}_{t}^{\prime}\bm{w}^{*})
\]
 is the prediction error.

Let $\mathscr{T}\subseteq[T]$. For a random variable $A$, define
$\tilde{A}:=A-\mathbb{E}(A)$ as its centered version. We have
\begin{align}
 & \frac{1}{|\mathscr{T}|}\mathbb{E}\left[\left(\sum_{t\in\mathscr{T}}\tilde{\bm{x}}_{t}\right)\left(\sum_{t\in\mathscr{T}}\tilde{\bm{x}}_{t}\right)^{\prime}\right]\nonumber \\
=\  & \bm{\Lambda}\mathbb{E}\left[\frac{1}{|\mathscr{T}|}\left(\sum_{t\in\mathscr{T}}\tilde{\bm{f}}_{t}\right)\left(\sum_{t\in\mathscr{T}}\tilde{\bm{f}}_{t}\right)^{\prime}\right]\bm{\Lambda}^{\prime}+\mathbb{E}\left[\frac{1}{|\mathscr{T}|}\left(\sum_{t\in\mathscr{T}}\bm{u}_{t}\right)\left(\sum_{t\in\mathscr{T}}\bm{u}_{t}\right)^{\prime}\right].\label{eq:covariance}
\end{align}
By \assuref{additional}, the restriction on $\beta$-mixing coefficient
and uniform boundedness of $\delta$-th moment guarantee the existence
of long-run variance of $\bm{f}_{t}$ and $\bm{u}_{t}$ as by Davydov's
inequality we have
\begin{align*}
\sum_{t=1}^{\infty}|\mathbb{E}(f_{\ell0}f_{kt})| & \leq C_{\delta}\bigl[\mathbb{E}(|f_{\ell0}|^{\delta})(|f_{k0}|^{\delta})\bigr]^{1/\delta}\sum_{t=1}^{\infty}\beta_{\mathrm{mix}}(t)^{1-2/\delta}<\infty\\
\sum_{t=1}^{\infty}|\mathbb{E}(u_{i0}u_{jt})| & \leq C_{\delta}\bigl[\mathbb{E}(|u_{i0}|^{\delta})(|u_{jt}|^{\delta})\bigr]^{1/\delta}\sum_{t=1}^{\infty}\beta_{\mathrm{mix}}(t)^{1-2/\delta}<\infty,
\end{align*}
for some absolute constant $C_{\delta}$. Here we take $\delta=4+\varepsilon$
and use the condition $\beta_{\mathrm{mix}}(t)\leq Ct^{\eta}$ for
some $\eta\geq2$. The boundedness of the maximum eigenvalue of (\ref{eq:covariance})
then follows. Thus CWZ25's Assumption~3.1 holds. CWZ25's Assumption~3.2
is implied by Assumption \ref{assu:additional}\enuref{tail}. In
particular, if we take the sequence $\rho_{T}$ to be $(JT)^{1/(4+\varepsilon)}\log(JT)$,
then by the union bound and Markov's inequality it holds that
\[
\mathbb{P}\biggl(\max_{j,t}|\tilde{x}_{jt}|\geq\rho_{T}\biggr)\leq\sum_{j,t}\frac{\mathbb{E}(|\tilde{x}_{jt}|^{4+\varepsilon})}{\rho_{T}^{4+\varepsilon}}=O\biggl(\frac{1}{[\log(JT)]^{4+\varepsilon}}\biggr)\to0.
\]
For CWZ25's Assumption~3.3, we can take $\gamma_{T}=\log(T)$ for
example. Then clearly we have $\rho_{T}\gamma_{T}=o(\sqrt{T_{0}\wedge T_{1}})$
provided $J=O(T)$ and $T_{0}\asymp T_{1}$. Thus their Assumption
3.3 holds. CWZ25's Assumption 3.4 is clearly implied by \assuref{additional}.

Therefore, Assumptions~1--3 in CWZ25 are satisfied under our assumptions.
We can then invoke their Theorem~2 to conclude our statement.
\end{proof}

\section{Feature Engineering: Standardized SCM-relaxation}\label{sec:Standardized-SCM-relaxation}

In practice, the magnitude of outcomes often varies across control
units, necessitating data standardization as a pre-processing step.
Typically, our theoretical results can be extended to allow for heterogeneous
scales:
\begin{equation}
y_{jt}=\theta_{j}(\bm{\lambda}_{j}'\bm{f}_{t}+u_{jt}),\quad j\in\{0\}\cup[J],\label{eq:heterogeneous_scales}
\end{equation}
if we carry out data standardization. Let $\hat{\sigma}_{j}^{2}:=(T_{0}-1)^{-1}\sum_{t\in\mathcal{T}_{0}}(y_{jt}-\bar{y}_{j})^{2}$
be the sample variance of pre-treatment outcome for unit $j\in\{0\}\cup[J]$,
where $\bar{y}_{j}:=T_{0}^{-1}\sum_{t\in\mathcal{T}_{0}}y_{jt}$ is
the sample mean. Let $\hat{\bm{\sigma}}=(\hat{\sigma}_{1},\dots,\hat{\sigma}_{J})'$
and $\mathrm{diag}(\hat{\bm{\sigma}})$ be the diagonal matrix with
$\hat{\bm{\sigma}}$ on the diagonal. The covariance matrices for
standardized data are $\hat{\bm{\Sigma}}^{\mathrm{s}}=\mathrm{diag}(\hat{\bm{\sigma}})^{-1}\hat{\bm{\Sigma}}\mathrm{diag}(\hat{\bm{\sigma}})^{-1}$
and $\hat{\bm{\Upsilon}}^{\mathrm{s}}=\hat{\sigma}_{0}^{-1}\mathrm{diag}(\hat{\bm{\sigma}})^{-1}\hat{\bm{\Upsilon}}$,
where the superscript ``s'' stands for ``standardized.'' The standardized
SCM-relaxation estimates the weights by solving:
\begin{equation}
\min_{\bm{w}\in\Delta_{J},\gamma\in\mathbb{R}}{}\sum_{j\in[J]}g(w_{j})\quad\text{s.t. }\left\Vert \hat{\bm{\Sigma}}^{\mathrm{s}}\bm{w}-\hat{\bm{\Upsilon}}^{\mathrm{s}}+\gamma\bm{1}_{J}\right\Vert _{\infty}\leq\eta,\label{eq:standardized-SCM-relaxation}
\end{equation}
and the solution is denoted as $(\hat{\bm{w}}_{(g)}^{\mathrm{s}},\hat{\gamma}_{(g)}^{\mathrm{s}})$.
The counterfactual outcomes of unit $0$ in the post-treatment periods
are estimated via
\[
\hat{y}_{0t}^{N}=\left[\hat{\sigma}_{0}\mathrm{diag}(\hat{\bm{\sigma}})^{-1}\hat{\bm{w}}_{(g)}^{\mathrm{s}}\right]'\bm{y}_{t}.
\]

We build theoretical connections between SCM-relaxation and the standardized
version. Define the oracle weight for standardized SCM-relaxation,
denoted as $\bm{w}_{(g)}^{*,\mathrm{s}}$, similarly to (\ref{eq:oracle_relaxation_problem}).
Hence, $\bm{w}_{(g)}^{*,\mathrm{s}}$ satisfies
\[
\mathrm{diag}(\hat{\bm{\sigma}}^{*})^{-1}\hat{\bm{\Sigma}}^{*}\mathrm{diag}(\hat{\bm{\sigma}}^{*})^{-1}\bm{w}_{(g)}^{*,\mathrm{s}}-\hat{\sigma}_{0}^{*-1}\mathrm{diag}(\hat{\bm{\sigma}}^{*})^{-1}\hat{\bm{\Upsilon}}^{*}=\bm{0}_{J},
\]
where $\hat{\sigma}_{j}^{*}$ is the standard deviation of the noiseless
part $y_{jt}^{*}=\bm{\lambda}_{j}'\bm{f}_{t}$ for unit $j$. Since
$\bm{w}_{(g)}^{*}$ satisfies $\hat{\bm{\Sigma}}^{*}\bm{w}-\hat{\bm{\Upsilon}}^{*}+\gamma\bm{1}_{J}=\bm{0}_{J}$,
it is verified that $\bm{w}_{(g)}^{*}=\hat{\sigma}_{0}^{*}\mathrm{diag}(\hat{\bm{\sigma}}^{*})^{-1}\bm{w}_{(g)}^{*,\mathrm{s}}$,
and $\bm{w}_{(g)}^{*,s}$ has a group structure. Using the same argument
in the proof of \thmref{wg_convergence}, we can show that
\[
\left\Vert \hat{\bm{w}}_{(g)}^{s}-\bm{w}_{(g)}^{*,s}\right\Vert _{2}=o_{p}(J^{-1/2}),\text{ and }\left\Vert \hat{\bm{w}}_{(g)}^{s}-\bm{w}_{(g)}^{*,s}\right\Vert _{1}=o_{p}(1).
\]
Note that under the heterogeneous scales model (\ref{eq:heterogeneous_scales}),
$\hat{\sigma}_{0}^{*}/\hat{\sigma}_{j}^{*}=\hat{\sigma}_{0}/\hat{\sigma}_{j}$,
$j\in[J]$, so that the prediction error
\[
\underbrace{\left[\hat{\sigma}_{0}\mathrm{diag}(\hat{\bm{\sigma}})^{-1}\hat{\bm{w}}_{(g)}^{\mathrm{s}}\right]'\bm{y}_{t}}_{\hat{y}_{0t}^{N}}-\underbrace{\left[\hat{\sigma}_{0}^{*}\mathrm{diag}(\hat{\bm{\sigma}}^{*})^{-1}\bm{w}_{(g)}^{*,s}\right]'\bm{y}_{t}}_{\text{oracle prediction, }y_{0t}^{*}}=\left[\hat{\sigma}_{0}^{*}\mathrm{diag}(\hat{\bm{\sigma}}^{*})^{-1}(\hat{\bm{w}}_{(g)}^{s}-\bm{w}_{(g)}^{*,s})\right]'\bm{y}_{t},
\]
just depends on $\hat{\bm{w}}_{(g)}^{s}-\bm{w}_{(g)}^{*,s}$. This
implies that the SCM-relaxation can achieve near-oracle prediction
risk under the same conditions imposed in \thmref{wg_convergence}.

\section{Additional Simulation Results}\label{sec:Additional-Simulation-Results}

Table \ref{tab:length-ratios} presents the average ratios of lengths
of nominal 90\% SCM-relaxation CIs to SCM CIs. Comparing with SCM
CIs, $L_{2}$ (and entropy) SCM-relaxation CIs are shorter in finite
sample, especially when $T_{0}$ is small. EL is worse than $L_{2}$
and entropy (and even worse than SCM in some cases), which also aligns
with our theoretical analysis.

\begin{table}[H]
\centering
\caption{Average ratios of lengths of nominal 90\% SCM-relaxation to SCM CIs}\label{tab:length-ratios}

\begin{tabular*}{0.9\linewidth}{@{\enspace\extracolsep{\fill}}cccccccc@{\enspace}}
\toprule
 &  & \multicolumn{3}{c}{Exact Group Structure} & \multicolumn{3}{c}{Approx. Group Structure} \\
{$J$} & {$T_0$} & {$L_2$} & {EL} & {Entropy} & {$L_2$} & {EL} & {Entropy} \\
\midrule
\multicolumn{8}{c}{Panel A: $K<r$} \\
50 & 25 & 0.9364 & 1.0205 & 0.9415 & 0.9214 & 1.0047 & 0.9295 \\
50 & 50 & 0.9458 & 1.0293 & 0.9482 & 0.9422 & 1.0213 & 0.9439 \\
50 & 100 & 0.9687 & 0.9906 & 0.9725 & 0.9622 & 0.9887 & 0.9697 \\
100 & 50 & 0.9359 & 1.0185 & 0.9398 & 0.9402 & 1.0171 & 0.9430 \\
100 & 100 & 0.9570 & 0.9845 & 0.9646 & 0.9616 & 0.9879 & 0.9655 \\
100 & 200 & 0.9828 & 0.9993 & 0.9818 & 0.9850 & 0.9953 & 0.9824 \\
200 & 100 & 0.9590 & 0.9827 & 0.9599 & 0.9446 & 0.9738 & 0.9506 \\
200 & 200 & 0.9687 & 0.9771 & 0.9655 & 0.9786 & 0.9854 & 0.9776 \\
200 & 400 & 0.9860 & 0.9977 & 0.9872 & 0.9848 & 1.0031 & 0.9883 \\
\midrule
\multicolumn{8}{c}{Panel B: $K=r$} \\
50 & 25 & 0.9638 & 1.0070 & 0.9706 & 0.9482 & 0.9926 & 0.9574 \\
50 & 50 & 0.9557 & 0.9987 & 0.9598 & 0.9518 & 0.9902 & 0.9574 \\
50 & 100 & 0.9648 & 0.9867 & 0.9727 & 0.9615 & 0.9815 & 0.9698 \\
100 & 50 & 0.9436 & 0.9901 & 0.9488 & 0.9474 & 0.9879 & 0.9533 \\
100 & 100 & 0.9524 & 0.9727 & 0.9600 & 0.9657 & 0.9846 & 0.9726 \\
100 & 200 & 0.9831 & 1.0022 & 0.9850 & 0.9858 & 1.0061 & 0.9855 \\
200 & 100 & 0.9568 & 0.9817 & 0.9665 & 0.9499 & 0.9720 & 0.9613 \\
200 & 200 & 0.9791 & 1.0072 & 0.9827 & 0.9780 & 0.9990 & 0.9812 \\
200 & 400 & 0.9844 & 1.0010 & 0.9866 & 0.9829 & 1.0039 & 0.9849 \\
\midrule
\multicolumn{8}{c}{Panel C: $K>r$} \\
50 & 25 & 0.9483 & 0.9598 & 0.9464 & 0.9499 & 0.9628 & 0.9472 \\
50 & 50 & 0.9464 & 0.9680 & 0.9471 & 0.9541 & 0.9677 & 0.9499 \\
50 & 100 & 0.9881 & 1.0180 & 0.9884 & 0.9782 & 0.9948 & 0.9778 \\
100 & 50 & 0.9482 & 0.9637 & 0.9440 & 0.9486 & 0.9726 & 0.9487 \\
100 & 100 & 0.9664 & 0.9893 & 0.9666 & 0.9761 & 0.9991 & 0.9789 \\
100 & 200 & 0.9794 & 1.0047 & 0.9826 & 0.9785 & 1.0042 & 0.9806 \\
200 & 100 & 0.9688 & 1.0054 & 0.9717 & 0.9613 & 1.0003 & 0.9645 \\
200 & 200 & 0.9713 & 0.9996 & 0.9740 & 0.9739 & 0.9981 & 0.9758 \\
200 & 400 & 0.9904 & 1.0040 & 0.9906 & 0.9847 & 1.0094 & 0.9851 \\
\bottomrule
\end{tabular*}

\end{table}

\end{document}